\def\a{\alpha} \def\b{\beta} \def\d{\delta} \def\D{\Delta}
\def\e{\epsilon}    \def\g{\gamma}
\def\ep{\epsilon}
   \def\Th{\Theta}  \def\l{\lambda}
 \def\m{\mu}  
\def\r{\rho}
\def\t{\tau} \def\om{\omega}  \def\Om{\Omega}
\def\cY{{\cal Y}}
\def\cC{{\cal C}}
\def\cG{{\cal G}}
\def\cX{{\cal X}}
\newcommand{\ul}[1]{\mbox{\boldmath$#1$}}
\newcommand{\brac}[1]{\left(#1\right)}
\newcommand{\bfrac}[2]{\left(\frac{#1}{#2}\right)}
\newcommand{\rai}{\rightarrow \infty}
\newcommand{\set}[1]{\left\{#1\right\}}
\def\seq{\subseteq}
\def\E{\mathbf{E}}
\def\Pr{\mbox{{\bf Pr}}}
\def\whp{{w.h.p.}}
\newcommand{\beq}[1]{\begin{equation}\label{#1}}
\newcommand{\eeq}{\end{equation}}
\newenvironment{proof}{\trivlist\item[]\emph{Proof}.}%
{\unskip\nobreak\hskip 1em plus 1fil\nobreak$\Box$
\parfillskip=0pt%
\endtrivlist}
\newcommand{\ignore}[1]{}
\newtheorem{theorem}{Theorem}
\newtheorem{lemma}{Lemma}
\newtheorem{corollary}{Corollary}
\begin{document}

\title{The power of two choices in distributed voting%
\thanks{
This work was partially supported by EPSRC grant EP/J006300/1, ``Random Walks on Computer Networks'', 
the Austrian Science Fund (FWF) under contract P25214-N23 
``Analysis of Epidemic Processes and Algorithms in Large Networks'', and
the 2012 SAMSUNG Global Research Outreach (GRO) grant ``Fast Low Cost Meth-
ods to Learn Structure of Large Networks.''
}}


\author{
Colin Cooper\thanks{Department of Informatics, King's College London, UK.
{\tt colin.cooper@kcl.ac.uk}}
\and Robert Els\"asser\thanks{Department of Computer Sciences, University of Salzburg, Austria.
{\tt elsa@cosy.sbg.ac.at}}
\and Tomasz Radzik\thanks{Department of Informatics, King's College London, UK.
{\tt tomasz.radzik@kcl.ac.uk}}
}




\maketitle

\begin{abstract}
Distributed voting is a fundamental topic in distributed computing.
In the standard model of pull voting,
in each step every vertex chooses a
neighbour uniformly at random,
and adopts the opinion of that neighbour.
The voting is  said to be completed
when all vertices  hold the same opinion.
On many graph
classes including regular graphs,
irrespective of the expansion properties,
 pull voting requires
 $\Omega(n)$ expected time steps to complete,
even if initially there are only two distinct opinions
with the minority opinion being sufficiently large.

In this paper we consider a related  process which we call two-sample voting.
In this process every vertex chooses
two random neighbors in each step. If the opinions of these neighbors coincide,
then the vertex revises its opinion according to the chosen sample. Otherwise,
it keeps its own opinion.
We consider the performance of this process
in the case where  two different opinions reside on vertices
of some (arbitrary) sets $A$ and $B$, respectively. Here, $|A|+|B|=n$ is
the number of vertices of the graph.

We show that there is a constant $K$ such that if
the initial imbalance between the two opinions is
$\nu_0 = (|A| -|B|)/n \ge K\sqrt{(1/d) + (d/n)}$, then with high 
probability two sample voting 
completes in a random $d$ regular graph
in $O(\log n)$ steps and the initial majority opinion wins.
We also show the same performance 
for any regular graph,
if $\nu_0 \ge K\lambda_2$, where $\lambda_2$
is the second largest
eigenvalue of the transition matrix.
In the graphs we consider, standard pull voting requires $\Omega(n)$ steps, and the
minority can still win with probability $|B|/n$.
Our results  hold even if an adversary is able to rearrange the opinions in each
step, and has complete knowledge of the graph structure.

\end{abstract}




\section{Introduction}

Distributed voting has  applications in various
fields including consensus and leader election in large networks
\cite{BMPS04,HassinPeleg-InfComp2001}, serialisation of read/write 
in replicated data-bases \cite{Gifford79}, and the analysis of social
behaviour in game theory \cite{DP94}. Voting algorithms are usually simple,
fault-tolerant, and easy to implement \cite{HassinPeleg-InfComp2001,Joh89}.

One straightforward form of distributed voting is {\em pull voting}.
In the beginning each vertex of a connected undirected graph $G=(V,E)$ 
has an initial opinion.
The voting process proceeds synchronously in discrete time steps  called rounds.
During each round, each vertex independently contacts a random neighbour
and adopts the opinion of that neighbour.
The completion time $T$ is the number of
rounds needed for a single opinion
to emerge. 
This time depends on the structure of the underlying graph, and
is normally measured in terms of its expectation $\E T$.
We showed in \cite{CEHR-SIAM2013} that 
with high probability
the completion time is
$O(n / (\nu (1-\lambda_2))$, where $n$ is the number of vertices, $\lambda_2$ is the
second largest eigenvalue of the transition matrix, $\nu = \sum_{v \in V}
d^2(v)/(d^2 n)$, $d(v)$ is the degree of vertex $v$ and $d$ is the average degree.

In the {\em two-party voter model}, vertices initially hold one of two opinions $A$ and $B$.
As usual, the pull voting is completed when all vertices have the same opinion.
Hassin and Peleg~\cite{HassinPeleg-InfComp2001}
and Nakata {\em et al.}~\cite{Nakata_etal_1999} considered
the discrete-time two-party voter model on connected graphs, and
discussed its application to consensus  problems in distributed systems.
Both papers focus on
analysing the probability that all vertices
will eventually adopt the opinion which is initially held by a given group of vertices.

Let $A$ and $B$ denote also the sets of vertices
with opinions $A$ and $B$, respectively; $A \cup B= V$.
Let
$d(${\em X}$)$ be the sum of the degrees of the vertices in a set $X$.
We  say that opinion $A$ wins, if all vertices eventually adopt this opinion.
The central result of  \cite{HassinPeleg-InfComp2001} and \cite{Nakata_etal_1999} is that
the probability that opinion $A$ wins is 
\begin{equation}\label{PA}
P_A= \frac{d(A)}{2m},
\end{equation}
where $m$ is the number of edges in~$G$.
Thus in the case of connected regular graphs, the probability
that $A$ wins is proportional to the original size of $A$, irrespective of the graph structure.
Apart from the probability of winning the vote, another quantity of interest is the time $T$
taken for voting to complete. In \cite{HassinPeleg-InfComp2001} it is  proven that $\E T =O(n^3 \log n)$ for general connected
$n$ vertex graphs.
For the case of random $d$-regular graphs, it is shown in~\cite{XXX} that
$\E T \sim 2n (d-1)/(d-2)$ with high probability.
%
%
It follows from the proof of this result
that, with high probability, two-party voting needs $\Th(n)$ time  to complete on
random $d$-regular graphs.

The performance of the two-party pull-voting seems unsatisfactory in  two  ways.
Firstly, it is
reasonable to require that a clear majority opinion wins
with high probability.
From \eqref{PA} it follows that, even if initially
only a single vertex $v$
holds opinion $A$, then this opinion  wins with probability $P_A=d(v)/2m$.
Secondly, the expected completion time is at least $\Om(n)$  on many classes of graphs,
including regular expanders and complete graphs.
This seems a long time to wait to resolve a dispute between two opinions. A more
reasonable waiting time would
depend on 
the graph diameter, which is $O(\log n)$ for many important
classes of 
graphs including
expanders.

To address these issues, we consider a modified version of pull voting in which
each vertex $v$ randomly queries two neighbours at each
step.
On the basis of the sample taken, vertex $v$ revises its opinion as follows.
If both neighbours have the same opinion, the calling vertex $v$  adopts this opinion.
If the two opinions differ, 
the calling vertex $v$ retains its current opinion in this round.
To distinguish this process from the conventional
pull voting, as described above, 
we use terms {\em single-sample voting}
and {\em two-sample voting}.
The aim of  the two-sample voting is to
ensure that voting finishes quickly and the initial majority opinion wins (almost always).
The two-sample voting is  intrinsically attractive, as it seems  to mirror the way people behave.
If you hear it twice it must be true.

In \cite{CEHR12} we  analysed a two sample process called min-voting.
Here, initially each vertex holds a distinct opinion. 
In each step every vertex
chooses two neighbours uniformly at random and takes the smaller opinion of the two.
For graphs with good
expansion properties we proved that min-voting completes in time $O(\log n)$,
with high probability.
%
%
Although min-voting is  fast, an
adversary with somewhat limited abilities could break the system by continuously
introducing small numbers into the network. Moreover the model is meaningless in two party voting, as the smaller 
opinion always wins.

In this paper we analyse two-sample voting for two classes of $d$-regular graphs: 
random graphs and expanders parameterized by the eigenvalue gap. 
Our results  depend only on the initial imbalance
$\nu_0 = (|A|-|B|)/n$. As an example, for  random  $d$-regular graphs 
there is an absolute constant $K$, independent of $d$, such that provided
\begin{equation}\nonumber
 \nu_0 \; \ge \; K \sqrt{\frac{d}{n} + \frac{1}{d}},
\end{equation}
with high probability two-sample voting is completed in $O(\log n)$ steps and
the winner is the opinion with the initial majority.
We discuss our results  in more detail in the next section. 
The main advantages
of our two-sample voting 
are that the completion time speeds up from $\Th(n)$ to $O(\log n)$
and with high probability
the initial majority opinion wins.

It seems interesting to enquire further how the performance of pull voting systems
depends on the range of choices available in the design.
We restrict our discussion to two-party voting.
The main issues seem to be
 the number of neighbours $k$ to contact
 at each step, and the rule used to reach a decision based on the
 opinions obtained.
In the case $k=1$, this is single-sample pull voting, as discussed above.
For $k=2$, a simple rule is to adopt the opinion if both neighbours agree
(the voting protocol analysed in this paper).
For $k \ge 3$ odd,
a comparable rule is to adopt the majority opinion.
Interestingly, the number $k$ of  neighbours
contacted at each step can substantially influence the performance of
the process in at least three ways: the completion time,
the final outcome, and
the robustness of the system against adversarial attacks.

We briefly compare the performance of such systems for two-party voting
on random $d$-regular graphs for various values of $k$.
Surprisingly, a clearly defined complexity hierarchy
emerges,
which distinguishes between $k=1,\; k=2$ and $k\ge5$ odd.

\begin{itemize}
\item $k=1$. As previously mentioned, the expected completion time 
in this case
is $\Th(n)$ with high probability. Let $A$ be the size of the initial majority opinion.
From \eqref{PA} we obtain that opinion  $A$ wins with probability $|A|/n$.
Thus if $|A|=cn$,  opinion $A$ wins with probability $c<1$, even if $A$ is a clear majority.

\item $k=2$. This is the topic of this paper.
We show that if the initial imbalance
between  the opinions  is not too small, then with high probability the time to completion is $\Th(\log n)$,
resulting in an exponential speed up over the case $k=1$,
and  the majority wins.
More details are given in the next section.

\item $k\ge 5$. It follows from the proof presented by Abdullah and Draief~\cite{AD}, that
for $k$ odd and
$d \ge k$
constant, if the initial
allocation of the opinions is chosen randomly,
the initial imbalance is sufficiently large, and the selection of $k$ neighbours is
done without replacement, then with high probability the  majority wins,
and the voting completes
in $\Th(\log \log n)$ rounds.
\end{itemize}

In the  particular case of the complete graph $K_n$, the performance of
two-party voting
is well studied. 
Becchetti et al.~\cite{BCNP+13} consider the case $k=3$ in these graphs. 
The main focus of the work is on the completion 
time as a function of the number
of opinions. The result for two opinions is $O(\log n)$ provided the 
difference is not too small.
Cruise and Ganesh \cite{CG} consider a more general but asynchronous model.
Their work includes the case $k=2$,
and gives a $\Th(\log n)$ result.
A variant of
two-sample voting has been considered by Doerr et al.~\cite{DGMSS},
where the number of opinions can take any value from
$\{1, \dots , n\}$.
In their model, whenever a node $v$ contacts two neighbors $u$
and $w$, it adopts the median of the opinions of $u$, $v$ and $w$. Once the system is left
with two opinions, this protocol is equivalent to the two-sample voting considered in this paper.
If initially there are $s$ opinions, they showed
an $O(\log s \log \log n + \log n)$ convergence time to a
so-called ``stable consensus'' on complete graphs.
\ignore{
That is, if an adversary is able to change the opinion of $f$ vertices in each round,
where $f$ is at most $\sqrt{n}$, then within the given time bound all but $O(f)$ vertices
will agree on a common value, \whp\
Moreover, for $n^{\Th(1)}$ further rounds there
will continue to be $n-O(f)$ vertices with this common value \whp
}

\ignore{
In a recent paper, Becchetti et al.~\cite{BCNP+13} consider the case $k=3$ in
complete graphs.
That is, each vertex chooses three neighbors, and adopts the majority among these
three (breaking ties arbitrarily). They show that if the number of opinions is $s$ at the
beginning, then the voting process completes in time $O(\min\{s,(n/\log n)^{1/3}\} \log n)$ provided
that one of the votes dominates any of the other votes by a sufficiently large additive value,
which is ${\tilde{O}}(n^{2/3})$ if $s$ is not too large. Furthermore, they provide lower bounds
on the convergence time if $s \leq (n/\log n)^{1/4}$, as well as for the case
when $k \geq 3$ neighbors are consulted, where $s/k = O(n^{1/4-\epsilon})$ with
$\epsilon$ arbitrarily small.
}

An alternative approach to $k$-sample voting
is to use a majority dynamic.
In this case
each vertex adopts the most popular
opinion among {\em all\/} its neighbours.
In \cite{AD,MNT} the authors answered several important questions w.r.t.~majority voting
in general graphs, expanders, and random graphs, which we now describe.

Majority dynamics were studied by Mossel et al. in \cite{MNT}
who gave bounds for different
scenarios. They consider a model where initially
an opinion from $\{1, \dots , k\}$ is assigned to the vertices independently
according to a probability distribution. Then, the following deterministic
process is considered, which is fully defined by the initial distribution.
For $T$ time steps, each vertex
adopts the opinion held by the majority of its neighbors. After step $T$ a fair and monotone
election function is applied to the opinions of the vertices, resulting in a winning opinion.
Mossel et al.~\cite{MNT} showed that (under certain assumptions)
for the two-party model this process results in the correct (initial majority) answer.

\ignore{That is, if at the beginning the distribution is biased toward one of the opinions, then this
opinion wins.
The authors also analysed the question, how large is the population which
holds the same (correct) opinion after $T$ steps. Furthermore, they showed for certain expander graphs
and initial bias toward one opinion that all vertices will eventually adopt the same opinion with
high probability.
However, they provided no explicit time bound on the convergence time.
}

Recently,  majority voting was considered by Abdullah and Draief \cite{AD} on fixed degree sequence
random graphs. They
studied this process in the two-party case, where each vertex adopts the most popular opinion
among the neighbors in each step.
The initial opinions are distributed randomly according to a biassed
 distribution. They showed that if the initial bias toward one opinion is large enough, then with high probability
this opinion is adopted by all vertices within $O(\log \log n)$ time steps, and established
a similar lower bound.

\section{Our results for two-sample voting}\label{Us3}

\ignore{
 There are, however, four main different
aspects compared to our paper. Firstly, in our case, an adversary, which has full knowledge
of the network, is allowed
to place the opinions on the vertices at the beginning. Moreover,
the adversary is allowed to redistribute the opinions after each step. The only restriction
is to keep the number of different opinions unchanged.
Secondly, at each step every vertex only checks the
opinion of $k=2$ randomly chosen neighbours, which reduces communications complexity.
Thirdly, we show that this small change in the
ability of the vertices compared to standard ($k=1$) pull voting
is enough for an exponential improvement in the completion time of the
voting process, even in arbitrary graphs with second largest eigenvalue $\lambda_G < 3/5$
For the precise definition of $\lambda_G$ see below.
Fourthly, as is the case with \cite{DGMSS}, the system can maintain the majority opinion for
$n^{\Th(1)}$ steps even in the presence of $o(n)$ states persistently corrupted by an adversary.
}

Assume that initially each vertex holds
one of two opinions.
For convenience, $A$ and $B$ will denote the opinions,
the two sets of vertices who have these opinions, and the sizes of these sets,
depending on the context.
If opinion $A$ is  the majority, then the imbalance $\nu$ (also referred to as the relative difference
between the votes,
or the advantage of the $A$ vote)
is given by $A - B = \nu n$.

As mentioned earlier, for single-sample pull voting on  random $d$-regular graphs and expanders, 
the expected time  to complete is $\Th(n)$ with high probability\footnote{With high probability or \whp~means with probability tending to 1 as $n$ increases.}.
%
%
Moreover, for any connected regular graph,
the probability that the initial majority $A$ wins the vote is only
$A/n$.
In this paper we show that if 
there is
a sufficient initial imbalance between the two opinions, then with high probability
two-sample voting on $d$-regular random graphs
and expanders
is completed in a time which is logarithmic in the graph size, and
 the initial majority opinion wins.

Our results depend on the initial imbalance $\nu_0$
and, in the case of random regular graphs on 
the degree $d$ of the graph, while in the case of expanders 
on the second largest eigenvalue of the transition matrix.
A random $d$-regular graph is a graph sampled uniformly at random 
from the set of all $d$-regular graphs.
The results hold with high probability, 
which
depends on the selection of a graph (in the case of random graphs)
as well as on the voting process, which is itself probabilistic.

\begin{theorem}\label{Th1}
Let $G$ be a random $n$-vertex $d$-regular graph
with opinions $A$ and $B$ and  with initial imbalance $\nu_0 = |A-B|/n$.
There is an absolute constant $K$ (independent of $d$) such that, provided 
\begin{equation}\label{nu-nu-bound}
 \nu_0 \; \ge \; K \sqrt{\frac{d}{n} + \frac{1}{d}},
\end{equation}
with high probability two-sample voting is completed in $O(\log n)$ steps, and
the winner is the opinion with the initial majority.
\end{theorem}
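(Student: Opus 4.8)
\bigskip
\noindent\textbf{Proof strategy.}
Since the adversary may relabel opinions after every step but always preserves the two class sizes, the state we need to follow is the single number $a_t = |A_t|/n$, equivalently the imbalance $\nu_t = 2a_t-1$, and all estimates must be uniform over vertex subsets of a given cardinality. The only property of $G$ we use is that, \whp, its transition matrix $M = A_G/d$ has second largest eigenvalue in absolute value $\lambda_2 = O(1/\sqrt d)$ (Friedman's theorem); this gives, for \emph{every} $S\seq V$ with $|S|=sn$ and $p_v := |N(v)\cap S|/d$, the mixing bounds $\sum_v(p_v-s)^2 = \|M\mathbf 1_S - s\mathbf 1\|_2^2 \le \lambda_2^2\,ns(1-s)$ and $|2e(S) - d|S|^2/n| \le \lambda_2 d|S|$. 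Throughout we may assume $d$ is at least a large absolute constant, since otherwise \eqref{nu-nu-bound} is unsatisfiable once $K$ is large.

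\smallskip
\noindent\emph{One-step drift.} Fix $A_t$ of size $a_tn$. A vertex $v$ samples two neighbours independently; if $v\in A_t$ it loses opinion $A$ only when both samples lie in $B_t$, and if $v\in B_t$ it gains $A$ only when both lie in $A_t$, so $\E[\,|A_{t+1}|\mid A_t\,] = \sum_{v\in A_t}(2p_v-p_v^2) + \sum_{v\in B_t}p_v^2$ with $p_v = |N(v)\cap A_t|/d$. Writing $p_v = a_t+e_v$, $\sum_v e_v = 0$, the main part collapses to $n(3a_t^2-2a_t^3)$ and the remainder is a combination of $\sum_{v\in A_t}e_v = \tfrac{2e(A_t)}{d}-a_t^2n$ and $\sum_v e_v^2$, both $O(\lambda_2 n)$ by the mixing bounds. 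Hence $\E[\,\nu_{t+1}\mid A_t\,] = \tfrac12\nu_t(3-\nu_t^2) \pm O(\lambda_2)$; the map $h(a)=3a^2-2a^3$ has the \emph{unstable} fixed point $a=\tfrac12$ ($h'(\tfrac12)=\tfrac32$) and stable fixed points $0,1$, so any imbalance is driven towards consensus, the initial majority winning. Finally $|A_{t+1}|$ is a sum of independent indicators given $A_t$, so by Hoeffding it deviates from its conditional mean by more than $\sqrt{n\log n}$ with probability $O(n^{-2})$; a union bound over the $O(\log n)$ relevant steps then gives, deterministically on a $(1-o(1))$-event,
\[
\nu_{t+1} \;\ge\; \tfrac12\,\nu_t(3-\nu_t^2) - \eta, \qquad \eta = O\!\brac{\lambda_2 + \sqrt{\tfrac{\log n}{n}}} = O\!\brac{\sqrt{\tfrac1d+\tfrac dn}}.
\]

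\smallskip
\noindent\emph{Growth phase, then consensus.} Choosing $K$ large against the implied constants, \eqref{nu-nu-bound} gives $\nu_0 \ge 8\eta$. While $\nu_t\le\tfrac12$ we have $\tfrac12\nu_t(3-\nu_t^2)\ge\tfrac{11}{8}\nu_t$, so an induction shows $\nu_t\ge 8\eta$ persists and $\nu_{t+1}\ge\tfrac54\nu_t$; hence $\nu_t$ reaches $\tfrac12$ within $O(\log(1/\nu_0)) = O(\log n)$ steps (in particular $\nu_t$ stays positive, so $A$ stays the majority). Once $|B_t|\le n/12$ we track $b_t = |B_t|/n$: the symmetric computation together with $e(B_t)\le d|B_t|^2/(2n)+\tfrac12\lambda_2 d|B_t|$ and $\sum_v(p^B_v)^2\le nb_t^2+\lambda_2^2 nb_t$ yields the clean \emph{multiplicative} estimate
\[
\E[\,|B_{t+1}|\mid B_t\,] \;\le\; \frac{3|B_t|^2}{n} + 3\lambda_2|B_t| \;\le\; \tfrac12\,|B_t|,
\]
using $|B_t|\le n/12$ and $\lambda_2\le\tfrac1{12}$. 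Taking total expectations halves $\E|B_t|$ at every step, so $\E|B_t| < 1/n$ after $O(\log n)$ further steps, and Markov's inequality gives $|B_t| = 0$ \whp. This bound holds for every set of the given size, so it is automatically adversary-proof; combining the phases, voting completes in $O(\log n)$ steps with $A$ the winner.

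\smallskip
\noindent\emph{Main obstacle.} The delicate part is the growth phase: at each of $\Theta(\log n)$ steps the signal $\sim\nu_t$ must beat the accumulated error $\eta$, and this is exactly what forces the threshold \eqref{nu-nu-bound}. The $1/\sqrt d$ term is the price of the eigenvalue-induced drift error $O(\lambda_2)$, which cannot be reduced since the adversary picks the worst configuration of the given size; the $d/n$ term comes from the per-step concentration scale. Making every estimate uniform over subsets (so that only $|A_t|$, never the set itself, has to be tracked) is what makes the argument robust to an omniscient, re-arranging adversary, and the clean multiplicative bound in the consensus phase is what lets us drive $|B|$ all the way to $0$ rather than merely to $o(n)$.
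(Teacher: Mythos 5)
Your strategy — reduce everything to a single spectral input $\l_G=O(1/\sqrt d)$, use the Expander Mixing Lemma uniformly over all sets of a given size to get a one-step drift $\nu_{t+1}\ge\tfrac12\nu_t(3-\nu_t^2)-\eta$, and finish with a multiplicative bound on $\E|B_t|$ plus Markov — is genuinely different from the paper's and in one respect cleaner: your endgame replaces the paper's Phase III (an $O(\om\log\om)$-step repeated-trials argument for $1\le B\le\om$) with the observation that a conditional bound $\E[|B_{t+1}|\mid B_t]\le\tfrac12|B_t|$, valid for \emph{every} placement of a set of size $|B_t|$, iterates through the tower property regardless of the adversary and kills $B$ via Markov's inequality. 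That computation checks out ($3|B|^2/n+3\l|B|\le|B|/2$ for $|B|\le n/12$, $\l\le 1/12$), and the drift algebra ($h(a)=3a^2-2a^3$, $\E\nu_{t+1}=\tfrac12\nu_t(3-\nu_t^2)\pm O(\l)$) is correct. There is a small unfilled stretch between $\nu_t=1/2$ and $|B_t|\le n/12$, where you need the drift $\tfrac12\nu(1-\nu)(1+\nu)\ge 1/16$ to carry $\nu$ up to $5/6$ in $O(1)$ steps (the paper does this via the $\d_t=1-\nu_t$ recursion); this is routine but should be stated.

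The genuine gap is the spectral input itself. Friedman's theorem gives $\l_G\le(2\sqrt{d-1}+\e)/d$ \whp\ only for \emph{constant} $d$, whereas Theorem~\ref{Th1} must cover $d$ up to $n/K^2$; for $d$ growing with $n$ the bound $\l_G=O(1/\sqrt d)$ is not supplied by Friedman (and over the full range of $d$ it is a substantially harder theorem than anything cited here). This is precisely the difficulty the paper is engineered to avoid: instead of the eigenvalue, it proves the edge-discrepancy estimate of Lemma~\ref{lem-nuBC} directly by an Azuma/martingale argument in the configuration model, and the sparsity estimate of Lemma~\ref{span} for Phase II. That route is also where the $\sqrt{d/n}$ term in \eqref{nu-nu-bound} actually comes from — it is the cost of the $e^{20d^2}$ simple-graph conditioning and the union bound over set pairs, not the per-step Hoeffding fluctuation $\sqrt{\log n/n}$ as your closing remark suggests (indeed, if your spectral input held for all $d$, your argument would prove the stronger threshold $\nu_0\ge K/\sqrt d$ throughout, which should have been a warning sign). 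To repair your proof without new spectral theorems you would need to substitute a direct, set-uniform discrepancy bound of the type the paper proves, at which point you are essentially back to the paper's Phase I/II machinery with your nicer endgame appended.
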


\begin{corollary}\label{Corr1} (Sparse random graphs.)
Let $G$ be a random $d$-regular graph with $d \le \sqrt{n}$, and
let opinions $A$ and $B$ be placed on the vertices of $G$.
There is a constant $K$ such that, provided
\begin{equation}\label{nu-nu-bound-sparse}
 \nu_0 \; \ge \; \frac{K}{\sqrt{d}},
\end{equation}
with high probability two-sample voting is completed in $O(\log n)$ steps and
the winner is the opinion with the initial majority.
\end{corollary}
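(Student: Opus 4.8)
The plan is to obtain Corollary~\ref{Corr1} as an immediate consequence of Theorem~\ref{Th1}, the point being that the restriction $d \le \sqrt{n}$ makes the summand $d/n$ in the bound \eqref{nu-nu-bound} no larger than the summand $1/d$. Concretely, I would first note that $d \le \sqrt{n}$ is equivalent to $d^2 \le n$, and hence to $d/n \le 1/d$; therefore
\[
 \sqrt{\frac{d}{n} + \frac{1}{d}} \;\le\; \sqrt{\frac{2}{d}} \;=\; \frac{\sqrt{2}}{\sqrt{d}}.
\]
Then I would set the constant $K$ appearing in \eqref{nu-nu-bound-sparse} to be $\sqrt{2}$ times the absolute constant (call it $K_0$) supplied by Theorem~\ref{Th1}. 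With this choice, the hypothesis $\nu_0 \ge K/\sqrt{d}$ of the corollary gives $\nu_0 \ge \sqrt{2}\,K_0/\sqrt{d} \ge K_0\sqrt{d/n + 1/d}$, which is precisely the hypothesis \eqref{nu-nu-bound} of Theorem~\ref{Th1}. Applying the theorem then yields, with high probability, completion of two-sample voting in $O(\log n)$ steps with the initial majority winning --- exactly the conclusion of the corollary.

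There is essentially no obstacle here: the only ingredient beyond Theorem~\ref{Th1} is the elementary inequality displayed above, together with the bookkeeping of absorbing the factor $\sqrt{2}$ into the constant. It is worth remarking that the regime $d \le \sqrt{n}$ is exactly the range in which $1/\sqrt{d}$ is the larger of the two terms $\sqrt{d/n}$ and $\sqrt{1/d}$, so the simplified threshold $K/\sqrt{d}$ agrees, up to a constant factor, with the threshold of Theorem~\ref{Th1} on this range, and no generality is lost. (For $d$ of constant order the statement is of course only meaningful once $d$ exceeds $K^2$, so that the required imbalance $K/\sqrt{d}$ is below $1$; this is implicit in the phrasing of the corollary.)
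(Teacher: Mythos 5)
Your proposal is correct and matches the paper's (implicit) derivation: the corollary is obtained directly from Theorem~\ref{Th1} by observing that $d \le \sqrt{n}$ gives $d/n \le 1/d$, hence $\sqrt{d/n + 1/d} \le \sqrt{2/d}$, and absorbing the factor $\sqrt{2}$ into the constant. Nothing further is needed.
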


We give a similar result for expanders, that is, for a $d$-regular graph $G$ with a small
second eigenvalue $\lambda_G = \max \{\lambda_2, |\lambda_n|\}$, where $\lambda_1 \geq \lambda_2
\geq \dots \geq \lambda_n$ are the eigenvalues of the transition matrix $P = (1/d)A$ of a random walk on
$G$ and $A$ is the adjacency matrix of $G$.

\begin{theorem}\label{Th1-expanders}
Let $G$ be an $n$-vertex $d$-regular graph
and let opinions $A$ and $B$ be placed  on the vertices of $G$.
There is an absolute constant $K$ (independent of $d$ and  $\l_G$) such that, provided
\[ \nu_0 \; \ge \;  K \l_G,
\]
 with high probability two-sample voting is completed in $O(\log n)$ steps and
the winner is the opinion with the initial majority.
\end{theorem}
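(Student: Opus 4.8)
The plan is to track the evolution of the size of the majority set $A$ over the rounds of two-sample voting, showing that the imbalance $\nu_t$ grows geometrically until it reaches a constant, then reaches $1-o(1)$, and finally drops to $0$ (full consensus) within $O(\log n)$ further steps. Fix a round $t$ with current majority set $A_t$ of size $a_t = |A_t|$ and $\nu_t = (2a_t - n)/n$. For a vertex $v$, the probability it holds opinion $A$ after the step is (if it currently holds $B$) the probability both sampled neighbours lie in $A_t$, and (if it currently holds $A$) one minus the probability both sampled neighbours lie in $B_t$. If $v$ had exactly a fraction $p_v = d_{A_t}(v)/d$ of its neighbours in $A_t$, these probabilities are $p_v^2$ and $1-(1-p_v)^2 = 2p_v - p_v^2$ respectively (sampling with replacement). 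Writing $a_{t+1} = \sum_v \Pr(v \in A_{t+1})$ in expectation, one gets $\E[a_{t+1} \mid A_t] = \sum_{v \in B_t} p_v^2 + \sum_{v\in A_t}(2p_v - p_v^2) = \sum_{v} p_v^2 \cdot (\pm) + \dots$; the clean way is $\E[a_{t+1}] = \sum_v \big( 2p_v \ind{v\in A_t} + p_v^2(1 - 2\ind{v \in A_t})\big)$, and using $\sum_v p_v = a_t$ (double counting edges into $A_t$, by $d$-regularity) and $\sum_{v\in A_t} p_v = \dots$, the leading term is a multiple of $a_t$ plus a quadratic correction controlled by how close the $p_v$ are to their "ideal" values.

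The key point is the \emph{expansion/regularity} of the graph: if $G$ were the complete graph, every vertex would have exactly $p_v = a_t/n$, giving $\E[\nu_{t+1}] = f(\nu_t)$ for an explicit function $f$ with $f(\nu) > (1+c)\nu$ for $\nu$ in a suitable range — this is where the quadratic boost comes from and why the majority is amplified. For a $d$-regular graph with second eigenvalue $\lambda_G$, the expander mixing lemma gives $\big| d_{A_t}(v) - (d/n)a_t \big|$ bounded \emph{on average} (in $\ell_2$) by $\lambda_G \cdot d$: precisely, $\sum_v \big(p_v - a_t/n\big)^2 \le \lambda_G^2 \cdot (\text{something} \le n)$. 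Plugging this into the quadratic term shows $\E[\nu_{t+1} \mid A_t] \ge f(\nu_t) - O(\lambda_G^2)$, so as long as $\nu_t \ge K\lambda_G$ with $K$ large, the $-O(\lambda_G^2)$ error is dominated by the gain, and the expected imbalance still grows by a constant factor. For random $d$-regular graphs one uses the known bound $\lambda_G = O(1/\sqrt d)$ \whp\ together with a direct estimate of the degree discrepancy (handling the $d/n$ term, which comes from the variance in degrees into $A_t$ when $d$ is large relative to $n$) to recover the bound $\nu_0 \ge K\sqrt{d/n + 1/d}$ in Theorem~\ref{Th1}; this is why the two theorems have the shapes they do and Theorem~\ref{Th1-expanders} is really the cleaner statement to prove first.

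With the one-step expected drift in hand, the next step is concentration: I would show that $a_{t+1}$ is concentrated around its conditional expectation with deviations $O(\sqrt{n \log n})$ (or better), using a bounded-differences / Azuma-type argument on the independent neighbour choices of the $n$ vertices (each vertex's choice affects $a_{t+1}$ by at most $1$), or a Chernoff bound since conditional on $A_t$ the indicators $\ind{v\in A_{t+1}}$ are independent. This gives, \whp, $\nu_{t+1} \ge (1+c)\nu_t - O(\lambda_G^2) - O(\sqrt{(\log n)/n})$. As long as $\nu_t$ is at least a large constant times $\max\{\lambda_G, \sqrt{(\log n)/n}\}$, the multiplicative growth dominates, so after $O(\log(1/\nu_0)) = O(\log n)$ rounds the imbalance reaches a constant; a handful more rounds push it past $1 - \epsilon$; and once $|B_t| = o(n)$ one shows $|B_{t+1}| \le |B_t|/2$ \whp\ (each $B$-vertex has $\ge (1-\epsilon)d$ neighbours in $A_t$, so fails to flip with probability $\le (\epsilon + \lambda_G\text{-error})^2$, small), driving $B$ to $\emptyset$ in another $O(\log n)$ rounds. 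Since the analysis only ever uses the \emph{sizes} $|A_t|, |B_t|$ and the uniform expansion bound — which holds for every subset — the argument is robust to an adversary who rearranges the opinions each round, as claimed.

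The main obstacle I expect is making the one-step drift inequality quantitatively tight enough near the threshold: one must show that the expansion error term is genuinely of order $\lambda_G^2$ (not $\lambda_G$) so that the "signal" $c\nu_t \ge cK\lambda_G$ beats the "noise" — this requires carefully using the $\ell_2$ form of the mixing lemma rather than an $\ell_\infty$ bound on degree discrepancies, and correctly bookkeeping the quadratic terms $\sum_v p_v^2$ split over $A_t$ and $B_t$. A secondary technical nuisance is the sampling-with-replacement vs. without-replacement distinction and the fact that the drift function $f$ behaves differently in the regimes $\nu$ small, $\nu$ moderate, and $\nu$ close to $1$, so the argument naturally breaks into phases with slightly different estimates in each; and for Theorem~\ref{Th1} one additionally needs the structural facts about random $d$-regular graphs (eigenvalue bounds, and for large $d$ the concentration of co-degrees) which must be cited or proved separately.
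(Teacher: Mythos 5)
Your first-phase drift calculation is a genuinely different route from the paper's and it does work. The paper never uses the $\ell_2$ (variance) form of the mixing lemma: it lower-bounds $\E\D_{BA}$ by Cauchy--Schwarz from the edge count $E(A,B)$ and upper-bounds $\E\D_{AB}$ by a dyadic decomposition of $A$ into level sets $C_i$ according to $d_v^B$, bounding $|C_i|$ by applying the set-pair mixing inequality to each pair $(C_i,B)$. Your direct bound $\sum_v (d_v^A - dA/n)^2 \le \l_G^2 d^2 AB/n$ replaces that whole decomposition and is cleaner; one small correction is that the error in $\E\nu_{t+1}$ is not purely $O(\l_G^2)$ --- the first-moment discrepancy $|E(A,B)-dAB/n|\le \l d\sqrt{AB}$ contributes a term of order $\nu_t\l_G$ --- but that term is proportional to $\nu_t$ and is harmless once $\l_G \le 1/K$, which is forced anyway for the hypothesis $\nu_0\ge K\l_G$ to be satisfiable. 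The reason the paper takes the more laborious set-pair route is that the same lemma is then reused for random regular graphs of growing degree by verifying the set-pair property directly in the configuration model, without any eigenvalue estimate; your route would need the (nontrivial, and not cited in the paper) spectral bound for such graphs. For Theorem~\ref{Th1-expanders} alone, your approach is fine for Phase I.

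The genuine gaps are in your endgame. First, the claim that once $|B_t|=o(n)$ ``each $B$-vertex has $\ge(1-\epsilon)d$ neighbours in $A_t$'' is false in the adversarial model: the adversary can place $B$ on a set with many internal edges (e.g.\ a ball), so that individual $B$-vertices have \emph{all} neighbours in $B$; expansion only controls $|E(B)|$ \emph{in aggregate}. The correct statement is the one in the paper's Lemma~\ref{lemPhase2}: small supersets of $B$ span at most $\a d S$ edges, and one must then bound the contribution of the exceptional $A$-vertices with many $B$-neighbours (the set $C=\{v\in A: d_v^B>\a d\}$), since these contribute quadratically to $\E\D_{AB}$. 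Your $\ell_2$ machinery can in fact be adapted to recover the aggregate bounds $\E\D_{BA}\ge(1-O(\l))B$ and $\E\D_{AB}\le B(B/n+\l^2)$, but as written the per-vertex argument does not survive the adversary. Second, and more seriously, ``$|B_{t+1}|\le|B_t|/2$ \whp'' cannot hold once $|B_t|$ falls below, say, $\log n$: the failure probability of a Chernoff bound on a set of size $b$ is only $e^{-\Theta(b)}$, so no union bound over $O(\log n)$ rounds gives a \whp\ conclusion when $b=O(1)$. The paper needs a separate Phase III (Lemma~\ref{d-constPhaseIII}): choose a threshold $\om=\log n/\log\log n$, show each epoch of $O(\log\om)$ steps clears $B$ with \emph{constant} probability, show $B$ does not climb back above $\om$, and union-bound over $\om$ epochs, for a total of $O(\om\log\om)=O(\log n)$ steps. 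Your proposal is silent on this last stage, and it is exactly where the process degenerates into the ``biassed random walk'' behaviour the paper warns about.
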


Observe that for the above results to be non-trivial,  we should consider
$K^2 \le d \le n/K^2$ in Theorem~\ref{Th1},
$K^2 \le d \le \sqrt{n}$ in Corollary~\ref{Corr1},
and $\l_G \le 1/K$ in Theorem~\ref{Th1-expanders}.
According to the results above, in order to guarantee that two-sample voting (starting with small imbalance $\nu_0$) 
completes within $O(\log n)$ steps and the initial majority opinion wins,
it suffices to take a random $d$-regular graph with appropriately large degree $d$, or
an expander graph appropriately small $\l_G$.
We also show that the initial majority wins in $O(\log n)$ steps  
in random regular graphs with small degree as well as
in expanders with $\l_G$ not too small,
if the initial minority is a small constant fraction of the number of vertices.

\begin{theorem}\label{Thm-randgraph-small-d}
Let $ d > 10$ and let $G$ be a random $d$-regular $n$-vertex graph
with votes $A$ and $B$. 
There is a constant $c >0$ (independent of $d$) such that, provided
the initial size of the minority vote $B$ is at most $cn$,
with high probability two-sample voting is completed in $O(\log n)$ steps and
the winner is the initial majority opinion $A$.
\end{theorem}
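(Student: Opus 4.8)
The plan is to track the size of the minority vote $B_t$ across rounds and show it contracts geometrically until it vanishes, using a one-step "drift" estimate analogous to the one driving Theorem~\ref{Th1}. Fix a vertex $v$ in round $t$, and let $\beta$ denote the current fraction of $B$-vertices among the neighbours of $v$; in a random $d$-regular graph, with high probability every vertex sees a neighbourhood fraction close to the global fraction $b_t = B_t/n$ (up to the usual $O(\sqrt{(\log n)/d})$ concentration error), so we may treat $\beta \approx b_t$ for all $v$ simultaneously. A vertex currently holding $A$ switches to $B$ only if both sampled neighbours are in $B$, which happens with probability $\approx \beta^2 \le b_t^2$; a vertex currently holding $B$ switches to $A$ unless both its samples are in $B$, i.e. with probability $\approx 1-\beta^2 \ge 1 - b_t^2$. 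Hence in expectation
\[
\E[B_{t+1} \mid B_t] \;\lesssim\; A_t \cdot b_t^2 + B_t \cdot b_t^2 \;=\; n\, b_t^2 \;=\; b_t \cdot B_t .
\]
So as long as $b_t$ stays below a fixed constant $c$ (with some room, say $c \le 1/3$), one round shrinks $B$ by a constant factor in expectation.

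The second step is concentration: $B_{t+1}$ is a sum of $n$ independent indicator variables (the vertices flip independently given the configuration and the graph), so a Chernoff/Azuma bound gives $B_{t+1} \le 2 b_t B_t$ with probability $1 - e^{-\Omega(B_t)}$, as long as $B_t$ is not too small, say $B_t \ge C\log n$. Iterating, $b_t$ decreases geometrically, and after $O(\log n)$ rounds we reach $B_t \le C \log n$. For the "endgame" below this threshold the additive concentration error of the neighbourhood fractions, $O(\sqrt{(\log n)/d})$, becomes comparable to $b_t$ itself, so the clean multiplicative drift breaks down; here I would instead argue directly that from a configuration with $B_t = O(\log n)$, with constant probability \emph{all} $B$-vertices simultaneously flip to $A$ in a single round (each does so independently with probability $\ge 1 - O(b_t) - O(\sqrt{(\log n)/d}) = 1-o(1)$ once $d>10$ is large enough relative to the implied constants, or more carefully with probability bounded below by a positive constant per vertex), so within $O(\log n)$ further rounds voting completes with high probability. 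Since $d > 10$ is fixed, one must be slightly careful that the neighbourhood-fraction error term $\sqrt{(\log n)/d}$ is genuinely small — this is where the hypothesis "$B$ initially at most $cn$ for a suitably small absolute constant $c$" and "$d > 10$" are jointly used, and it is the most delicate point: the constant $c$ must be chosen small enough that the $b_t^2$ contraction dominates the fluctuations uniformly over the logarithmically many rounds.

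The main obstacle I anticipate is exactly this interplay between the small fixed degree and the concentration of local neighbourhood densities. For large $d$ (as in Theorem~\ref{Th1}) the local density $\beta$ is tightly concentrated around $b_t$ and the drift argument is essentially deterministic; with $d>10$ fixed, individual vertices can see atypically many $B$-neighbours, so one cannot claim $\beta \le b_t + o(1)$ pointwise. The fix is to not demand uniform pointwise control but only control of the \emph{sum} $\sum_v \Pr[v \text{ holds } B \text{ next round}]$, which only needs the average local density — and that is controlled by $b_t$ exactly, since $\sum_v (\text{number of }B\text{-neighbours of }v) = d B_t$. Then $\E[B_{t+1}\mid B_t] \le \sum_v \beta_v^2 \le$ (by convexity, and using that the degree-$d$ graph has bounded local density fluctuation) a quantity of order $b_t B_t \cdot (1 + O(1/b_t d) \cdot(\text{density variance}))$, and provided $c$ is small this is still $\le (1/2) B_t$. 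Combining the geometric-contraction phase, the Chernoff concentration at each step, a union bound over $O(\log n)$ rounds, and the constant-probability endgame finish, together with the fact (from the proof structure of Theorem~\ref{Th1}) that once $B=0$ the process has terminated with $A$ the winner, completes the argument. Throughout, the "adversary rearranges opinions each round" clause is handled as in the earlier theorems: the drift bound $\E[B_{t+1}\mid \text{configuration}] \le b_t B_t$ depends only on the \emph{number} $B_t$, not on which vertices hold $B$, so the adversary cannot help the minority.
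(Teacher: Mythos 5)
Your proposal has a genuine gap, and it is exactly at the point you flag as ``the most delicate''. The drift estimate $\E[B_{t+1}\mid B_t]\lesssim b_t B_t$ is false in the setting of this theorem. The degree is a constant ($d>10$), so the ``neighbourhood fraction concentrates around $b_t$ up to $O(\sqrt{(\log n)/d})$'' heuristic is vacuous (that error term is not even $o(1)$), and the theorem is proved in the adversarial model where the opinions may be arbitrarily redistributed before every step: the adversary can cluster the $B$-vertices so that many of them have several $B$-neighbours and many $A$-vertices on the boundary have a constant fraction of $B$-neighbours. Your proposed fix --- controlling only the sum $\sum_v \beta_v^2$ via the identity $\sum_v d_v^B = dB_t$ --- does not work, because convexity gives a \emph{lower} bound on $\sum_v (d_v^B/d)^2$ in terms of the average, not an upper bound; the worst case for the upper bound is when the edge mass into $B$ is concentrated on few $A$-vertices, in which case $\sum_{v\in A}(d_v^B/d)^2$ can be of order $B_t$ rather than $b_tB_t$. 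So the quantity you need really does depend on \emph{which} vertices hold $B$, not just on how many, and your claim to the contrary is the heart of the gap. The paper's proof supplies the missing ingredient: a structural property of the graph (Lemma~\ref{span} for random $d$-regular graphs with $d\ge 600$, and the conductance/eigenvalue bound of Lemma~\ref{phaseII-expander} for $11\le d<600$) stating that every small vertex set $S$ spans at most $\alpha d|S|$ edges. Applying this to $B$ and to the superset $B\cup C$ where $C=\{v\in A: d_v^B>\alpha d\}$, Lemma~\ref{lemPhase2} extracts a worst-case one-step contraction $\E\D\ge(1-2\alpha)(1-3\alpha)B$, i.e.\ only a factor $1-\gamma$ with $\gamma$ a small constant ($\gamma=0.02$ at $\alpha=3/10$) --- far weaker than your claimed $b_tB_t$, but valid against any placement of the opinions.

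Your endgame is also not sound as stated: if the adversary places the remaining $\Theta(\log n)$ $B$-vertices in adjacent pairs, each such vertex keeps opinion $B$ with probability at least $1/d^2$ in a given round, so the probability that \emph{all} of them flip simultaneously is $(1-\Omega(1/d^2))^{\Theta(\log n)}=n^{-\Omega(1/d^2)}=o(1)$, not a constant. The paper instead sets the Phase~II/III threshold at a slowly growing $\omega$ (e.g.\ $\log n/\log\log n$), reuses the same constant-factor contraction of Lemma~\ref{lemPhase2} below the threshold, and absorbs the now non-negligible per-step failure probability $e^{-\tilde\gamma B}$ by partitioning $O(\omega\log\omega)=O(\log n)$ steps into $\omega$ restart phases, each succeeding with constant probability (Lemma~\ref{d-constPhaseIII}). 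Your Chernoff step (conditional independence of the vertices' choices given the configuration) and the observation that the final bound should depend only on the graph's structural properties are fine, but the argument needs the small-set edge-sparsity lemma to get off the ground.
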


\begin{theorem}\label{phaseIIandIII-expander}
Let $G$ be a $d$-regular $n$-vertex graph with
$\l = \l_G = 3/5 - \epsilon$.
Then, provided
the initial size of the minority vote $B$ is at most $(\epsilon/5)n$,
with high probability two-sample voting is completed in $O(\log n)$ steps and
the winner is the initial majority opinion~$A$.
\end{theorem}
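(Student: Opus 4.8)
The plan is to track the evolution of the minority vote size and show it contracts geometrically until it vanishes, using the expander mixing properties encoded in $\l_G$. Let $B_t$ denote the set of vertices holding opinion $B$ after step $t$, with $b_t = |B_t|/n$ its relative size, so $b_0 \le \epsilon/5$. The key quantity is, for a vertex $v$, the fraction $p_v$ of its neighbours that currently hold opinion $B$. A vertex adopts $B$ in the next step only if \emph{both} sampled neighbours hold $B$, which happens with probability $p_v^2$; it retains $B$ with probability $1 - (1-p_v)^2 = 2p_v - p_v^2$. Hence the expected size of $B_{t+1}$ is $\E[|B_{t+1}| \mid B_t] = \sum_{v \notin B_t} p_v^2 + \sum_{v \in B_t}(2p_v - p_v^2)$. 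The expander mixing lemma controls the edge count $e(S,B_t)$ for any set $S$: $\bigl| e(S,B_t) - d|S||B_t|/n \bigr| \le \l_G\, d\sqrt{|S||B_t|}$, equivalently $\sum_{v \in S} p_v = |B_t| |S|/n \pm \l_G \sqrt{|S||B_t|}$. The plan is to split the sum over $v\notin B_t$ according to whether $p_v$ is large or small: for the ``large $p_v$'' set $L = \{v : p_v \ge \theta\}$ a second application of mixing bounds $|L|$ (since $\theta d |L| \le e(L,B_t) \le d|L||B_t|/n + \l_G d\sqrt{|L||B_t|}$ forces $|L| \le \l_G^2 |B_t|/(\theta - b_t)^2$), while for $v \notin L$ we bound $p_v^2 \le \theta p_v$ and sum. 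Choosing $\theta$ a suitable constant below $1$, and using $b_0 \le (\epsilon/5)n$ together with $\l_G = 3/5 - \epsilon$, these estimates should yield $\E[b_{t+1} \mid b_t] \le (1 - \delta) b_t$ for some constant $\delta = \delta(\epsilon) > 0$, \emph{as long as $b_t$ stays below a constant threshold}. Since the initial minority $b_0$ is below that threshold and the process is contracting, an induction (with the concentration step below) keeps it there.

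Next I would establish concentration. Conditioned on $B_t$, the random variable $|B_{t+1}|$ is a sum of independent indicator variables (one per vertex, each depending only on that vertex's two independent samples), so a Chernoff/Azuma bound gives $|B_{t+1}| \le (1-\delta/2) |B_t| + O(\sqrt{|B_t| \log n}) $ with probability $1 - n^{-\omega(1)}$. While $|B_t| = \Omega(\log n)$ the multiplicative term dominates and $b_t$ shrinks by a constant factor each step; after $O(\log n)$ steps we reach $|B_t| = O(\log n)$. From there I would switch to a more careful argument for the ``endgame'': once $|B_t|$ is polylogarithmic, the expected size $\E[|B_{t+1}| \mid B_t]$ is at most a constant fraction of $|B_t|$ plus a term that is $o(1)$ relative to $|B_t|$, and a union bound over the (few remaining) vertices shows $B_{t+1} = \emptyset$ within another $O(\log n)$ steps with high probability — once $|B_t|$ drops below $1$ the vote is complete. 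Summing, the total completion time is $O(\log n)$ w.h.p., and since $B$ never grows (w.h.p.) the surviving opinion is $A$.

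The main obstacle, and the place where the precise constant $3/5$ and the bound $(\epsilon/5)n$ enter, is verifying that the one-step drift is genuinely contracting, i.e. that the ``leakage'' of opinion $B$ into vertices outside $B_t$ (the $\sum_{v \notin B_t} p_v^2$ term) is strictly smaller than the ``recovery'' lost from $B_t$ (the $\sum_{v \in B_t} (1-p_v)^2$ term). This requires balancing three contributions — the mean-field term $b_t^2 n$, the mixing error $\l_G \sqrt{\cdot}$, and the fraction of $B$-vertices with anomalously few $B$-neighbours — and the inequality only closes when $\l_G$ is bounded away from the critical value (here $3/5$) and $b_t$ is small enough that the quadratic terms are dominated. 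I expect this to reduce, after optimising $\theta$, to a clean inequality of the form $b_t(2\l_G - 1 + O(b_t) + O(\l_G^2/(\theta-b_t)^2)) < 0$-ish, i.e. $2\l_G$ effectively competing against $1$; getting the algebra to yield an explicit constant $\delta(\epsilon)$ with the stated hypotheses is the crux. I would also need to confirm that the induction hypothesis ``$b_t < $ threshold'' is preserved, which follows as long as the threshold exceeds $\epsilon/5$ and the contraction is uniform below it — a routine but necessary bookkeeping step. Finally, the adversary's ability to rearrange opinions each step is handled automatically, because all the estimates above are worst-case over the placement of $B_t$ (the expander mixing lemma is used only as an upper bound valid for \emph{every} set of the given size), so re-placing the $|B_{t+1}|$ vertices adversarially cannot defeat the drift bound.
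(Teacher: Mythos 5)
Your overall plan---prove a one-step multiplicative contraction of $|B_t|$, concentrate, and iterate---matches the paper's, and your expression for $\E|B_{t+1}|$ is correct, but the tool you chose for the drift bound, the Expander Mixing Lemma, cannot work in the regime of this theorem; this is a genuine gap, not omitted algebra. The mixing lemma's error term is additive, $\l d\sqrt{|S|\,|B_t|}$, while the main term is $d|S|\,|B_t|/n$, and the error dominates as soon as $|B_t|\,|S| \lesssim \l^2 n^2$. Here $\l = 3/5-\epsilon$ is a large constant and $|B_t| \le (\epsilon/5)n$, so you are in that degenerate regime from the very first step. Concretely: (i) your recovery term requires a lower bound on $e(B_t,\overline{B_t})$, but the mixing lemma gives $e(B_t,\overline{B_t}) \ge d|B_t|\,|\overline{B_t}|/n - \l d\sqrt{|B_t|\,|\overline{B_t}|}$, which is negative throughout the range $|B_t| \le (\epsilon/5)n$ --- vacuous; (ii) your bound $|L| \le \l^2|B_t|/(\theta-b_t)^2$ on the set of outside vertices with $p_v \ge \theta$ allows a leakage contribution of order $\l^2|B_t| \approx 0.36\,|B_t|$ at best (and in fact $|L|$ can exceed $|B_t|$ for any admissible $\theta \le 1$), which cannot be absorbed by a drift of the required sign; and (iii) your own diagnosis that the inequality reduces to ``$2\l_G$ competing against $1$'' shows it must fail for $\l_G = 3/5 > 1/2$. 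The whole point of this theorem is that $\l_G$ is \emph{not} small; the paper uses the mixing lemma only in Phase I (Lemma~\ref{lemMethod2-new}), where both sets have size $\Theta(n)$ and $\a$ is small.

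What the paper uses instead is the conductance bound $\Phi_G \ge 1-\l$ (Jerrum--Sinclair), a \emph{multiplicative} edge-expansion guarantee valid for sets of every size: $e(S,\bar S) \ge (1-\l)d|S|\,|\bar S|/n$, hence every $S$ with $|S| \le cn$ spans at most $\frac{1}{2}d|S|\bigl(1-(1-\l)(1-c)\bigr) \le \frac{3}{10}d|S|$ edges when $\l < 3/5$ (Lemma~\ref{phaseII-expander}); the threshold $3/5$ arises exactly as the value of $\l$ at which this edge-density bound hits the critical $\a = 3/10$. The contraction itself (Lemma~\ref{lemPhase2}) is then a worst-case combinatorial argument: writing $|E(B)| = \b dB$, Cauchy--Schwarz gives $\E\D_{BA} \ge (1-2\b)^2B$, and the leakage $\E\D_{AB}$ is controlled by isolating the set $C$ of $A$-vertices with $d_v^B > \a d$ and using that $C\cup B$ is itself a small set spanning few edges --- this plays the role of your set $L$, but it is the sparsity hypothesis, not spectral mixing, that does the work. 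Finally, your endgame is too optimistic: once $|B_t| = O(\log n)$ the per-step failure probability is only $e^{-\Theta(|B_t|)}$, which is constant for constant $|B_t|$, so you cannot claim $1-n^{-\omega(1)}$ concentration per step; the paper (Lemma~\ref{d-constPhaseIII}, Corollary~\ref{nvkq72}) stops the high-probability phase at $|B_t| = \om = \log n/\log\log n$ and then runs $\om$ blocks of $O(\log\om)$ steps, each succeeding with constant probability, together with a separate bound on the probability that $|B_t|$ ever climbs back above~$\om$.
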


The above theorems hold under following adversarial conditions.
The adversary has full knowledge of the graph, decides the initial distribution of the opinions
among the vertices, and can arbitrarily redistribute the opinions at the beginning of each voting step.
The adversary cannot change the number of opinions
of each type. 
However, one can trace our proofs to see that
if we allow the adversary to change the opinions of at most
$f=o(\nu_0 n)$ vertices during the execution of the algorithm, then under the conditions considered in 
Theorems~\ref{Th1} and~\ref{Th1-expanders} and Corollary~\ref{Corr1}, after $O(\log n)$ voting steps
all but $O(f)$ vertices adopt the majority opinion $A$. That is, with high probability the protocol
can surpress persistently corrupted vertices (cf.~\cite{DGMSS}).

As described in the previous section, it
seems that there is a clearly defined hierarchy w.r.t.~distributed voting in
random regular graphs.
If every node is only allowed to consult
one single neighbour (and adopt its opinion), then -- as shown in \cite{CEHR-SIAM2013} -- one requires
$\Theta(n)$ steps on the graphs we consider
to converge to one opinion. If every node can consult two neighbours
(selecting them randomly with or without replacement) and adopt
the opinion of these two vertices if they are the same, then the running time is $O(\log n)$, so exponentially
faster. On the other side, 
even if the adversary is not allowed to re-distribute votes,
$\Omega(\log_d n)$ is a natural lower bound in any
$d$-regular graph.
This holds since there might be initially $\Theta(n)$ pairs of adjacent
vertices 
all with the same minority opinion $B$. The vertices of such a pair choose each other with
probability $\Theta(1/d^2)$, in which case none of them will change its opinion. Thus, the protocol
needs $\Omega(\log_d n)$ steps in order to guarantee that in none of these $\Theta(n)$
pairs the vertices choose each other all the time.
This lower bound holds also for $k$-sample voting for a constant $k\ge 3$,
if the selection of $k$ neighbours is done with replacement
(if a $B$ vertex $v$ has a $B$ neighbour, then $v$ does not change its opinion in the current step
with probability $\Omega(1/d^k)$). 
If every node may contact at least five different neighbours 
(selection without replacement) and adopt the majority opinion among them, then
on random regular graphs with randomly distributed opinions (biased toward $A$), $\Theta(\log \log n)$
steps suffice until $A$ wins 
(this follows from the analysis in~\cite{AD}).

We should mention that the last result does not hold if the opinions are not randomly distributed.
An adversary could assign the minority opinion $B$ to a vertex $v$ as well as all vertices 
which are at distance at most
$Diam/3$ to $v$, where $Diam$ denotes the diameter of the graph.
Clearly, the voting protocol needs at least $Diam/3$ steps.
Also, the example in the next paragraph shows that the result w.r.t.~$5$-sample voting \cite{AD}
cannot be extended to graphs with similar conductance as in random graphs.

Consider a random regular graph of degree $d-5$ with $n$ 
vertices, where $d = \omega(1)$. 
Then, we group the vertices in clusters of size $6$ -- we assume that 
$n$ is a multiple of $6$. In each cluster, we connect every vertex with all the 
other $5$ vertices, so the graph has degree $d$.
Bollob\'as \cite{Bol88} shows that 
the conductance of this graph is with high probability at least $1/2 - o(1)$ and at most 
$1/2(1+o(1)) + 5/d$. The upper bound holds since for any subset 
$S$ with $|S| \leq n/2$, there are at least $(1-o(1))(d-5) |S|/2$ edges crossing
the cut between $S$ and $V \setminus S$ in a $d-5$-regular random graph \cite{Bol88}. 
On the 
other hand, there is a set for which the size of the cut is at most 
$(1+o(1))(d-5) |S|/2$.
Furthermore, 
each node has $5$ additional (inner-cluster) edges, which may increase the cut. 
Since $d=\omega(1)$, the conductance of this graph is with high probability $1/2 \pm o(1)$, which is 
almost the same as in a $d$-regular random graph.
Clearly, all vertices from a cluster may
choose each other in a step with some probability larger than $1/d^{30}$. Thus, the protocol
needs $\Omega(\log_d n)$ steps in order to guarantee that in none of these clusters the vertices choose
each other all the time. 

Concerning our results, the constant eigenvalue gap $1-\l_G$ (as in Theorem~\ref{phaseIIandIII-expander})
seems to be needed. 
For example, consider a hypercube with $d = \log n$ and $1 - \l_G = o(1)$.  
If the adversary is allowed to rearrange the opinions in
each step, then we may have for $\Omega(d^2)$ steps configurations, in which all vertices of a subcube
of dimension $d-c$ have opinion $B$, where $c$ is a constant. 
Such a $B$-vertex converts to $A$ with probability
$(c/d)^2$, so $\Omega(d^2) = \Omega(\log^2 n)$ steps are needed for the protocol to finish.

The proof techniques used for single-sample voting, namely coalescing random walks,
do not apply in the case of two-sample voting. Our proofs are based on concentration of the size of edge cuts around the expectation
coupled with a worst case analysis.

\section{Background material and outline of proof}

The analysis of the voting process is made in the following  three phases, 
where $B$ is the minority vote.

\noindent
\makebox[5.5em][l]{\bf Phase I:} $cn \le B \le n(1-\nu_0)/2$.\\
\makebox[5.5em][l]{\bf Phase II:} $\om \le B \le cn$.\\
\makebox[5.5em][l]{\bf Phase III:} $1 \le B \le \om$.

Let $B(t)$ denote the set of vertices with opinion $B$
and the size of this set in step $t$.
Whenever it is
clear from the context, we write $B$ instead of $B(t)$.
Phase I reduces $B(t)$ from
$B(0)= n(1-\nu_0)/2$ to $B(T) \le cn$, for some small constant $c$, 
in a sequence of $T=O(\log (1/\nu_0))$ rounds.
%
%
The reduction of
$B(t)$ in Phase II is more dramatic.
The $\omega$ threshold between phases II and III is a function slowly growing with~$n$.
In Phase III things may slow 
down
again and
the last few steps 
can be viewed as
%
%
a biassed random walk.
All phases are analysed
in the adversarial model, which allows an arbitrary redistribution of the votes at the beginning of each step.

The following Chernoff--Hoeffding inequalities are used throughout the proofs.
Let   $Z=Z_1+Z_2+\cdots Z_N$ be the sum of  the independent random variables
$0\leq Z_i\leq 1,\,i=1,2,\ldots,N$, $\E(Z_1+Z_2+\cdots+Z_N)=N\m$, and $0 \le \e \le 1$. Then
\begin{eqnarray}
\Pr(Z \leq (1- \e )N\m) & \leq & e^{-\e^2N\m/3},  \label{Cher1_new} \\
\Pr(Z \geq (1+ \e )N\m )&\leq& e^{-\e^2N\m/2}.\label{Cher1}
\end{eqnarray}
For any $\epsilon >0$, we have
\begin{eqnarray}
\Pr(Z \geq (1+ \e )N\m )&\leq& \left(\frac{e^{\epsilon}}{(1+\epsilon)^{1+\epsilon}}\right)^{N\m}.\label{Cher2}
\end{eqnarray}

Our proofs for the case of random graphs
are made using the configuration model of $d$-regular $n$-vertex multigraphs.
Let $\cC_{n,d}$ be the space of $d$-regular $n$-vertex configurations,
and let $\cC^*_{n,d}$ be the sub-space of $\cC_{n,d}$ of the configurations
whose underlying
graphs are simple.
A configuration $S$ is a matching of the $nd$ ``configuration points''
(each vertex is represented by $d$ points).
Every simple graph maps to the same number of configurations, so $\cC^*_{n,d}$ maps
uniformly onto ${\cal G}_{n,d}$, the space of $d$-regular $n$-vertex graphs.
We use the following result of \cite{FL} for
the size of $|\cC^*|/|\cC|$. See e.g. \cite{CF} for a proof.

\begin{lemma}
Let $1 \le d \le n/8$. If $S$ is chosen uniformly at random from $\cC_{n,d}$,
then
\begin{equation}\label{simple}
\Pr(S \in \cC^*_{n,d}) \ge e^{-20d^2}.
\end{equation}
\end{lemma}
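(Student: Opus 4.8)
The plan is to estimate $|\cC^*_{n,d}|/|\cC_{n,d}|$ by controlling the number of configurations whose underlying multigraph has a loop or a repeated edge, and then apply a short moment computation. Concretely, for a configuration $S$ chosen uniformly from $\cC_{n,d}$, let $L$ denote the number of loops and $D$ the number of double edges (pairs of configuration edges joining the same two vertices). The event $S\in\cC^*_{n,d}$ is exactly $\{L=0,\,D=0\}$. The standard approach is to show that $(L,D)$ is asymptotically a pair of independent Poisson random variables with means $\lambda_L=(d-1)/2$ and $\lambda_D=(d-1)^2/4$, so that $\Pr(L=0,D=0)\to e^{-\lambda_L-\lambda_D}=e^{-(d-1)/2-(d-1)^2/4}$; since $(d-1)/2+(d-1)^2/4\le d^2$ comfortably, and since the convergence carries an error term that is itself bounded in terms of $d$ and $n$, one gets the stated bound $e^{-20d^2}$ once $d\le n/8$. (The constant $20$ is generous and absorbs all lower-order slack.)

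The key steps, in order, are: (i) Recall $|\cC_{n,d}| = (nd-1)!! = (nd)!/(2^{nd/2}(nd/2)!)$, the number of perfect matchings on $nd$ points. (ii) Compute the falling-factorial moments $\E[(L)_j]$ and $\E[(D)_k]$ by a direct counting argument — choosing $j$ ordered pairs of points within vertices for loops, or $k$ quadruples of points spanning $k$ vertex-pairs for double edges, and dividing by the matching count — and show these moments converge to $\lambda_L^j$ and $\lambda_D^k$ respectively, with explicit error bounds uniform for $d\le n/8$. (iii) Invoke the method of moments (or an inclusion–exclusion/Bonferroni argument on the joint factorial moments) to pass from moment convergence to $\Pr(L=0,D=0)\ge e^{-(d-1)/2-(d-1)^2/4}(1-o(1))$. (iv) Bound the exponent crudely by $20d^2$, and note that for small constant $d$ the inequality can be checked directly. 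As the excerpt says, a complete proof is given in \cite{CF}, and \cite{FL} is the original source; I would cite these rather than reproduce the full calculation.

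The main obstacle is step (ii)–(iii): making the error terms in the moment estimates genuinely uniform over the whole range $1\le d\le n/8$, rather than just for fixed $d$ or for $d=o(\sqrt n)$. When $d$ grows with $n$, $\lambda_D=(d-1)^2/4$ is no longer constant, so "convergence to a fixed Poisson" must be replaced by a quantitative statement — for instance, a direct lower bound on $\Pr(L=0,D=0)$ via switching/probabilistic deletion arguments, or a careful second-moment control of the number of "defects" showing that conditioning on no short cycles costs at most a factor $e^{O(d^2)}$. This is precisely where the $d\le n/8$ hypothesis is used: it keeps the number of configuration points per vertex small relative to the total, so that the relevant ratios of falling factorials stay bounded and the $e^{-20d^2}$ estimate survives. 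Everything else — the formula for $|\cC_{n,d}|$, the Poisson heuristic, the final crude bounding of the exponent — is routine.
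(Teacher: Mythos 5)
The paper does not actually prove this lemma: it is quoted from \cite{FL}, with \cite{CF} cited for a proof. Since you also end by deferring to those same references, as a citation your proposal matches what the paper does. But read as a proof, your sketch has a genuine gap, and it is the one you yourself flagged. The Poisson/method-of-moments route in steps (ii)--(iii) gives $\Pr(L=0,D=0)\to e^{-(d-1)/2-(d-1)^2/4}$ cleanly only for fixed (or very slowly growing) $d$. The lemma is stated for the full range $1\le d\le n/8$, where $d$ may grow polynomially in $n$; there the mean number of double edges $\lambda_D=(d-1)^2/4$ diverges, the Bonferroni/inclusion--exclusion truncation must be carried to $k$ of order $d^2$ terms before the alternating series stabilises, and the factorial-moment estimates $\E[(D)_k]\approx\lambda_D^k$ need multiplicative accuracy $(1+o(1))^k$ uniformly in that range --- none of which is established or routine. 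Saying the $d\le n/8$ hypothesis ``keeps the relevant ratios of falling factorials bounded'' gestures at the right place but does not close the argument; the standard treatments of the simple-graph probability for growing $d$ abandon the Poisson limit in favour of direct counting or switching estimates, which is essentially what \cite{FL} and \cite{CF} do to obtain the crude but uniform bound $e^{-O(d^2)}$.

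So: if your intent is to cite \cite{FL}/\cite{CF} for the lemma, that is exactly what the paper does and is fine. If your intent is to supply a self-contained proof, the argument as written would fail for $d$ growing with $n$, and the fix is not a refinement of the moment computation but a different (switching or direct enumeration) lower bound on the number of simple configurations.
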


This lemma is used in the following way.
Let $\ul Q$ be a property of $d$-regular $n$ vertex multigraphs.
Then, denoting by $G(S)$ the underlying multigraph of configuration $S$, 
\begin{eqnarray} 
  \Pr_{\cG} (G\in \ul Q) 
    & = & \Pr_{\cC} (G(S)\in \ul Q \: | \: S\in \cC^*)
    \; \le \; \frac{  \Pr_{\cC} (G(S)\in \ul Q) }{\Pr_{\cC}(S\in \cC^*)} \nonumber \\
    & \le & \Pr_{\cC} (G(S)\in \ul Q) \cdot e^{20d^2}. \label{njcwnc82s}
\end{eqnarray}

At any step $t$ of the voting process, let $\D_{AB}=\D_{AB}(t)$
be the number of $A$ vertices converting to $B$ during this step. Similarly, let $ \D_{BA}$
be the number of $B$ vertices converting to $A$ during step $t$. 
At each step we obtain a lower bound on
$\E  \D_{BA}$, an upper bound on $ \E  \D_{AB}$, and use the concentration
of these two random variables given by~\eqref{Cher1_new} and~\eqref{Cher1}
to get a \whp\ value of $ \D=\D_{BA}-\D_{AB}$, which is the increase of the number of
$A$ vertices in this step.

For a vertex $v$ and a set of vertices $C$,
let $d_v^C$ be the number of vertices in $C$ which are adjacent to $v$.
For $v \in A$, let $X_v=1$ if $v$ chooses twice in $B$ at step $t$, and 0 otherwise. Thus
\[
\D_{AB}= X_A=\sum_{v \in A} X_v
\]
The $X_v$ are independent $\set{0,1}$ random variables with the expected value
depending whether the neighbours are selected with or without replacement: 
\[
\E X_v(\text{with replacement}) = \bfrac{d_v^B}{d}^2, \;\;\; 
\E X_v(\text{no replacement})  = \; \frac{(d_v^B)( d_v^B-1)}{d(d-1)}.
\]

We give proofs for sampling with replacement.
The proofs for sampling without replacement  follow because
\[
\E X_v(\text{no replacement}) = \frac{d}{d-1} \E X_v(\text{with replacement}),
\]
so all inequalities for expected values in one model imply similar inequalities in the other model.


%

\section{Phase I of analysis: $cn \le B \le n(1-\nu_0)/2$}\label{Sec-Phase-I}

\subsection{The main lemma and its applications to expanders}

Lemma~\ref{lemMethod2-new} below gives a sufficient condition for a fast reduction of the 
minority $B$-vote
from $(1-\nu_0)n/2$ to $cn$, where $\nu_0 < 1$ and $c< (1-\nu_0)/2$.
For example, for $\nu_0 = 1/10$ and $c = 1/20$,
the $B$ vote reduces from $(9/20)n$ to $(1/20)n$. 
The condition in Lemma~\ref{lemMethod2-new}
says that the number $E(X,Y)$ of edges between any disjoint large subsets of vertices
$X$ and $Y$ is close to the value $dXY/n$ expected in the random regular graph.
This condition is of the form
as in the Expander Mixing Lemma (stated below as Lemma~\ref{bcwjhcbw3}),
so Lemma~\ref{lemMethod2-new}
can be immediately applied to expanders (see Corollary~\ref{lemMethod2-expanders}).
Lemma~\ref{lemMethod2-new} can also be applied without a reference to the second eigenvalue
(if the second eigenvalue is not known or is not good enough)
by directly checking that large subsets of vertices are connected by many edges.
We illustrate this by considering random $d$-regular graphs for any $d \in [K, n/K]$, where
$K$ is some (large) constant (see Lemma~\ref{lem-nuBC} and Corollary~\ref{lemMethod2-randomgraphs}).

\begin{lemma}\label{lemMethod2-new}
Let $0 < c \le 1/2$, $0 < \a \le  c^{3/2}/36$, and $\a^2 c^2 n = \Omega(n^\e)$, for 
a constant $\e > 0$.
Let $G$ be a $d$-regular $n$-vertex connected graph such that
\begin{equation} \label{mixing-prop}
   \left| E(X,Y) - \frac{dXY}{n} \right| \le \a d \sqrt{XY},
\end{equation}
for each pair $X$ and $Y$ of disjoint subsets of vertices of
sizes $Y \ge cn$ and $X \ge (2/3)\a c^{3/2} n$.
There exist absolute constants $K$ and $K'$ (independent of $d$, $c$ and $\a$) such that,
if the initial advantage of the $A$-vote in $G$ is
\begin{equation}\label{nu0-bound-x}
 \nu_0 \; \ge \; K \a,
\end{equation}
then with probability at least $1 - e ^{-\Theta(\a^2 c^2 n)}$, the advantage of the $A$-vote
increases to $1-2c$ (that is, the $B$-vote decreases to $cn$)
within $K'(\log (1/\nu_0) + \log( 1/c))$ voting steps.
\end{lemma}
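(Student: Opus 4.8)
The plan is to track the minority $B$-vote across a sequence of $O(\log(1/\nu_0)+\log(1/c))$ rounds, showing that in each round the $A$-advantage $\nu = (A-B)/n$ grows geometrically until the $B$-vote reaches $cn$. Fix a step $t$ with current minority $B$ and advantage $\nu = \nu(t) \ge \nu_0$. By the setup in the excerpt it suffices to lower bound $\E\D_{BA}$ and upper bound $\E\D_{AB}$ and then concentrate. For the lower bound on $\D_{BA}$, I would partition $B$ into vertices that have ``many'' $A$-neighbours and those that do not. The mixing condition \eqref{mixing-prop}, applied with $Y = B$ (size $\ge cn$) and $X$ the set of $B$-vertices with at most, say, $(1/2)d\cdot(A/n)$ neighbours in $A$, forces $X$ to be small --- smaller than $(2/3)\a c^{3/2}n$ --- because otherwise $E(X,A)$ would be far below $dXA/n$ by more than $\a d\sqrt{XA}$, contradicting \eqref{mixing-prop} (here the hypothesis $\a\le c^{3/2}/36$ and $A \ge (1+\nu_0)n/2 \ge n/2$ are what make the arithmetic close). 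Hence all but $(2/3)\a c^{3/2}n$ vertices of $B$ have $\ge (1/2)d(A/n)$ neighbours in $A$, and each such vertex converts with probability $\ge ((1/2)(A/n))^2 = (1/4)(A/n)^2 \ge 1/16$; so $\E\D_{BA} \ge (1/16)(B - (2/3)\a c^{3/2}n)$, which, since $B\ge cn$ and $\a c^{3/2}n \le (c/36)\cdot cn \cdot \sqrt{c} \le B/36$, is $\ge B/17$ say.

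For the upper bound on $\D_{AB}$ I would argue symmetrically but more carefully, since $A$ is large and I need the bound in terms of $B$, not $A$. Split $A$ into $A_{\mathrm{high}}$, the $A$-vertices with $d_v^B \ge \b d$ for a suitable threshold $\b$, and the rest. An $A$-vertex in $A\setminus A_{\mathrm{high}}$ converts with probability $\le \b^2$, contributing $\le \b^2 A \le \b^2 n$. For $A_{\mathrm{high}}$, I'd bound $|A_{\mathrm{high}}|$ by applying \eqref{mixing-prop} with $X = A_{\mathrm{high}}$, $Y = B$: the edge count $E(A_{\mathrm{high}},B) \ge \b d |A_{\mathrm{high}}|$ on one hand, and $\le d|A_{\mathrm{high}}|B/n + \a d\sqrt{|A_{\mathrm{high}}|B}$ on the other, giving $|A_{\mathrm{high}}| \lesssim (\a/(\b - B/n))^2 B$; choosing $\b$ a fixed multiple of $\sqrt{B/n}$ above $B/n$ makes this $O(\a^2 n)$. (A mild technical point: \eqref{mixing-prop} is stated for $X \ge (2/3)\a c^{3/2}n$, so if $|A_{\mathrm{high}}|$ or the bad sets are smaller than that threshold I just use the threshold value as the bound.) Combining, $\E\D_{AB} \le \b^2 n + O(\a^2 n)$. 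With $\nu \ge \nu_0 \ge K\a$ for large $K$, and $B = (1-\nu)n/2$, one gets $\E(\D_{BA} - \D_{AB}) \ge c_1 \nu n$ for an absolute constant $c_1 > 0$ as long as $\nu$ is bounded away from $1$; this is the multiplicative drift. When $\nu$ is close to $1$ (i.e.\ $B$ close to $cn$) the same estimates give $\E\D \ge c_2 B$, finishing the phase in $O(\log(1/c))$ more steps.

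Then I apply Chernoff: by \eqref{Cher1_new}--\eqref{Cher1}, with $N\m = \E\D_{BA} = \Omega(B) = \Omega(cn)$ and deviation $\e$ a small constant, $\D_{BA}$ is at least $(1-\e)\E\D_{BA}$ except with probability $e^{-\Omega(cn)}$, and similarly $\D_{AB} \le \E\D_{AB} + \e\nu n$ except with probability $e^{-\Omega(\a^2 c^2 n)}$ (using \eqref{Cher2} when $\E\D_{AB}$ is much smaller than $\nu n$). So in each step $\D \ge (c_1/2)\nu(t)n$ with probability $1 - e^{-\Omega(\a^2c^2 n)}$, whence $\nu(t+1) \ge (1 + c_1/2)\nu(t)$ as long as $B(t) \ge cn$; a union bound over the $O(\log(1/\nu_0) + \log(1/c))$ steps (permissible since $\a^2 c^2 n = \Omega(n^\e)$ dominates any polylog) gives the claimed failure probability $e^{-\Theta(\a^2 c^2 n)}$. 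The adversary is handled automatically: the bounds on $\E\D_{BA}$ and $\E\D_{AB}$ used only the sizes of $A$ and $B$ and the graph's mixing property \eqref{mixing-prop}, not which vertices hold which opinion, so an arbitrary re-placement at the start of the step changes nothing.

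The main obstacle I expect is the upper bound on $\D_{AB}$: because $A$ is a $(1\pm o(1))$ fraction of all vertices, even a tiny per-vertex conversion probability integrates to a non-negligible number of converts, so one cannot afford a crude bound --- the threshold $\b$ must be tuned so that both the ``low-degree mass'' $\b^2 n$ and the mixing-lemma bound $(\a/\b)^2 n$ on $|A_{\mathrm{high}}|$ come out as $O(\a^2 n)$ simultaneously, and one must double-check that the relevant subsets meet the size hypothesis of \eqref{mixing-prop} or else substitute the threshold size. Getting the constants $K, K'$ truly independent of $d, c, \a$ requires carrying these estimates symbolically rather than numerically, which is the bulk of the routine work.
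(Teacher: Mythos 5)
There is a genuine gap, and it is exactly where you suspected: the per-step drift estimate. The difficulty is that the two leading terms nearly cancel. In a perfectly mixing graph $\E\D_{BA}\approx B(A/n)^2$ and $\E\D_{AB}\approx A(B/n)^2$, whose difference is $\frac{AB(A-B)}{n^2}=\Theta(\nu n)$, and the lemma must work all the way down to $\nu=\Theta(\a)$ with $\a$ possibly tending to $0$. So both expectations must be pinned down to within a multiplicative $(1+O(\a))$ factor, i.e.\ to within an additive $o(\nu n)$. Your bounds do not achieve this. On the $\D_{BA}$ side, thresholding at half the expected degree gives only $\E\D_{BA}\ge (1/4)(A/n)^2 B\approx B/16$, a constant factor below the true $B(A/n)^2$; the paper instead gets the sharp bound $\E\D_{BA}\ge \frac{1}{Bd^2}\,E(A,B)^2\ge \frac{A^2B}{n^2}(1-2\eta)$ with $\eta=\a n/\sqrt{AB}=O(\a)$ in one line by Cauchy--Schwarz. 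On the $\D_{AB}$ side the problem is structural: a single threshold $\b$ must satisfy $\b>B/n$ for the mixing condition to bound $|A_{\mathrm{high}}|$ at all (the bound is $|A_{\mathrm{high}}|\le \a^2 B/(\b-B/n)^2$), while the low-degree mass is then $\b^2 A\ge (1+2\epsilon)\frac{AB^2}{n^2}$ if $\b=(1+\epsilon)B/n$. The excess $2\epsilon\,\frac{AB^2}{n^2}=\Theta(\epsilon n)$ must be $o(\nu n)$, forcing $\epsilon=o(\nu)$, whereupon $|A_{\mathrm{high}}|\le \a^2 n^2/(\epsilon^2 B)=\Omega(\a^2 n/\nu^2)$, which is \emph{not} $o(\nu n)$ when $\nu=K\a$. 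Your stated goal of making ``both $\b^2 n$ and $(\a/\b)^2 n$ come out as $O(\a^2 n)$'' is unattainable: the former needs $\b=O(\a)$, which puts essentially all of $A$ (whose average $d_v^B$ is about $dB/n$) into $A_{\mathrm{high}}$ and makes the mixing bound vacuous. As written, plugging $A\approx B\approx n/2$ into your estimates gives $\E\D_{BA}\ge n/32$ against $\E\D_{AB}\ge n/4$, i.e.\ a negative drift, so the claimed $\E(\D_{BA}-\D_{AB})\ge c_1\nu n$ does not follow.

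The paper's fix is a dyadic partition of $A$ into $O(\log n)$ classes $C_i=\set{v\in A: (1+2^{i-1}\eta)dB/n\le d_v^B<(1+2^{i}\eta)dB/n}$; the mixing condition forces $C_i\le A/4^{i-1}$, and the resulting geometric series gives $\E\D_{AB}\le \frac{AB^2}{n^2}(1+15\eta)$, which is exactly the precision needed. Two smaller points: your application of \eqref{mixing-prop} with $Y=B$ and $X$ a subset of $B$ violates the disjointness hypothesis (you want $Y=A$ there), and the concentration for $\D_{BA}$ must likewise be taken with relative deviation $\Theta(\eta)$ rather than a small constant, since a constant-fraction fluctuation of $\E\D_{BA}=\Theta(n)$ would swamp the $\Theta(\nu n)$ drift. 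The rest of your scaffolding --- geometric growth of $\nu_t$ up to $1/2$, then geometric decay of $1-\nu_t$ down to $2c$, a union bound over $O(\log(1/\nu_0)+\log(1/c))$ steps, and the observation that the estimates depend only on the sizes of $A$ and $B$ so the adversary is harmless --- does match the paper.
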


The parameters $c$ and $\a$ in the above lemma can be considered as some small constants, but they can also
depend on $d$ (and decrease with increasing $d$).


\begin{lemma}\label{bcwjhcbw3} {\rm (Expander Mixing Lemma~\cite{ExpanderLemma}).}
Let $G = (V,E)$ be a $d$-regular $n$-vertex graph.
Let $1 = \l_1 \ge \l_2 \ge \cdots \l_n \ge -1$ be the eigenvalues
of the transition matrix of the random walk on $G$,
and let $\l = \l_G = \max\{ |\l_2|, |\l_n| \}$. Then for all $S, T \subseteq V$,
\[ \left| E(S,T) - \frac{dST}{n} \right| \; \le \; \l d \sqrt{ST}.
\]
\end{lemma}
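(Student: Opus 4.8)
The plan is to prove Lemma~\ref{bcwjhcbw3} by the standard spectral argument: expand the indicator vectors of $S$ and $T$ in an orthonormal eigenbasis of the transition matrix, isolate the contribution of the top eigenvector as the main term $dST/n$, and control the remaining spectrum by $\l$ together with Cauchy--Schwarz. Throughout I follow the lemma's convention of writing $S,T$ both for the sets and for their cardinalities, and I write $E(S,T) = \sum_{u\in S,\,w\in T} A_{uw}$, where $A$ is the adjacency matrix.

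First I would fix the spectral setup. Since $G$ is $d$-regular, $A$ is symmetric and the transition matrix is $P = (1/d)A$, so $P$ is real symmetric and admits an orthonormal eigenbasis $v_1,\dots,v_n$ with $Pv_i = \l_i v_i$ and $1 = \l_1 \ge \l_2 \ge \cdots \ge \l_n \ge -1$. The Perron eigenvector is the normalised all-ones vector $v_1 = (1/\sqrt n)\mathbf 1$. Writing $\mathbf 1_S,\mathbf 1_T$ for the $\{0,1\}$-indicator column vectors of $S$ and $T$, the edge count becomes a bilinear form,
\[
E(S,T) \;=\; \mathbf 1_S^{\top} A\, \mathbf 1_T \;=\; d\,\mathbf 1_S^{\top} P\, \mathbf 1_T .
\]

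Next I would expand $\mathbf 1_S = \sum_i \a_i v_i$ and $\mathbf 1_T = \sum_i \b_i v_i$ in this basis. Orthonormality of the $v_i$ gives $\mathbf 1_S^{\top} P\, \mathbf 1_T = \sum_i \l_i \a_i \b_i$. The $i=1$ term is computed directly: $\a_1 = \langle \mathbf 1_S, v_1\rangle = S/\sqrt n$, similarly $\b_1 = T/\sqrt n$, and $\l_1 = 1$, so $\l_1\a_1\b_1 = ST/n$. Multiplying by $d$ peels off exactly the claimed main term, leaving
\[
E(S,T) - \frac{dST}{n} \;=\; d\sum_{i\ge 2} \l_i \a_i \b_i .
\]
Since $|\l_i| \le \l$ for every $i\ge 2$, applying the triangle inequality and then Cauchy--Schwarz yields
\[
\left| E(S,T) - \frac{dST}{n} \right|
\;\le\; d\l \sum_{i\ge 2} |\a_i|\,|\b_i|
\;\le\; d\l \Big(\sum_{i\ge 2}\a_i^2\Big)^{1/2}\Big(\sum_{i\ge 2}\b_i^2\Big)^{1/2}.
\]
Finally, by Parseval $\sum_i \a_i^2 = \|\mathbf 1_S\|^2 = S$ and $\sum_i \b_i^2 = \|\mathbf 1_T\|^2 = T$, so the two truncated sums are at most $S$ and $T$ respectively, and the right-hand side is bounded by $d\l\sqrt{ST}$, which is the assertion.

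This argument is essentially routine and I do not anticipate a genuine obstacle; the proof rests only on symmetry of $P$ and the identification of $v_1$. The one point meriting a line of care is the counting convention for $E(S,T)$ when $S\cap T\neq\emptyset$: the bilinear form $\mathbf 1_S^{\top}A\,\mathbf 1_T$ counts the contribution of vertices in $S\cap T$ in the natural way, and I would simply note that this matches the usage of $E(S,T)$ in the application (Lemma~\ref{lemMethod2-new}), where $S$ and $T$ are disjoint and the count is the ordinary number of crossing edges.
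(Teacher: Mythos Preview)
Your argument is the standard spectral proof of the Expander Mixing Lemma and is correct as written; the paper does not supply its own proof but simply cites the result from~\cite{ExpanderLemma}, so there is nothing to compare. Your remark about the $S\cap T\neq\emptyset$ convention is apt and matches how the lemma is applied in Lemma~\ref{lemMethod2-new}, where the sets are disjoint.
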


Lemmas~\ref{lemMethod2-new} and~\ref{bcwjhcbw3} imply the following corollary.

\begin{corollary}\label{lemMethod2-expanders}
For any constant $0 < c < 1/2$,
there exist constants $K_1$ and $K_2$ (which depend on $c$) such that
for any regular $n$-vertex graph $G$
with the initial advantage of the $A$-vote
$\nu_0 \; \ge \; K_1 \l$,
the minority vote $B$ decreases to $cn$
within $K_2\log (1/\nu_0))$ voting steps,
with probability at least $1 - e ^{-\Theta(\l^2 n)}$.
\end{corollary}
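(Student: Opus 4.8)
The plan is to deduce the corollary as a direct instantiation of Lemma~\ref{lemMethod2-new}, using the Expander Mixing Lemma (Lemma~\ref{bcwjhcbw3}) to certify the edge-discrepancy hypothesis~\eqref{mixing-prop}. First I would fix the constant $c$ given in the statement and set $\alpha = \lambda = \lambda_G$. By Lemma~\ref{bcwjhcbw3}, for \emph{all} $S,T \subseteq V$ we have $|E(S,T) - dST/n| \le \lambda d\sqrt{ST}$; in particular this holds for every pair of disjoint subsets $X,Y$ with $Y \ge cn$ and $X \ge (2/3)\alpha c^{3/2} n$, which is exactly the hypothesis~\eqref{mixing-prop} with $\alpha = \lambda$. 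So the only thing to check before invoking Lemma~\ref{lemMethod2-new} is that $\alpha = \lambda$ satisfies the side conditions of that lemma: $0 < \alpha \le c^{3/2}/36$ and $\alpha^2 c^2 n = \Omega(n^\varepsilon)$ for some constant $\varepsilon > 0$.

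The second side condition is where the quantitative choice of constants enters. If $\lambda$ is bounded below by a constant (or more precisely if $\lambda \ge n^{-1/2+\delta}$ for any fixed $\delta$, which is the only regime where the error probability $e^{-\Theta(\lambda^2 n)}$ is meaningful), then $\lambda^2 c^2 n = \Omega(n^{\varepsilon})$ holds with room to spare, since $c$ is a fixed constant. In the degenerate case $\lambda$ too small for this, the hypothesis $\nu_0 \ge K_1\lambda$ is either vacuous or the conclusion is trivial, and one can note the statement carries no content there — alternatively one simply restricts attention to $\lambda \ge n^{-1/2+\varepsilon}$, consistent with the remark after the lemma that $\alpha$ may decrease with $n$. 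For the first side condition, if $\lambda > c^{3/2}/36$ then again $\nu_0 \ge K_1\lambda$ forces $\nu_0$ to exceed a constant depending on $c$, and one can absorb this case separately (for instance by noting Theorem-level results handle constant imbalance), or simply set $K_1$ large enough that $\nu_0 \ge K_1 \lambda$ already implies the whole of Phase~I by cruder means; the cleanest route is to observe the corollary is only asserted to be useful for $\lambda$ small, so WLOG $\lambda \le c^{3/2}/36$.

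With the side conditions verified, Lemma~\ref{lemMethod2-new} applies with $\alpha = \lambda$: there are absolute constants $K, K'$ such that if $\nu_0 \ge K\lambda$, then with probability at least $1 - e^{-\Theta(\lambda^2 c^2 n)} = 1 - e^{-\Theta(\lambda^2 n)}$ (absorbing the constant $c$ into the $\Theta$) the $B$-vote decreases to $cn$ within $K'(\log(1/\nu_0) + \log(1/c))$ steps. Setting $K_1 = K$ and $K_2 = K' + K'\log(1/c)/\log(1/\nu_0)$ — or more simply noting $\log(1/c)$ is a constant absorbable into $K_2$ once we also use $\nu_0 \le 1$ so $\log(1/\nu_0) \ge 0$, with a harmless adjustment to cover $\nu_0$ near $1$ — gives the bound $K_2 \log(1/\nu_0)$ claimed. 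The main obstacle, such as it is, is purely bookkeeping: making sure the constants $K_1, K_2$ can legitimately be taken to depend only on $c$, and handling the boundary regimes of $\lambda$ (too large relative to $c^{3/2}/36$, or too small for the $\Omega(n^\varepsilon)$ condition) cleanly rather than sweeping them under the rug. There is no new probabilistic or combinatorial content beyond what Lemmas~\ref{lemMethod2-new} and~\ref{bcwjhcbw3} already provide.
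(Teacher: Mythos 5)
Your proposal is correct and follows essentially the same route as the paper: apply Lemma~\ref{lemMethod2-new} with $\a=\l$, certify hypothesis~\eqref{mixing-prop} via the Expander Mixing Lemma, and dispose of the large-$\l$ regime by taking $K_1$ large enough (the paper sets $K_1=\max\{K,36/c^{3/2}\}$, so that $\l\ge 1/K_1$ forces $\nu_0\ge 1$ and the statement is trivially fulfilled, which is the clean version of the case split you gesture at). The absorption of $\log(1/c)$ into $K_2$ is exactly how the paper concludes as well.
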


\begin{proof}
Let $K_1 = \max\{K, 36/c^{3/2}\}$, where constant $K$ is from Lemma~\ref{lemMethod2-new}.
If $\l \ge 1/K_1$, then the statement of the corollary is trivially fulfilled.
If $\l \le 1/K_1$, then $\l \le c^{3/2}/36$ and we can apply Lemma~\ref{lemMethod2-new}
with $c$ and $\a = \l$.
Now Lemmas~\ref{lemMethod2-new} and~\ref{bcwjhcbw3}
imply that if $\nu_0 \ge K_1\l$, then 
with probability at least $1 - e ^{-\Theta(\l^2 n)}$, the size of the $B$-vote decreases 
to $cn$ within $K'(\log (1/\nu_0) + \log( 1/c)) = K_2(\log (1/\nu_0)$ voting steps.
\end{proof}

\subsection{Application to random regular graphs}

If $d=O(1)$, then a random $d$-regular graph has $\lambda_G \leq
(2\sqrt{d-1} + \epsilon)/d$, \whp, where $\epsilon>0$ can be any small constant
\cite{Fri03}. 
Thus,
for $d =O(1)$ 
Corollary~\ref{lemMethod2-expanders} applies.
To apply Lemma~\ref{lemMethod2-new}
to 
random regular graphs with degree 
which may grow with the number of vertices, we need to establish a suitable $\a$ for 
the bound in~(\ref{mixing-prop}) without refering to $\l_G$.
The bound we show in the next lemma is stronger than a similar bound 
shown by Fountoulakis and Panagiotou~\cite{FountoulakisPanagiotou}.
Using the bound from~\cite{FountoulakisPanagiotou} would lead to a weaker 
relation between $\nu_0$ and $d$ than in Theorem~\ref{Th1}.

\begin{lemma}\label{lem-nuBC}
For given set sizes $X \le Y$, 
in a random $d$-regular $n$-vertex graph $G = (E,V)$,
with probability at least $1 - 2e^{-Y}$,
each pair of disjoint subsets of vertices $\cX$ and $\cY$ of sizes $X$ and~$Y$
satisfies the following inequality:
\begin{equation} \label{mixing-prop-randgraphs}
   \left| E(\cX,\cY) - \frac{dXY}{n} \right| \le d \sqrt{XY}
                     \sqrt{\frac{1}{d}\, 24\,\log(ne/Y) + \frac{d}{Y} 160}.
\end{equation}
\end{lemma}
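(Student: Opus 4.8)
The plan is to use the configuration model together with a first-moment (union-bound) argument over all pairs of disjoint vertex sets of the prescribed sizes. Fix disjoint $\cX, \cY$ with $|\cX| = X \le Y = |\cY|$. In the configuration model a $d$-regular graph is a uniform random perfect matching on $nd$ points, $d$ points per vertex; the set $\cX$ owns $dX$ points and $\cY$ owns $dY$ points. The quantity $E(\cX,\cY)$ is the number of matching edges going from the $\cX$-points to the $\cY$-points. Its mean is essentially $dX \cdot dY / (nd) = dXY/n$ (up to lower-order corrections from sampling without replacement, which only help). The first step is therefore to record that $\E\,E(\cX,\cY) \le dXY/n$ and to obtain a sharp upper tail.

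For the tail bound I would not use a naive Chernoff on a sum of $\{0,1\}$ indicators directly, because the matching edges are negatively dependent rather than independent. Instead I would either (i) invoke the standard fact that the pairing can be revealed one point at a time, so that $E(\cX,\cY)$ is stochastically dominated by a sum of independent Bernoullis with the same (or larger) mean, or (ii) appeal to the known negative-association / Chernoff-for-matchings estimates for the configuration model. In either case one gets, for $E(\cX,\cY)$ exceeding its mean $\mu := dXY/n$ by a multiplicative factor, the bound~\eqref{Cher2}, i.e. $\Pr(E(\cX,\cY) \ge (1+\delta)\mu) \le \big(e^{\delta}/(1+\delta)^{1+\delta}\big)^{\mu}$, and symmetrically the lower-tail bound~\eqref{Cher1_new}. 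The target deviation is $t := d\sqrt{XY}\,\sqrt{\frac1d\,24\log(ne/Y) + \frac dY\,160}$; one checks $t^2/\mu = 24\log(ne/Y) + 160\,dn/Y^2 \cdot (XY/n)$... more carefully, $t^2 = d^2 XY\big(\frac{24}{d}\log(ne/Y) + \frac{160d}{Y}\big)$ and $\mu = dXY/n$, so $t^2/\mu = n\big(24\log(ne/Y) + \tfrac{160 d}{Y}\big) \cdot \tfrac{d}{d}$... the two regimes to separate are $t \le \mu$ (where $\big(e^\delta/(1+\delta)^{1+\delta}\big)^\mu \le e^{-\delta^2\mu/3} = e^{-t^2/(3\mu)}$) and $t \ge \mu$ (where the bound is at most $e^{-\Theta(t)}$ or $e^{-\Theta(t\log(t/\mu))}$). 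In both regimes I want the exponent to dominate $2Y\log(ne/Y)$, which is (up to constants) the logarithm of the number of pairs $(\cX,\cY)$.

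The union bound is the third step: the number of choices of $\cY$ of size $Y$ is $\binom{n}{Y} \le (ne/Y)^Y$, and of $\cX$ of size $X \le Y$ is $\binom{n}{X} \le (ne/X)^X \le (ne/Y)^Y$ as well (since $x \mapsto x\log(ne/x)$ is increasing for $x \le n$), so the total is at most $(ne/Y)^{2Y} = e^{2Y\log(ne/Y)}$. Multiplying by the per-pair failure probability and checking that the $24\log(ne/Y)$ term inside the square root was chosen precisely so that $t^2/(3\mu) \ge$ (something like) $4Y\log(ne/Y)$ in the small-deviation regime — here $t^2/(3\mu) \ge 8 Y\log(ne/Y)$ since $\mu=dXY/n\le Y$ wait one needs $t^2/(3\mu)\ge 2Y\log(ne/Y)+Y$; using $\mu \le dXY/n$ and $X\le Y$ this reduces to $\frac{n}{3d}\cdot 24\log(ne/Y)\cdot\frac{dXY}{n}/(XY/n)\cdots$ — I will arrange the constant $24$ (resp. $160$) to absorb the factor $3$ from~\eqref{Cher1_new} and the factor $2$ from the double union bound, plus an extra $+Y$ so that the residual probability is $\le 2e^{-Y}$ rather than merely $o(1)$. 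The $160\,d/Y$ term handles the complementary regime where the ratio $\mu/(XY/n)=d$ is large compared to $\log(ne/Y)$, i.e. dense-ish graphs, where the deviation we can afford is linear rather than square-root in $\mu$. Finally, one passes from the configuration model to $\mathcal G_{n,d}$ via~\eqref{njcwnc82s}; but note this lemma is stated directly for the configuration-model random graph, so this last transfer may be deferred to wherever Lemma~\ref{lem-nuBC} is applied, and the $e^{20d^2}$ factor is harmless against $e^{-\Theta(n)}$-type probabilities in the regime $d \le \sqrt n$.

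The main obstacle is the tail estimate for $E(\cX,\cY)$ in the configuration model with the right constant in the exponent: I must be careful that the dependence among matching edges is handled correctly (the clean route is the one-edge-at-a-time exposure giving negative association, hence the Chernoff bounds~\eqref{Cher1_new}--\eqref{Cher2} apply verbatim), and that both deviation regimes — square-root-type for $d \lesssim \log(ne/Y)$ and linear-type for larger $d$ — are each strong enough to beat the entropy term $2Y\log(ne/Y)$ coming from the union bound. Once the two terms under the square root in~\eqref{mixing-prop-randgraphs} are correctly calibrated to these two regimes, the rest is bookkeeping with~\eqref{Cher1_new} and~\eqref{Cher2}.
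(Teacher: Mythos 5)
Your overall skeleton (configuration model, per-pair concentration of $E(\cX,\cY)$, union bound over at most $(ne/Y)^{2Y}$ pairs, transfer to simple graphs) matches the paper's, but the concentration step you propose is exactly where the proof lives, and your version does not go through as stated. The paper does not use a Chernoff bound on $E(\cX,\cY)$ at all: it exposes the $dX$ matching pairs incident to the $\cX$-points one at a time, shows that the resulting Doob martingale has increments bounded by $2$ (see \eqref{bchjqwcq56}, verified either by tracking the free-point counts $F^X_i, F^Y_i$ or by a switching argument), and applies Azuma--Hoeffding to get $\Pr(|E(\cX,\cY)-dXY/n|\ge\d)\le 2\exp(-\d^2/(8dX))$. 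This exponent is quadratic in $\d$ with denominator $8dX$ for \emph{all} deviation sizes, and both terms under the square root in \eqref{mixing-prop-randgraphs} fall out of the single condition $\d^2/(8dX)\ge 20d^2+2Y\log(ne/Y)+Y$ (see \eqref{bnkqnds33}). In particular the $160\,d/Y$ term exists solely to absorb the $e^{20d^2}$ simplicity correction of \eqref{njcwnc82s}; it is not there to handle a ``dense regime'' of a Chernoff bound, and the transfer to simple graphs cannot be deferred, since the lemma is stated for simple random regular graphs.

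Your substitute has two concrete problems. First, the domination/negative-association claim: revealing the partners of the $\cX$-points one at a time, the conditional probability that the next $\cX$-point lands in $\cY$ is $F^Y_i$ divided by the number of remaining points, which you can only bound uniformly by $dY/(nd-2dX)=Y/(n-2X)$; when $X=\Theta(n)$ (which is precisely the situation in which Corollary~\ref{lemMethod2-randomgraphs} applies the lemma, with both voting camps of linear size) the dominating binomial has mean $dXY/(n-2X)\gg dXY/n$, so the bound says nothing about deviations around the correct centre. Negative association of edge indicators in a uniform perfect matching is not an off-the-shelf fact that lets you apply \eqref{Cher1_new}--\eqref{Cher2} ``verbatim.'' Second, even granting a binomial-type tail with mean $\mu=dXY/n$, in the regime $t>\mu$ the exponent from \eqref{Cher2} degrades to $\Theta(t\log(t/\mu))$ --- linear rather than quadratic in $t$ --- and your verification that this still dominates $2Y\log(ne/Y)+Y+20d^2$ is exactly the part your write-up leaves unfinished; closing it would need an extra observation you do not supply (for instance, that the event is vacuous unless $t\le dX$, which together with $t^2\ge 24dXY\log(ne/Y)$ forces $dX\ge 24Y\log(ne/Y)$). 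The martingale route sidesteps both issues, which is why the paper takes it.
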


\begin{corollary}\label{lemMethod2-randomgraphs}
For any constant $0< c < 1/2$,
there exist constants $K_1$ and $K_2$ (which depend on $c$) such that
for a random $d$-regular $n$-vertex graph with
the initial advantage of the $A$-vote
\begin{equation}\label{nu0-bound}
 \nu_0 \; \ge \; K_1\sqrt{\frac{1}{d} + \frac d n},
\end{equation}
the minority vote $B$
decreases within $K_2\log (1/\nu_0)$ steps to $cn$,
with probability at least $1 - e^{-\Theta(n^{1/2})}$.
\end{corollary}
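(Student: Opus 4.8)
The plan is to obtain Corollary~\ref{lemMethod2-randomgraphs} from Lemma~\ref{lem-nuBC} and Lemma~\ref{lemMethod2-new} in exactly the same way Corollary~\ref{lemMethod2-expanders} was obtained from the Expander Mixing Lemma and Lemma~\ref{lemMethod2-new}: first use Lemma~\ref{lem-nuBC} to certify that a random $d$-regular graph satisfies the mixing hypothesis~\eqref{mixing-prop} with a parameter $\a$ of order $\sqrt{1/d+d/n}$, and then quote Lemma~\ref{lemMethod2-new}. Concretely I would set
\[
 \a^2 \;=\; \frac{24}{d}\bigl(1+\log(1/c)\bigr) \;+\; \frac{160\,d}{cn},
\]
so that (since $c$ is a constant) $\a \le C_c\sqrt{1/d+d/n}$ with $C_c = \sqrt{\max\{24(1+\log(1/c)),\,160/c\}}$.

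Take $K_1 = \max\{K C_c,\, 36 C_c/c^{3/2}\}$, where $K$ is the constant from Lemma~\ref{lemMethod2-new}. If $K_1\sqrt{1/d+d/n}>1$ the hypothesis $\nu_0\ge K_1\sqrt{1/d+d/n}$ of the corollary is vacuous, so there is nothing to prove; otherwise $\sqrt{1/d+d/n}\le 1/K_1$, which forces $\a\le C_c/K_1\le c^{3/2}/36$, so the range requirement $0<\a\le c^{3/2}/36$ of Lemma~\ref{lemMethod2-new} is met, and since $\nu_0\ge K_1\sqrt{1/d+d/n}\ge K C_c\sqrt{1/d+d/n}\ge K\a$, condition~\eqref{nu0-bound-x} holds too. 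Next I verify~\eqref{mixing-prop}: for any two disjoint vertex sets of sizes $Y\ge cn$ and $X\ge(2/3)\a c^{3/2}n$, rename so that $Y'$ is the larger of the two sizes, whence $Y'\ge cn$; applying Lemma~\ref{lem-nuBC} with sizes $(\min,\max)=(X',Y')$ and bounding $\log(ne/Y')\le\log(e/c)=1+\log(1/c)$ and $d/Y'\le d/(cn)$, the right-hand side of~\eqref{mixing-prop-randgraphs} is at most $\a d\sqrt{X'Y'}$, which is precisely~\eqref{mixing-prop} for that pair. Each pair of admissible sizes fails with probability at most $2e^{-Y'}\le 2e^{-cn}$, and a union bound over the at most $n^2$ size pairs shows~\eqref{mixing-prop} holds for all admissible $X,Y$ simultaneously with probability $\ge 1-2n^2e^{-cn}=1-e^{-\Theta(n)}$.

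It remains to check the side condition $\a^2c^2n=\Omega(n^\e)$; it holds with $\e=1/2$, since by the arithmetic--geometric mean inequality applied to the two summands of $\a^2c^2n = 24c^2(1+\log(1/c))\,n/d + 160cd$,
\[
 \a^2 c^2 n \;\ge\; 2\sqrt{\,24\cdot 160\,c^3\bigl(1+\log(1/c)\bigr)\,n\,}\;=\;\Omega_c\!\bigl(\sqrt{n}\,\bigr).
\]
On the mixing event, Lemma~\ref{lemMethod2-new} then gives, with conditional probability at least $1-e^{-\Theta(\a^2c^2n)}=1-e^{-\Theta(n^{1/2})}$, that the $B$-vote drops to $cn$ within $K'(\log(1/\nu_0)+\log(1/c))=K_2\log(1/\nu_0)$ steps (absorbing the constant $\log(1/c)$ exactly as in the proof of Corollary~\ref{lemMethod2-expanders}). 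Combining with the $e^{-\Theta(n)}$ failure probability of the mixing event yields the stated overall bound $1-e^{-\Theta(n^{1/2})}$, and with Corollary~\ref{lemMethod2-expanders} this supplies the Phase~I ingredient of Theorem~\ref{Th1}.

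The only step needing genuine care is the bookkeeping of constants together with the verification that $\a^2c^2n$ is polynomially large uniformly over the admissible degree range $d\in[K_1^2,\dots]$, so that Lemma~\ref{lemMethod2-new}'s probabilistic guarantee is non-trivial; the rest is essentially automatic, because the bound in~\eqref{mixing-prop-randgraphs} is increasing in the reciprocal of the larger set size, making the worst case $Y'=cn$ immediate, and the argument is otherwise a direct appeal to the two preceding lemmas.
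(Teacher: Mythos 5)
Your proposal is correct and follows essentially the same route as the paper: choose $\a=\Theta(\sqrt{1/d+d/n})$ via Lemma~\ref{lem-nuBC} (the paper uses the identical $\a = \sqrt{(24/d)\log(e/c) + 160d/(cn)}$), verify~\eqref{mixing-prop} for all admissible pairs by a union bound over set sizes, and then invoke Lemma~\ref{lemMethod2-new}. Your extra bookkeeping (the vacuous-hypothesis case, the check that $\a\le c^{3/2}/36$, and the AM--GM verification that $\a^2c^2n=\Omega(\sqrt{n})$ uniformly in $d$) only makes explicit steps the paper leaves implicit.
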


\begin{proof}
Let $0< c < 1/2$ be a constant and let 
\begin{equation}\label{jenq} 
   \a = \sqrt{\frac{1}{d} 24\log(e/c) + \frac{d}{n} (160/c)}
       = \Theta\brac{\sqrt{\frac{1}{d} + \frac d n}}.
\end{equation}
Lemma~\ref{lem-nuBC} implies that
for a random $d$-regular $n$-vertex graph,
the probability that Inequality~\eqref{mixing-prop} holds 
for each pair $X$ and $Y$ of disjoint subsets of vertices
such that $Y \ge cn$
is at least $1 - 2n^2e^{-cn} = 1 - e^{\Theta(n)}$.
If Inequality~\eqref{mixing-prop} holds for all such pairs of subsets of vertices, then
Lemma~\ref{lemMethod2-new} and~\eqref{jenq} imply that 
there are constants $K_1$ and $K_2$ such that 
if the initial vote imbalance is  $\nu_0 \ge K\a \ge K_1\sqrt{1/d + d/n}$, then with probability at least
$1 - e^{\Theta(\a^2 n)} \ge 1 -e^{\Theta(n^{1/2})}$,
the minority vote $B$ decreases to $cn$ in $K_2\log (1/\nu_0)$ steps. 
Thus with probability at least
$(1 - e^{\Theta(n)})(1 -e^{\Theta(n^{1/2})}) = 1 -e^{\Theta(n^{1/2})}$,
for a random $d$-regular $n$-vertex graph
with the initial advantage of the $A$-vote $\nu_0 \ge K_1\sqrt{1/d + d/n}$,
the minority vote reduces to $cn$ in $K_2\log (1/\nu_0)$ steps.
\end{proof}

\subsubsection*{Proof of Lemma~\ref{lem-nuBC}}

Let $X,Y$ be two fixed disjoint vertex sets with sizes $X \le Y$
($X$ and $Y$ stand for the sets and their sizes).
Let $Z(S)$ be the number of edges between $X$ and $Y$ in a configuration
$S\in \cC_{n,d}$.
We order the $nd$ configuration points so that the first $dX$ points correspond to the vertices in set $X$.
A configuration $S$ can be represented 
as a sequence $(t_1,t_2, \ldots\ t_q)$, where $q = nd/2$, $1 \le t_i \le q - (2i-1)$, and
the number $t_i$ defines the $i$-th (matched) pair in $S$,
assuming the lexicographic order of pairs.
Denoting by $L_i$ the sequence of the remaining unmatched points
after the first $(i-1)$ pairs have been selected,
the $i$-th pair matches the first point in $L_i$ (which becomes the first point of this pair)
with the point in $L_i$ at the position $1+t_i$.
A random $S\in \cC_{n,d}$ is determined by independent random selections of $t_i$'s.
By considering first the points corresponding to the vertices in $X$, we ensure that $Z(S)$
is determined by $(t_1, t_2, \ldots, t_{dX})$.

For a configuration $S = (t_1,t_2, \ldots\ t_q)$ and $i = 0, 1, \ldots, q$, let
\begin{eqnarray*}
  Z_i(S) & \equiv & Z_i(t_1,t_2,\ldots, t_q)
      \; = \; \E_{\t_{i+1},\ldots,\t_{q}} Z(t_1,\ldots,t_i,\t_{i+1},\ldots,\t_{q})
      \; \equiv \; Z(t_1,t_2,\ldots,t_i).
\end{eqnarray*}
That is, $Z_i(S)$ is the expected number of edges between the $X$ and $Y$ points
in a random configuration which agrees with the configuration $S$ on the first $i$ pairs.

We have $Z_0(S) = \E Z(S) = dXY/n$ and
$Z_{dX}(S) = Z(S)$.
The sequence of random variables $Z_i$, $i = 0, 1,\ldots, q$ is a martingale
because $\E(Z_{i+1} | Z_i) = Z_i$:
\begin{eqnarray}
  \E(Z_{i+1} | Z_i = z) & = & \E(Z(t_1,\ldots,t_{i+1}) | Z(t_1,\ldots,t_{i}) = z) \nonumber \\
   & = &
   \sum_{t_1,\ldots,t_{i+1}: Z(t_1,\ldots,t_{i}) = z}
          \frac{\Pr(t_1,\ldots,t_{i+1})}{\Pr(Z_i = z)}  \; Z(t_1,\ldots,t_{i+1})  \nonumber \\
   & = & \frac{1}{\Pr(Z_i = z)}  \sum_{t_1,\ldots,t_{i}: Z(t_1,\ldots,t_{i}) = z}
            \Pr(t_1,\ldots,t_{i})  \; \sum_{t_{i+1}}  \Pr(t_{i+1}) Z(t_1,\ldots,t_{i+1})  \nonumber \\
   & = & \frac{1}{\Pr(Z_i = z)}  \sum_{t_1,\ldots,t_{i}: Z(t_1,\ldots,t_{i}) = z}
            \Pr(t_1,\ldots,t_{i}) \: Z(t_1,\ldots,t_{i}) \nonumber \\
   & = & z. \nonumber
\end{eqnarray}

Let $F^X_i(S)$ and $F^Y_i(S)$ denote the number of available (unmatched)
$X$-points and $Y$-points, respectively,
after the first $i$ pairs in $S$ have been matched. 
If $F^X_i(S) = 0$, then $Z_j(S) = Z(S)$, for all $j \ge i$.
If $F^X_i(S) \ge 1$, then, dropping $S$ from the notation for
simplicity,
\[ Z_i \; = \; (Y - F^Y_i) + \frac{F^X_i F^Y_i}{nd - (2i+1)}.
\]
The first term $Y - F^Y_i$ is the number of edges between the $X$ and $Y$ points given 
by the first $i$ pairs in $S$, 
and the second term is the expected number of edges between the $X$ and $Y$ points
contributed by a random matching of the remaining points.
(Each available point $x$ in $X$ contributes $F^Y_i/(nd - (2i+1))$ to this expectation,
because this is the probability that $x$ is matched with a point in $Y$.)
When the next $(i+1)$-st pair is matched, then $(F^X_{i+1}, F^Y_{i+1})$ is
either $(F^X_{i} - 1, F^Y_{i}-1)$,
or $(F^X_{i} - 2, F^Y_{i})$, or $(F^X_{i} - 1, F^Y_{i})$, and it can be
checked that in all three cases
\begin{equation}\label{bchjqwcq56}
 |Z_{i+1} - Z_i| \; \le \; 2.
\end{equation}

Alternatively, the bound~\eqref{bchjqwcq56} can be established by applying
the general {\em switching method}
(see~\cite{Wormald-1999} for discussion of this method).
If two configurations $S'$ and $S''$ differ only by two pairs,
that is, $S''$ can be obtained from $S'$ by ``switching'' two pairs $(a,b)$ and $(c,d)$, where $a < c$, to
pairs $(a,c)$ and $\{b,d\}$, or
$(a,d)$ and $\{b,c\}$, then clearly
\begin{equation}\label{bcwhcb901}
 |Z(S') - Z(S'')| \; \le \; 2.
\end{equation}
(Note that we use the set notation for the pairs $\{b,d\}$ and $\{b,c\}$, because
we do not know the relative order of points $b$ and $d$, and $b$ and $c$.)

Define $\cC_i(S)$ as the set of all configurations $S'$ which have the same first $i$ pairs as in $S$.
For a configuration $S' = ((a'_1,a'_2), \ldots, (a'_{nd-1},a'_{nd}))$,
$1 \le i < nd/2$ and $2i \le j \le nd$,
let $S'[i,j]$ be the configuration obtained from $S'$
by switching the $i$-th pair $(a'_{2i - 1}, a'_{2i})$ and the pair $\{a'_j,b\}$
for the pairs $(a'_{2i - 1}, a'_{j})$ and $\{a'_{2i},b\}$
(note that $S'[i,2i] = S'$).
The set $\cC_i(S)$ can be obtained from the set $\cC_{i+1}(S)$ by
replacing each configuration $S' \in \cC_{i+1}(S)$
with the configurations $S'[i+1, j]$, for $j = 2i+2, \ldots, nd$:
\begin{equation}\label{nvjev9012}
  \cC_i(S) \; = \; \{ S'[i+1, j]:\: S'\in \cC_{i+1}(S), \: 2i+2 \le j \le nd \}.
\end{equation}
Using~\eqref{nvjev9012},
we can write $Z_{i+1}(S) - Z_i(S)$ as
\begin{eqnarray}
  Z_{i+1}(S) - Z_i(S) & = &
    \frac{1}{|\cC_{i+1}(S)|} \sum_{S'\in \cC_{i+1}(S)} Z(S')
    \; - \; \frac{1}{|\cC_{i}(S)|} \sum_{S'\in \cC_{i}(S)} Z(S') \nonumber \\
    & = &  \frac{nd-(2i+1)}{|\cC_{i}(S)|} \sum_{S'\in \cC_{i+1}(S)} Z(S')
    \; - \; \frac{1}{|\cC_{i}(S)|}
              \sum_{S'\in \cC_{i+1}(S)} \: \sum_{j= 2i+2}^{nd} Z(S'[i+1,j])  \nonumber \\
    & = &   \frac{1}{|\cC_{i}(S)|}
            \sum_{j= 2i+2}^{nd} \: \sum_{S'\in \cC_{i+1}(S)} \left( Z(S') - Z(S'[i+1,j])\right).  \nonumber
\end{eqnarray}
There are $|\cC_{i}(S)|$ terms in the last double sum and the absolute value of each term 
is at most 2
(Inequality~\eqref{bcwhcb901}), so the bound~\eqref{bchjqwcq56}
follows.

The Azuma-Hoeffding inequality says that
if a sequence of random variables $(X_0, X_1, \ldots, X_N)$ is a martingale
and $|X_{i+1} - X_i| \le c$, for each $1 \le i \le N-1$, then
for any $\d$,
 \[
 \Pr( |X_N - X_0| \ge \d ) \; \le \; 2\exp\brac{- \frac{\d^2}{2Nc^2}}.
 \]
Applying this inequality to our martingale $(Z_0, Z_1, \ldots)$, we get
\[
\Pr_{\cC}\brac{ \left|E(X,Y) - \frac{dXY}{n}\right| \ge \d } \; = \;
\Pr( |Z_{dX}(S) - Z_0(S)| \ge \d ) \; \le \;
2\exp\brac{- \frac{\d^2}{8dX}}.
\]
Thus for a random $d$-regular $n$-vertex graph $G = (E,V)$, using~\eqref{njcwnc82s},
\[
\Pr_{\cG}\brac{ \left|E(X,Y) - \frac{dXY}{n}\right| \ge \d } \; \le \;
     2\exp\brac{- \frac{\d^2}{8dX} + 20d^2}.
\]

The number of pairs of disjoint sets of sizes $X \le Y$ is at most
\[
{n \choose X}{n -X \choose Y}
\; \le \; \brac{\frac{ne}{X}}^X  \brac{\frac{ne}{Y}}^Y
\; \le \; \brac{\frac{ne}{Y}}^{2Y}.
\]
The last inequality holds because $(ne/z)^z$ is monotone increasing for $0 \le z \le n$.
Therefore, using the union bound, for given set sizes $X \le Y$
and a random $d$-regular $n$-vertex graph $G = (E,V)$,
\begin{eqnarray}
\lefteqn{\Pr( \mbox{there are disjoint $\cX, \cY\subseteq V$
   of sizes $X$ and $Y$ such that}\; |E(\cX,\cY) - dXY/n| \ge \d )} \nonumber \\
   & &  \;\;\;\;\;\;\;\;\;\;\;\;\;\;\;\; \le \;
     2 \exp\brac{- \frac{\d^2}{8dX} + 20d^2 + 2Y\log(ne/Y)}.
          \;\;\;\;\;\;\;\;\;\;\;\;\;\;\;\;\;\;\;\;\;\;\;\;\;\;\;\;\;\;\;\;\;\;\;\;\;\;\;\;\;  \nonumber
\end{eqnarray}
The above bound is at most $2e^{-Y}$, if
\[
    \frac{\d^2}{8dX} - 20d^2 - 2Y\log(ne/Y) \; \ge \; Y,
\]
which is equivalent to
\begin{equation}\label{bnkqnds33}
 \d \ge d\sqrt{XY} \sqrt{\frac{1}{d}(16\log(ne/Y) + 8) + \frac{d}{Y}160}.
\end{equation}
Since the right-hand side in~\eqref{mixing-prop-randgraphs} is at least the 
right-hand side in~\eqref{bnkqnds33},
we conclude that~\eqref{mixing-prop-randgraphs} holds for all pairs of disjoint 
sets $\cX$ and $\cY$ of sizes $X \le Y$ 
with probability
at least $1 - 2e^{-Y}$.

\subsection{\bf Proof of Lemma~\ref{lemMethod2-new}}

Let $c$ and $\a$ be as in the statement of the lemma.
Let $A$ and $B$ be the voting groups at the beginning of the current step,
let $\nu = (A-B)/n$, and assume that for some sufficiently large constant $K$,
\begin{equation}\label{nuRange-Appx}
K \a \; \le \; \nu \; \le \; 1-2c,
\end{equation}
or equivalently,
\[
        cn \; \le \; B \; \le \; \frac{1 - K \a}{2}n.
\]

{\bf Lower bound on $\D_{BA}$.}
The expectation $\E \D_{BA}$, taken over the random selection of neighbours in the current voting step,
is equal to
\begin{eqnarray}
\E \D_{BA} & = & \sum_{v \in B} \bfrac{d_v^A}{d}^2
  \; \ge \; \frac{1}{Bd^2} \brac{\sum_{v \in B} d_v^A}^2
  \; = \; \frac{1}{Bd^2}  \brac{E(A,B)}^2  \nonumber \\
  & \ge & \frac{1}{Bd^2}  \brac{ \frac{dAB}{n} - \a d \sqrt{AB} }^2
  \; = \; \frac{A^2B}{n^2} \brac{1 - \a \frac{n}{\sqrt{AB}}}^2  \label{hjre21} \\
  & \ge & \frac{A^2B}{n^2} \brac{1 - 2\a \frac{n}{\sqrt{AB}}}
  \; = \; \frac{A^2B}{n^2} \brac{1 - 2\eta}, \label{be56sw-Appx}
\end{eqnarray}
where $\eta \equiv \eta_{AB} = \a n/\sqrt{AB} \le \a \sqrt{2/c} \le c/25$.
The inequality on line~\eqref{hjre21} is the inequality~\eqref{mixing-prop} applied to sets $A$ and $B$
(as $X$ and $Y$).
Therefore,
\begin{eqnarray}
\Pr\brac{ \D_{BA} \le  \frac{A^2B}{n^2}(1-3\eta)}
   & \le & \Pr\brac{ \D_{BA} \le (1-\eta) \E\D_{BA}}
   \; \le \;      e^{-\eta^2 c n/24}
   \; \le \;      e^{-\a^2 c n/6}.   \label{eq-fger}
\end{eqnarray}
The second inequality above follows from~\eqref{Cher1_new}
applied with $\ep = \eta$, and the fact that
$\E\D_{BA} \ge (1/8)cn$ (from~\eqref{be56sw-Appx}).
The last inequality in~\eqref{eq-fger} holds because 
the definition of $\eta$ implies
\begin{equation}\label{hjcbw671}
 2\a \: \le \eta \le \frac{\a}{\sqrt{(1-c)c}}.
\end{equation}

{\bf Upper bound on $\D_{AB}$.}
Let $q = \lfloor (1/2)\log_2(n^2/(\eta B^2)\rfloor +1$.
We partition set $A$ into $q+1$ sets $C_i$, according to the
values $d_v^B$:

\begin{eqnarray*}
C_0 & = &  \set{v \in A: d_v^B  < (1+\eta) \frac{dB}{n}}; \\
C_i  & = &   \set{v \in A: (1+2^{i-1}\eta) \frac{dB}{n} \le d_v^B < (1+2^i\eta) \frac{dB}{n}},
      \;\;\;\;\; \mbox{for $i = 1,2,\ldots,q-1$};\\
C_q & = &   \set{v \in A: (1+2^{q-1}\eta) \frac{dB}{n} \le d_v^B}.
\end{eqnarray*}
Obviously $C_0 \le A$, and we show that $C_i \le A/2^{2(i-1)}$, for all $1 \le i \le q$.
For each $1 \le i \le q$, we have
\begin{eqnarray}\label{bchwbc3}
E(C_i,B) & = & \sum_{v\in C_i} d_v^B \; \ge \; \brac{1+2^{i-1}\eta}\frac{dC_iB}{n}.
\end{eqnarray}
If we had $C_i > A/2^{2(i-1)}$ for some $1 \le i \le q$, then,
from the definition of $q$ and $\eta$,
\[ C_i >  A/2^{2(q-1)} \ge \eta A B^2 / n^2 = \a A^{1/2} B^{3/2} /n \ge \a (2/3)c^{3/2} n.
\]
Then applying~\eqref{mixing-prop} to sets $C_i$ and $B$ (as $X$ and $Y$) would give
\begin{eqnarray*}
E(C_i,B) & \le & \frac{dC_i B}{n} + \a d \sqrt{C_i B}
   \; = \; \brac{1 + \a \frac{n}{\sqrt{C_i B}}} \frac{dC_i B}{n} \\
   & < & \brac{1 + \a 2^{i-1} \frac{n}{\sqrt{A B}}} \frac{dC_i B}{n}
   \; = \; \brac{1 + 2^{i-1} \eta} \frac{dC_i B}{n}.
\end{eqnarray*}
This would contradict~\eqref{bchwbc3}.

The expectation $\E \D_{AB}$, taken over the random selection of neighbours in
the current voting step, is equal to
\begin{eqnarray}
\E \D_{AB} & = & \sum_{v \in A} \bfrac{d_v^B}{d}^2
   \; = \; \sum_{i=0}^q \sum_{v \in C_i} \bfrac{d_v^B}{d}^2
   \; \le \; C_q + \sum_{i=0}^{q-1} \sum_{v \in C_i} \bfrac{d_v^B}{d}^2  \nonumber \\
  & \le & C_q + \sum_{i=0}^{q-1} C_i \bfrac{(dB/n)(1+2^i \eta)}{d}^2
   \; = \; C_q + \frac{B^2}{n^2} \sum_{i=0}^{q-1} C_i (1+2^{i+1} \eta + 2^{2i} \eta^2)  \label{xse319} \\
   & = & C_q + \frac{B^2}{n^2}
                     \brac{ \sum_{i=0}^{q-1} C_i + C_0 (2\eta + \eta^2)
                              + \eta\sum_{i=1}^{q-1} 2^{i+1}C_i \brac{1 + 2^{i-1} \eta}} \nonumber \\
    & \le & C_q + \frac{B^2}{n^2}
                     \brac{ A + A(2\eta + \eta^2) +
                              \eta\sum_{i=1}^{q-1} \frac{A}{2^{i-3}}\brac{1 + 2^{i-1} \eta}} \label{htd556} \\
  & \le & C_q + \frac{AB^2}{n^2}
                     \brac{ 1 + 2\eta + \eta^2 +
                              8\eta + 4q\eta^2} \nonumber \\
   & \le & C_q + \frac{AB^2}{n^2}
                     \brac{ 1 + 11\eta} \label{iow221} \\
   & \le & \frac{AB^2}{n^2} \brac{ 1 + 15\eta}.
\label{eq-upperOnDAB}
\end{eqnarray}
Inequality~\eqref{xse319} follows from the definition of sets $C_i$.
Inequality~\eqref{htd556} holds since $\sum_{i=0}^q C_i = A$,
$C_0 \le A$, and $C_i \le A/2^{2(i-1)}$.
To see that Inequality~\eqref{iow221} holds, check that 
$\eta(1+4q) \le 1$, 
using the definitions of $\eta$ and $q$, $B \ge cn$ and the bound on $\alpha$. 
Finally, Inequality~\eqref{eq-upperOnDAB} holds since
$C_q \le A/2^{2(q-1)} \le 4\eta AB^2/n^2$.

The above bound on $\E\D_{AB}$ implies
\begin{equation}
\Pr\brac{ \D_{AB} \ge  \frac{AB^2}{n^2}(1+17\eta)}
  \; \le \; \Pr\brac{ \D_{AB} \ge \brac{1+\eta} \E\D_{AB} }
  \; \le \;  e^{-\eta^2 c^2 n/8}
  \; \le \;  e^{-\a^2 c^2 n/2}. \label{eq-fger2}
\end{equation}
The second inequality above follows from~\eqref{Cher1}
applied with $\ep = \eta$, and the fact that
$\E\D_{AB} \ge (1/4)c^2n$ (from~\eqref{be56sw-Appx} with $A$ and $B$ swapped).
The last inequality in~\eqref{eq-fger2} follows from~\eqref{hjcbw671}.

{\bf Lower bound on  $\D$.}
Using~\eqref{eq-fger} and~\eqref{eq-fger2}, we get the following lower bound on the increase of the
$A$-vote:
\begin{eqnarray*}
 \lefteqn{\Pr\brac{ \D \: = \D_{BA} - \D_{AB} \; \le \;  \frac{A^2B}{n^2}(1-3\eta)
       - \frac{AB^2}{n^2}(1+17\eta)}} \\
  & \le &
\Pr\brac{ \brac{\D_{BA} \le  \frac{A^2B}{n^2}(1-3\eta)} \; \mbox{or} \;
     \brac{\D_{AB} \ge  \frac{AB^2}{n^2}(1+17\eta)}}
\; \le \; e^{-\Theta(\a^2 c^2 n)}.
\end{eqnarray*}

Hence with probability at least $1 - e^{-\Theta(\a^2 c^2 n)}$,
\begin{align*}
    \D \quad & \ge \quad \frac{AB}{n} \brac{\frac{A}{n} (1-3\eta)- \frac{B}{n}(1+17\eta)} \\
                 & \ge \quad
   \frac{AB}{n} \brac{\frac{A-B}{n}  - 10\eta}
  \;  \ge \; \frac{AB(A-B)}{n^2}  - 3 \eta n.
\end{align*}
Therefore, with probability at least $1 - Te^{-\Theta(\a^2 c^2 n)} = 1 - e^{-\Theta(\a^2 c^2 n)}$,
for all steps $t= 1,2, \ldots, T = O(\log n)$
when the size $B_t$ of the $B$-vote is at least $cn$, we have
\begin{equation}\label{bchwc}
 A_{t+1} \; = \: A_t + \D_t
\; \ge \;  A_t + \frac{A_t B_t (A_t - B_t)}{n^2}  - 3 \eta_t n.
\end{equation}
In~(\ref{bchwc}), substitute $\eta_t = \a n /\sqrt{A_t B_t}$, $A_t = n(1+\nu_t)/2$ and
$B_t = n(1-\nu_t)/2$ to get
\begin{align}
  \nu_{t+1} \;\; & \ge \; \nu_t + \frac{1}{2}\nu_t\brac{1-\nu_t^2} - 12\frac{\a}{\sqrt{1-\nu_t^2}}
       \label{ncnwd-Appx}.
\end{align}
Thus, while $K \a \le \nu_t \le 1/2$, we have
\begin{align}
  \nu_{t+1} \;\; & \ge \; \nu_t + \frac{3}{8}\nu_t - 15\a
   \; \ge \; \frac{5}{4}\nu_t. \nonumber
\end{align}
This means that the number of steps required to
increase the vote imbalance from $\nu_0$ to $1/2$ is at most
$\left\lceil \log_{5/4} (1/(2{\nu_0})) \right\rceil$.

For $1/2 \le \nu_t \le 1-2c$, we set $\d_t = 1-\nu_t$
 and~\eqref{ncnwd-Appx} becomes
\begin{align}
  \d_{t+1} \;\; & \le \; \d_t - \frac{1}{2}(1-\d_t)\d_t \brac{2-\d_t} + 12\frac{\a}{\sqrt{\d_t(2-\d_t)}}
       \label{ncnwd-Appx334}.
\end{align}
Since $2c \le \d_t \le 1/2$ and $\a \le c^{3/2}/36$,
\begin{align}
  \d_{t+1} \;\; & \le \; \d_t - \frac{3}{8}\d_t + 9\frac{\a}{\sqrt{c}}
    \; \le \; \frac{3}{4} \d_t
       \label{ncnwd-Appx334x}.
\end{align}
Thus the number of steps required to
increase the vote imbalance from $1/2$ to $1-2c$
(that is, decrease $\d_t$ from $1/2$ to $2c$) is at most
$\left\lceil \log_{4/3} (1/(4{c})) \;  \right\rceil$.

\section{Phase II of analysis: $\om \le B \le cn$}

\subsection{The main lemma and its application to expanders and random graphs}

The analysis of this middle phase needs 
the property that small sets of vertices do not induce many edges,
which holds for 
expanders and random regular graphs.
The main Lemma~\ref{lemPhase2} shows that for a graph with such a property,
if the minority vote is still substantial, then one voting step reduces this minority by a constant factor 
with high probability. This implies that with high probability the minority vote reduces
from $cn$ (where $c$ is a small constant) to $\omega$ within $O(\log n)$
steps (Corollary~\ref{bcwc891-cor}).

\begin{lemma}\label{lemPhase2}
Let $G$ be a $d$-regular $n$-vertex graph with $A$ and $B$ votes, 
where $A > B$.
%
%
Let $0  < \a \le 3/10$ and
$\g = \g(\a) = (1/2)(1 - 2\a)(1 - 3\a) > 0$.
If the set $B$  is such that every superset $S \supseteq B$ of size at most $(1+1/\a)B$
spans at most $\a d S$ edges
(that is, $|E(S)| \le \a d S$), then
one voting step reduces $B$ 
by at least
a factor $1 - \g$,
with probability at least $1 - e^{-\tilde{\gamma} B}$, where 
$\tilde{\gamma}$ is some constant bounded away from $0$.
\end{lemma}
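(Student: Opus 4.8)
The plan is to bound the number $\D_{BA}$ of $B$-vertices that flip to $A$ in one step from below, and show it is at least $(1-\g) B$ with the claimed probability; since $\D_{BA} \le B$, this immediately gives that the surviving $B$-vote is at most $\g B$. The key observation is that a vertex $v \in B$ fails to convert to $A$ only if at least one of its two random neighbours lies in $B$ (it converts precisely when both chosen neighbours are in $A$). Let $R$ denote the set of \emph{unconverted} $B$-vertices after the step, so $|R| = B - \D_{BA}$, and suppose for contradiction that $|R| \ge \g B$, i.e.\ $\D_{BA} < (1-\g)B$.

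First I would analyse the edge structure forced by a large $R$. Every $v \in R$ has at least one of its (at most $2$) sampled neighbours in $B$; pick one such neighbour and orient an edge from $v$ to it. This gives a set of $\ge |R|$ edge-endpoints inside $B$, so the number of edges inside $B$ incident to $R$, counted with the endpoint in $B$, is at least $|R|$. Now take the superset $S = B$ itself (or $B$ together with any extra vertices if one wants to reach size $(1+1/\a)B$ — but $B \subseteq S$ with $|S| \le (1+1/\a)B$ is the hypothesis, and $S=B$ qualifies since $B \le (1+1/\a)B$); the hypothesis gives $|E(B)| \le \a d B$, hence the total number of edge-endpoints inside $B$ is at most $2\a d B$. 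The real content is a probabilistic one: for each $v \in R$, conditioned on its degree-into-$B$ being $d_v^B$, the probability $v$ is unconverted is $1 - (1 - d_v^B/d)^2 \le 2 d_v^B/d$. So $\E|R| = \sum_{v\in B}\bigl(1-(1-d_v^B/d)^2\bigr) \le (2/d)\sum_{v\in B} d_v^B = (2/d)\cdot 2|E(B)| \le 4\a B$. Comparing with the definition $\g = \tfrac12(1-2\a)(1-3\a)$, one checks $4\a < (1-2\a)$ for $\a \le 3/10$, but we actually need the bound on $\E|R|$ to be comfortably below $\g B$ with room for concentration — this is where the precise constants in $\g$ must be matched, and I would verify $4\a + (\text{slack}) \le \g$ or, more likely, redo the expectation bound more carefully (e.g.\ splitting $B$ by the value of $d_v^B$ as in the Phase~I proof, using the superset hypothesis on the set of high-degree vertices to control them).

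For concentration I would write $|R| = \sum_{v \in B} Y_v$ where $Y_v$ is the indicator that $v$ is unconverted; the $Y_v$ are independent across $v$ (the samples at distinct vertices are independent), each in $\{0,1\}$, so the Chernoff bound~\eqref{Cher1} applies: $\Pr(|R| \ge (1+\e)\E|R|) \le e^{-\e^2 \E|R|/2}$. Choosing $\e$ so that $(1+\e)\E|R| \le \g B$ — possible because $\E|R| \le 4\a B$ is a constant factor below $\g B$ — gives the failure probability $e^{-\Omega(B)}$, i.e.\ $e^{-\tilde\g B}$ for a constant $\tilde\g > 0$ depending only on $\a$. One technical wrinkle: if $\E|R|$ could be very small, the exponent $\e^2\E|R|/2$ might be weak; but here the lower bound $B \ge \omega$ and the fact that we only need the bound when $B$ is substantial means $\E|R|$ is either $\Theta(B)$ or negligible, and in the latter case a direct first-moment / union argument bounds $\Pr(|R| \ge \g B) \le \binom{B}{\g B}(4\a B / B)^{\dots}$ — cleaner to just use~\eqref{Cher2} (the $(e^\e/(1+\e)^{1+\e})^{N\m}$ form) which handles large deviations from a small mean.

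The main obstacle I expect is matching the arithmetic: the clean first-moment bound $\E|R| \le 4\a B$ does not by itself land below $\g = \tfrac12(1-2\a)(1-3\a)$ for all $\a \le 3/10$ (e.g.\ at $\a = 3/10$, $4\a = 1.2$ while $\g = \tfrac12(0.4)(0.1) = 0.02$), so the naive bound is far too lossy and the factor-$1-\g$ reduction must come from a sharper accounting. The right approach is almost certainly to use the superset hypothesis in the stronger "every $S \supseteq B$ with $|S|\le(1+1/\a)B$ has $|E(S)|\le \a d S$" form: take $S = B \cup R'$ where $R'$ is a would-be large unconverted set outside an initial guess, run the dichotomy argument as in Phase~I (if $R$ were large, the induced edge count would violate the hypothesis), and extract $\g$ from the resulting contradiction rather than from a crude expectation. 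So the real work is setting up that extremal/contradiction argument correctly; the martingale or Chernoff step at the end is routine.
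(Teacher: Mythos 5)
There is a genuine gap, and it begins at your very first line: you have set yourself the wrong target. The lemma asserts that the post-step minority vote is at most $(1-\g)B$; you instead try to show that the set $R$ of \emph{unconverted} $B$-vertices has size at most $\g B$. That stronger statement is false. In the extremal configuration (which the paper's proof identifies explicitly) every $v\in B$ has $d_v^B=2\a d$, so each $v\in B$ converts with probability only $(1-2\a)^2$ and $\E|R| = \bigl(1-(1-2\a)^2\bigr)B$, which at $\a=3/10$ equals $0.84\,B$ --- vastly larger than $\g B = 0.02\,B$. Your own arithmetic (the comparison of $4\a B$ with $\g B$) is signalling exactly this, but the remedy is not a ``sharper accounting'' of $R$: no accounting can make $|R|\le\g B$ true, and the contradiction argument you sketch at the end (a large unconverted $R\subseteq B$ forcing too many edges inside a superset) cannot work because a large $R$ is perfectly consistent with the hypothesis. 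The factor $1-\g$ comes from the \emph{net} flow $\D=\D_{BA}-\D_{AB}\ge \g B$, i.e.\ new $B = B - \D_{BA} + \D_{AB} \le (1-\g)B$.

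The second, related omission is that your argument never touches $\D_{AB}$, the $A$-vertices that sample two $B$-neighbours and join $B$; these are part of the new $B$-vote and cannot be ignored. The superset hypothesis for supersets of size up to $(1+1/\a)B$ (not just $S=B$) is needed precisely here: the paper applies it to $S=B\cup C$ with $C=\{v\in A: d_v^B>\a d\}$ (one checks $C\le B/\a$ since $\a dC\le |E(C,B)|\le dB$) to derive $\E\D_{AB}\le B(2\a-\b-2\a\b)$, where $|E(B)|=\b dB$ with $0\le\b\le\a$. Combined with the Cauchy--Schwarz lower bound $\E\D_{BA}\ge B(1-2\b)^2$, this gives $\E\D\ge B f_\a(\b)$ with $f_\a(\b)=4\b^2-(3-2\a)\b+(1-2\a)$ minimised over $[0,\a]$ at $\b=\a$, yielding $\E\D\ge 2\g B$; Chernoff bounds on $\D_{BA}$ and $\D_{AB}$ separately then give $\D\ge\g B$ with probability $1-e^{-\tilde{\gamma} B}$. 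Your concentration step (independence of the per-vertex indicators plus \eqref{Cher1} and \eqref{Cher2}) is sound and matches the paper's, but it is applied to the wrong quantity; the expectation computation balancing the two flows is where the lemma actually lives.
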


\begin{corollary}\label{bcwc891-cor}
Let $G$ be a $d$-regular $n$-vertex graph, 
and let $0 < g < 1$ be
such that for each subset of vertices $S$ of size at most $gn$,
$|E(S)| \le (3/10) d S$.
Then the minority vote $B$ is reduced from $(3/13) g n$ to at most $\omega$
within $O(\log n)$ steps with
probability at least $1-e^{-\Theta(\omega)}$.
\end{corollary}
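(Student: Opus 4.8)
The plan is to obtain Corollary~\ref{bcwc891-cor} by iterating Lemma~\ref{lemPhase2} with the fixed parameter $\a = 3/10$, for which $\g = \g(3/10) = \frac12(1-\frac35)(1-\frac{9}{10}) = \frac{1}{50} > 0$ and $1 + 1/\a = 13/3$. The key observation is that the small-set edge condition assumed in the corollary is exactly what is needed to invoke the lemma as long as the $B$-vote is not too large: if at the start of a step we have $|B| \le (3/13)gn$, then every superset $S \supseteq B$ with $|S| \le (1+1/\a)|B| = (13/3)|B| \le gn$ satisfies $|E(S)| \le (3/10)d|S| = \a d|S|$ by hypothesis, so Lemma~\ref{lemPhase2} applies and that step reduces $|B|$ by at least a factor $1-\g = 49/50$, except with probability at most $e^{-\tilde\gamma|B|}$.

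Next I would set up the iteration. Starting from $|B_0| = (3/13)gn$, put $T = \lceil \log_{50/49}(3gn/(13\om))\rceil = O(\log(n/\om)) = O(\log n)$. On any step to which the lemma is successfully applied the $B$-vote can only shrink, so if steps $1,\dots,t-1$ have all been successful then $|B_{t-1}| \le (49/50)^{t-1}(3/13)gn \le (3/13)gn$, which means the hypothesis of Lemma~\ref{lemPhase2} is again available at step $t$. After at most $T$ successful steps we reach $|B_T| \le \om$, as required.

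Finally, I would control the failure probability by a union bound over the at most $T$ steps, conditioning step by step in order to sidestep the complication that an \emph{unsuccessful} step could momentarily increase $|B|$ (and the adversary is also free to redistribute the votes). Let $\cE_t$ denote the good event of Lemma~\ref{lemPhase2} at step $t$. On $\cE_1 \cap \cdots \cap \cE_{t-1}$ we still have $|B_{t-1}| \le (3/13)gn$, so the lemma applies and $\Pr(\overline{\cE_t} \mid \cE_1,\dots,\cE_{t-1}) \le e^{-\tilde\gamma|B_{t-1}|}$; moreover, as long as the target has not yet been reached, $|B_{t-1}| > \om$, so this conditional probability is at most $e^{-\tilde\gamma\om}$. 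Summing over the steps, $\Pr(\text{some step fails}) \le T\,e^{-\tilde\gamma\om} = O(\log n)\,e^{-\tilde\gamma\om} = e^{-\Theta(\om)}$, where the last equality uses that the Phase~II/III threshold $\om$ grows with $n$ fast enough that $\log\log n = o(\om)$.

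I do not anticipate a genuine obstacle here; the argument is essentially the bookkeeping above. The two points worth a little care are the arithmetic fit $\a = 3/10 \Rightarrow (1+1/\a)(3/13) = 1$ (so that $(13/3)|B| \le gn$ exactly when $|B| \le (3/13)gn$) together with $\g(3/10) = 1/50 > 0$, and the conditioning in the union bound, which is the clean way to handle the fact that on a step where Lemma~\ref{lemPhase2} fails neither the voting dynamics nor the adversary need decrease $|B|$.
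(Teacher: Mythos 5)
Your proof is correct and follows essentially the same route as the paper's: iterate Lemma~\ref{lemPhase2} with $\a=3/10$ (so $\g=1/50$ and $1+1/\a=13/3$), use the invariant $B\le(3/13)gn$ to keep the small-set edge hypothesis available at every step, and condition step by step over $O(\log n)$ rounds. The only difference is in how the failure probabilities are aggregated: the paper bounds the failure of step $i$ by $\exp\{-\tilde{\g}(1-\g)^{i-1}cn\}$ and telescopes the product, so the error is dominated by the last step and is $e^{-\Theta(\om)}$ for any $\om\rai$, whereas your uniform per-step bound summed over $T=O(\log n)$ steps gives $Te^{-\tilde{\g}\om}$ and needs the extra assumption $\log\log n=O(\om)$ --- harmless here since the paper ultimately takes $\om=\log n/\log\log n$.
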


\begin{proof}
For each step, apply Lemma~\ref{lemPhase2} with $\a = 3/10$ and $\g = \g(3/10) > 0$.
In each step (by induction) $B \le (3/13)gn$,
so each superset $S$ of $B$ of size at most $(1 + 1/\a)B$
has size at most $(13/3)B \le gn$, implying $|E(S)| \le (3/10) dS = \a d S$.
Thus Lemma~\ref{lemPhase2} implies that
each step reduces $B$ by a factor $1-\g < 1$, with probability at least $1 - e^{-\tilde{\g} B}$.

Let $r = \lceil \log (cn/\om) / \log (1/(1-\g)) \rceil = O(\log n)$, so that 
$\om < (1-\g)^{r-1} cn$ but $\om \ge (1-\g)^{r} cn$.
The initial size of the minority vote $B$ is $B_0 \le cn$.
We say that step $i$ is successful, if the size of the $B$ vote at the 
end of this step is $B_i \le (1-\g)^i cn$.
If the steps $1,2,\ldots, i-1$ are succesful, then $B_{i-1} \le (1-\g)^{i-1} cn$
and the probability that $B_i \le  (1-\g)^{i} cn$ (that is, the probability that step $i$ is 
successful) is at least the probability that 
a $B$ vote of size $(1-\g)^{i-1} cn$ reduces in one step to $(1-\g)^{i}cn$.
This (conditional) probability is at least $1- \exp\{-\tilde{\g} (1-\g)^{i-1} cn \}$ (Lemma~\ref{lemPhase2}).
Therefore
\begin{eqnarray}
 \Pr\brac{B_r \le \om} 
    & \ge & \Pr\brac{\mbox{all steps $1,2, \ldots, r$ are successful}} \nonumber \\
    & = & \Pi_{i=1}^r \Pr\brac{\mbox{step $i$ is successful}\: |\: \mbox{steps $1,2,\ldots,i-1$ are successful} }
                 \nonumber \\
     & \ge & \Pi_{i=1}^r \brac{1 - \exp\{-\tilde{\g} (1-\g)^{i-1} cn\}} \\
     & = & 1 - e^{- \Theta\brac{\om}}.
\end{eqnarray}
Thus with probability at least
$1 - e^{-\Theta(\omega)}$,
$B$ is reduced from $cn$ to $\omega$ in $O(\log n)$ steps.
\end{proof}

\begin{lemma}\label{phaseII-expander}
Let $G$ be a $d$-regular $n$-vertex graph with
$\l = \l_G < 3/5$.
Then the minority vote $B$ reduces from $(3/13)(3/5 - \l)n$
to $\omega$ within $O(\log n)$ steps with
probability at least $1-e^{-\Theta(\omega)}$.
\end{lemma}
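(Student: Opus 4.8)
The plan is to obtain this as an immediate consequence of Corollary~\ref{bcwc891-cor}, using the Expander Mixing Lemma (Lemma~\ref{bcwjhcbw3}) to verify the hypothesis on the edge density of small sets. Set $g = 3/5 - \l$, which is positive since $\l < 3/5$. First I would show that every vertex subset $S$ with $|S| \le gn$ satisfies $|E(S)| \le (3/10)\, d\, |S|$. Apply Lemma~\ref{bcwjhcbw3} with both arguments equal to $S$; since each edge inside $S$ is counted twice in $E(S,S)$, we have $E(S,S) = 2|E(S)|$, and the lemma gives
\[
 2|E(S)| \;\le\; \frac{d|S|^2}{n} + \l d |S| \;=\; d|S|\brac{\frac{|S|}{n} + \l}.
\]
Using $|S|/n \le g = 3/5 - \l$, the bracket is at most $3/5$, hence $|E(S)| \le (3/10)\, d\, |S|$, as required.

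Then I would apply Corollary~\ref{bcwc891-cor} with this choice of $g$: its hypothesis is exactly the bound just established, so it follows that the minority vote $B$ is reduced from $(3/13)gn = (3/13)(3/5 - \l)n$ to at most $\om$ within $O(\log n)$ steps, with probability at least $1 - e^{-\Theta(\om)}$. This is precisely the assertion of the lemma, so no further argument is needed.

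There is essentially no obstacle here; this lemma is a specialisation of Corollary~\ref{bcwc891-cor} to the expander setting. The only point worth a moment's attention is the factor $2$ in the identity $E(S,S) = 2|E(S)|$ when the Expander Mixing Lemma is applied to a single set, together with the observation that the starting threshold $g = 3/5 - \l$ is chosen precisely so that the ``diagonal'' contribution $|S|/n$ together with $\l$ does not exceed $3/5$ for sets of size up to $gn$; a larger starting size for $B$ would break the edge-density condition $|E(S)| \le (3/10)dS$ that Corollary~\ref{bcwc891-cor} (and, through it, Lemma~\ref{lemPhase2}) requires.
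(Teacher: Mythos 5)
Your proposal is correct and follows the same overall reduction as the paper: both proofs consist of verifying the edge-density hypothesis $|E(S)| \le (3/10)dS$ for small sets and then invoking Corollary~\ref{bcwc891-cor}. The only genuine difference is how that hypothesis is verified. You apply the Expander Mixing Lemma (Lemma~\ref{bcwjhcbw3}) with $S=T$, using $E(S,S)=2|E(S)|$ and $|S|/n+\l\le 3/5$ for $|S|\le (3/5-\l)n$; the paper instead invokes the conductance bound $\Phi_G\ge 1-\l$ of Jerrum and Sinclair to get $E(S,\bar S)\ge (1-\l)dS\bar S/n$ and hence $|E(S)|=\frac12\brac{dS-E(S,\bar S)}\le (3/10)dS$ for $|S|\le cn$ with $c=1-(2/5)(1-\l)^{-1}=(3/5-\l)/(1-\l)$. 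The two routes are essentially equivalent spectral-gap-implies-edge-expansion arguments; yours has the advantage of using only a lemma already stated in the paper and of producing exactly the threshold $g=3/5-\l$ appearing in the statement (the paper's $c$ is slightly larger, so its proof in fact gives a marginally stronger starting size), while the paper's route avoids the one point you rightly flag, namely that the Expander Mixing Lemma must be read with the ordered-pair convention for $E(S,T)$ when the two sets coincide, so that $E(S,S)=2|E(S)|$. Either way the hypothesis of Corollary~\ref{bcwc891-cor} is met and the conclusion follows.
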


\begin{proof}
This lemma follows from Corollary~\ref{bcwc891-cor}
applied with $c = 1 - (2/5)(1-\l)^{-1} > 0$, after checking that $|E(S)| \le (3/10)dS$ whenever $S \le cn$.
It is shown in~\cite{Jerrum-Sinclair} that the conductance of graph $G = (V,E)$ defined as
\[ \Phi_G = \min_{\emptyset \neq S\subset V} \frac{n E(S,\bar{S})}{d S \bar{S}},
\]
is at least $1 - \l$.
This implies that 
$E(S,\bar{S}) \ge (1 - \l) dS\bar{S}/n$, 
so for any $S\subseteq V$ of size at most~$cn$,
\begin{eqnarray*}
    |E(S)| & = & \frac{1}{2} \brac{dS - E(S,\bar{S})} 
    \; \le \; \frac{1}{2} d S \brac{1 - (1-\l) \bar{S}/n} \\
    & \le & \frac{1}{2} d S \brac{1 - (1-\l)(1-c)}
    \; = \; \frac{3}{10} dS.
\end{eqnarray*}
\end{proof}

As mentioned before, for constant $d$ 
a random $d$-regular graph has eigenvalue $\lambda_G \approx 2/\sqrt{d}$, \whp\
Thus,
for constant $d$, the result which we have obtained for expanders (Lemma~\ref{phaseII-expander}) applies 
to random regular graphs as well,
provided that $d$ is sufficiently large to guarantee $\lambda_G < 3/5$. 
To consider 
random regular graphs with degree 
which may grow with the number of vertices, 
we show the following lemma.
This is a stronger version of a result from \cite{CF} that \whp\ for
$3 \le d \le cn$ no set of vertices of size  $|S| \le n/70$ induces more than $d|S|/12$ edges.

\begin{lemma}\label{span}
Let $600 \le d \le n/K$ for some large constant $K$,
and let $G = (V,E)$ be a random $d$-regular $n$ vertex graph.
Let $\a=1/12$ and consider the event
\[
\ul Q = \set{\exists S \seq V: |S| \le n/15 \text{ and $S$ spans at least } \a d|S| \text{ edges }}.
\]
Then $\Pr(\ul Q) \le  n^{-\d}$, for some constant $\d > 0$.
\end{lemma}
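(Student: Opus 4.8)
The plan is to bound $\Pr(\ul Q)$ by a union bound over all sets $S$ of size $s$, for $s$ ranging up to $n/15$, using the configuration model together with the transfer inequality~\eqref{njcwnc82s}. For a fixed set $S$ with $|S| = s$, the number of edges induced by $S$ in a random configuration is a function of which configuration points incident to $S$ are matched to each other. If $S$ spans at least $\a d s$ edges, then at least $2\a d s$ of the $ds$ configuration points of $S$ are matched internally (within $S$). First I would estimate the probability that a particular set of $k = \rdup{\a d s}$ disjoint internal pairs occurs: matching $2k$ specified points among themselves inside the $nd$-point configuration has probability at most $\prod_{j=0}^{k-1} \frac{ds}{nd - 2j - 1} \le \bfrac{ds}{nd - ds}^k \le \bfrac{2s}{n}^k$ (using $s \le n/15$ so $nd - ds \ge nd/2$). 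Multiplying by the number $\binom{ds}{2k}(2k-1)!! \le \bfrac{eds}{2k}^k \cdot \bfrac{2k}{e}^k \cdot$-type bound, more simply $\le \bfrac{(ds)^2}{2k} \cdot \frac{1}{\text{stuff}}$; the clean way is: number of ways to choose $k$ internal pairs out of $ds$ points is at most $\binom{ds}{2}^k / k! \le \bfrac{e (ds)^2}{2k}^k$.

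Combining, the probability that a fixed $S$ of size $s$ spans at least $\a d s$ edges is at most
\[
\bfrac{e(ds)^2}{2k}^k \bfrac{2s}{n}^k \le \bfrac{e d^2 s^2 \cdot 2 s}{2 \a d s \cdot n}^k = \bfrac{e d s^2}{\a s n}^k \cdot (\text{recheck})
\]
— I would carry out this arithmetic carefully so that the base is $\bfrac{C d s}{\a n}^{\a d s}$ for an absolute constant $C$. Then a union bound over the $\binom{n}{s} \le \bfrac{en}{s}^s$ choices of $S$ gives, for sets of size exactly $s$, a bound of the form
\[
\bfrac{en}{s}^s \bfrac{C d s}{\a n}^{\a d s} = \left[ \frac{en}{s} \bfrac{C d s}{\a n}^{\a d} \right]^s.
\]
With $\a = 1/12$ and $d \ge 600$, the exponent $\a d \ge 50$, so the factor $\bfrac{Cds}{\a n}^{\a d}$ is extremely small whenever $s/n$ is bounded below, and it beats the $en/s$ factor. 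The delicate range is small $s$ (say $s = O(\log n)$ or even $s$ constant), where $en/s$ is large; here I would check that $\bfrac{Cds}{\a n}^{\a d} \le \bfrac{Cd}{\a n}^{50} \cdot s^{50}$ wait — rather, for small $s$ the term $(s/n)^{\a d - 1}$ is polynomially small in $n$ since $\a d - 1 \ge 49$ and $s \ge 1$, giving each summand at most $n^{-\d'}$ for some $\d' > 0$; summing over all $s \le n/15$ loses only a factor $n$, still leaving $n^{-\d}$ for a suitable $\d > 0$. Finally I apply~\eqref{njcwnc82s}: since $d \le n/K$ we have $20 d^2$ in the exponent, but the union bound above actually produces a bound of the form $e^{-\Theta(d s \log n)}$ summed over $s$, which for $s \ge 1$ already dominates $e^{20 d^2}$ only if $d \log n \gg d^2$, i.e. only for $d = o(\log n)$ — so the transfer step is the delicate point.

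The main obstacle is precisely reconciling the $e^{20 d^2}$ penalty from passing from configurations to simple graphs with the per-set probability bound, for $d$ as large as $n/K$. For a single small set $S$ the configuration-model probability is only $e^{-\Theta(d \log n)}$-ish (when $s = O(1)$), which is far too weak to absorb $e^{20 d^2}$ when $d$ is large. The resolution must be to NOT apply~\eqref{njcwnc82s} set-by-set, but to first establish $\Pr_{\cC}(\ul Q) \le n^{-\d - 20 d^2 / \log n}$ or, more honestly, to run the whole union bound in the configuration model, obtaining $\Pr_{\cC}(\ul Q) \le \exp(-\Theta(d^2) \cdot \text{something})$ that dominates $20 d^2$. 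Concretely: the worst single summand comes from $s \approx n/15$, where it is $\exp(-\Theta(\a d n))$, and from $s = 1$, where it is roughly $n \cdot \bfrac{Cd}{\a n}^{\a d}$. For the transfer to succeed we need $\min_s(\text{summand}) \cdot e^{20 d^2}$ still small; since $d \le n/K$, $20 d^2 \le 20 dn/K$, while the $s = n/15$ summand contributes exponent $-\Theta(d n)$ which beats $20 dn/K$ for $K$ large; and the small-$s$ summands contribute exponent $\approx -(\a d - 1)\log n + O(d) \le -49 \log n + O(d)$, which beats $20 d^2$ only when $d \lesssim \sqrt{\log n}$ — this is genuinely a problem and I expect the actual proof either restricts attention implicitly, or (more likely) uses a sharper per-set bound: for small sets one should use that $S$ spanning $\a d s$ edges with $s$ small forces average internal degree $\ge 2\a d = d/6$, and the probability that even two points of a fixed small $S$ are internally matched is $\approx (ds)^2/(nd)$, so $k$ internal pairs costs $\approx ((ds)^2/(nd))^k / k!$ which for $k = \a d s$ and small $s$ is like $(ds/n)^{\a d s}$, i.e. exponent $-\a d s \log(n/(ds)) \approx -\a d s \log n$; for this to beat $20 d^2$ we need $\a s \log n \gtrsim 20 d$, i.e. $s \gtrsim d/\log n$. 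So the cleanest route: split into $s \le d/\log n$ handled by an even more careful direct argument (or by noting such tiny sets cannot span $\a d s = \Theta(ds)$ edges for structural reasons when combined with the girth/short-cycle structure of random regular graphs), and $s \ge d / \log n$ handled by the union bound above where the exponent $-\Theta(d s \log n) \le -\Theta(d^2)$ absorbs $e^{20 d^2}$ for $K$ large. I would present the clean union-bound computation for the bulk range and handle the boundary small-$s$ regime separately, flagging that the first-moment estimate must be done inside $\cC_{n,d}$ before invoking~\eqref{njcwnc82s}.
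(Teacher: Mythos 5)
Your overall strategy --- a first-moment union bound carried out inside the configuration model, with the transfer to simple graphs via \eqref{njcwnc82s} applied only once at the end --- is the same as the paper's, and you have correctly diagnosed the structural danger that the $e^{20d^2}$ penalty cannot be absorbed set-by-set for large $d$. However, the small-set regime that you leave open (a split at $s \le d/\log n$, to be handled by ``an even more careful direct argument'' or by ``girth/short-cycle structure'') has a one-line deterministic resolution that you circle around but never state: a set of size $S$ spans at most $\binom{S}{2}$ edges in a simple graph, and $\binom{S}{2} < \alpha d S$ whenever $S \le 2\alpha d = d/6$, so the event $\ul{Q}$ is simply impossible for $|S| \le d/6$ and the union bound need only run over $S \ge d/6$. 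In that range the exponent is $-\Theta(\alpha d S \log(n/d)) = -\Theta(d^2 \log(n/d))$, which simultaneously yields the $n^{-\delta}$ bound and swallows $e^{20d^2}$, precisely because $d \le n/K$ with $K$ large. No randomness or girth argument is needed, and your threshold $d/\log n$ is not the relevant one.

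The more serious gap is the per-set probability estimate. As written it double-counts: you multiply the number of ways to choose $k$ specific internal pairs by the probability that $2k$ points are internally matched \emph{in some way}, which overcounts by a factor of order $(ds)^k$ and is why a spurious $d$ survives in your base $Cds/(\alpha n)$. Even after correcting to the true base $\Theta(S/n)$, the crude bound $\left(eS/(2\alpha n)\right)^{\alpha d S}$ is vacuous at the top of the range: at $S = n/15$ the base is $6e/15 > 1$, exactly where the entropy factor $\binom{n}{S}$ is largest. This is where the paper's real work lies: it uses the exact double-factorial expression for $\rho(S,k)$, keeps the lower-order Stirling corrections, proves $\log F$ is convex so that the maximum over $s \in [d/6n, 1/15]$ is attained at an endpoint, and then evaluates $F(1/15) = (0.999514)^d/0.782759$ numerically --- which is below $1$ only for $d$ above roughly $500$, and is the entire reason the lemma hypothesizes $d \ge 600$. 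A union bound with loose constants of the kind you propose cannot recover this razor-thin margin, so the large-$S$ end of your argument does not close as described.
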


\begin{lemma}\label{phaseII-randomgraph}
Let $ d > 10$ and $G$ be a random $d$-regular $n$-vertex graph
with votes $A$ and $B$.  
There is a constant $c >0$ (independent of $d$) such that 
the minority vote $B$ reduces from $cn$
to $\omega$ within $O(\log n)$ steps with
probability at least $1-e^{-\Theta(\omega)} - o(1/n)$.
\end{lemma}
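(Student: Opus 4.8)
The plan is to reduce Lemma~\ref{phaseII-randomgraph} to two facts already available: the sparsity estimate of Lemma~\ref{span} (for graphs of large degree) and Corollary~\ref{bcwc891-cor} (respectively Lemma~\ref{phaseII-expander}), which turn a ``no small dense subgraph'' property into the desired reduction of the minority vote. The only real work is to verify, \whp\ over the choice of the random $d$-regular graph, a hypothesis of the form ``every set $S$ of size at most $\gamma n$ spans at most $\alpha d|S|$ edges with $\alpha \le 3/10$'' for a fixed $\gamma>0$, and then to combine the graph-failure probability with the $e^{-\Theta(\omega)}$ failure probability of the voting steps.

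\textbf{Large degree.} For $600 \le d \le n/K$ I would apply Lemma~\ref{span} with $\alpha = 1/12$: with probability at least $1 - n^{-\delta}$ the graph $G$ has the property that no $S$ with $|S| \le n/15$ spans more than $(1/12)d|S| \le (3/10)d|S|$ edges. Tracing the first-moment computation behind Lemma~\ref{span} with these parameters, the exponent $\delta$ can be taken large enough that $n^{-\delta} = o(1/n)$, since the dominant small-set term there is a very high negative power of $n$ when $\alpha d = d/12 \ge 50$. This is exactly the hypothesis of Corollary~\ref{bcwc891-cor} with $g = 1/15$, so on this event the minority $B$ drops from $(3/13)(1/15)n = n/65$ to $\omega$ within $O(\log n)$ steps with probability at least $1 - e^{-\Theta(\omega)}$.

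\textbf{Small degree.} The remaining range $10 < d < 600$ is a finite set of degrees, and $d > 10$ is exactly what guarantees a strict gap $3/5 - 2\sqrt{d-1}/d > 0$ (at $d=10$ this gap vanishes). I would handle it the way the preceding discussion suggests, via Friedman's bound~\cite{Fri03} $\lambda_G \le (2\sqrt{d-1}+\epsilon)/d < 3/5$ (for a suitably small fixed $\epsilon$) together with Lemma~\ref{phaseII-expander}, which then reduces $B$ from $(3/13)(3/5 - \lambda_G)n$ to $\omega$ in $O(\log n)$ steps. To keep the graph-failure probability at $o(1/n)$ rather than merely $o(1)$, an alternative is to repeat the first-moment argument of Lemma~\ref{span} with $\alpha = 2/d$ (which satisfies $2/d < 3/10$ for every $d \ge 11$): since $\alpha d = 2$, a set of size $s$ can be dense only if $s \ge 5$, the small-$s$ contribution to the first moment is $O(n^{-5})$ times a constant depending only on $d < 600$, and the contribution of sets of size $s \le \gamma n$ for a small absolute constant $\gamma$ (say $\gamma = 1/200$) is $e^{-\Theta(n)}$; hence with probability $1 - o(1/n)$ no $S$ with $|S| \le \gamma n$ spans more than $2|S| \le (3/10)d|S|$ edges, and Corollary~\ref{bcwc891-cor} with $g = \gamma$ finishes this case.

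\textbf{Conclusion.} Taking $c = \min\{1/65,\ (3/13)\gamma\} > 0$, a fixed constant independent of $d$, for every $d > 10$ the random graph satisfies the relevant sparsity property with probability $1 - o(1/n)$, and on that event the voting process carries $B$ from $cn$ down to $\omega$ in $O(\log n)$ steps with probability at least $1 - e^{-\Theta(\omega)}$; a union bound gives the stated $1 - e^{-\Theta(\omega)} - o(1/n)$. I expect the main (minor) obstacle to be the boundary range $10 < d < 600$, which lies outside the scope of Lemma~\ref{span}: it is dispatched because it consists of finitely many degrees, each with a uniform positive spectral/sparsity gap, and because the first-moment bounds involved are polynomially small with an exponent one can push above $1$ (or one simply invokes a quantitative form of Friedman's theorem). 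One should also double-check that the constant $c$ emerging from the small-degree case does not degenerate, which it does not since already $d = 11$ leaves the fixed gap $3/5 - 2\sqrt{10}/11 > 0$.
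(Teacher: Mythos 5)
Your proposal follows the paper's proof essentially verbatim: Lemma~\ref{span} plus Corollary~\ref{bcwc891-cor} with $g=1/15$ (hence $c=1/65$) for $600 \le d \le n/K$, and Friedman's bound $\l_G \le (2\sqrt{d-1}+\epsilon)/d < 3/5$ together with Lemma~\ref{phaseII-expander} for $11 \le d < 600$. The only caveat concerns your optional alternative for the small-degree range: with $\a = 2/d$ the first-moment union bound over sets of size $s=\gamma n$ behaves like $\brac{e^3\gamma d^2/16}^s$, so $\gamma$ must be taken of order $d^{-2}$ (not $1/200$) when $d$ is close to $600$ — but since that route is supplementary and the degree range is finite, this does not affect the correctness of your main argument.
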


\begin{proof}
For $11 \le d < 600$ use Lemma~\ref{phaseII-expander}:
in this case, a random $d$-regular graph has $\l_G \le (2\sqrt{d-1} + \epsilon)/d < 3/5$.

For $600 \le d \le n/K$, Lemma~\ref{span} implies that
with probability at least $1 - o(1/n)$, $E(S) \le (1/12) dS < (3/10)dS$, for each subset of vertices
$S$ of size at most $n/15$.
Thus, applying Corollary~\ref{bcwc891-cor} with $g = 1/15$,
we conclude that the minority vote $B$ is reduced from $(3/13) g n = (1/65)n $ to at most $\omega$
within $O(\log n)$ steps with
probability at least $(1-o(1/n))(1-e^{-\Theta(\omega)})$.
\end{proof}

\subsection{Proof of Lemma~\ref{lemPhase2}}

Consider first the following special case.
For each vertex $v\in B$, $d_v^A = (1-2\a)d$ (so $|E(B)| = \b d B$),
and for each $v\in A$, $d_v^B$ is either $0$ or $\a d$.
Since $\sum_{v\in A} d_v^B = \sum_{v\in B} d_v^A$, then
the number of vertices $v$ in $A$ with $d_v^B = \a d$ is
$B(1-2\a)/\a$, so in this case
the expected increase of the $A$ vote is equal to
\[ \E\D \; = \; (1-2\a)^2 B - \a^2 B(1-2\a)/\a
      \; = \; (1-2\a)(1-3\a)B.
\]
The proof of Lemma~\ref{lemPhase2} is based on confirming that this is the worst case,
that is, we always have
\begin{equation}\label{ngsdjkfbi819}
 \E\D \; \ge \; (1-2\a)(1-3\a)B.
\end{equation}

Let $|E(B)| = \b d B$. Since we must have $|E(B)| \le \a dB$, then
$0 \le \b \le \a$.
We have
\[ \sum_{v\in A} d_v^B \; = \; \sum_{v\in B} d_v^A \; = \; dB - 2|E(B)| \; =\; dB(1-2\b),
\]
so
\begin{equation}\label{bwhecwvc1}
\E \D_{BA} = \sum_{v \in B} \bfrac{d_v^A}{d}^2
    \; \ge \; \frac{1}{Bd^2} \brac{\sum_{v \in B} d_v^A}^2
    \; = \; \frac{1}{Bd^2} \brac{dB - 2|E(B)|}^2
    \; = \; B(1-2\b)^2.
\end{equation}

To bound $\E\D_{AB}$, we define $C = \{ v \in A: d_v^B > \a d \}$.
We have
\[ \a d C \; \le \; |E(C,B)| \; \le \; dB,
\]
so $C+B \le (1 + 1/\a)B$, and the assumptions of the lemma imply that
\begin{equation}\label{ncwehc901-asa}
 |E(C\cup B)| \le \a d(C+B).
\end{equation}

For $v \in C$, we write $d_v^B$ as a linear combination of $d$ and $\a d$:
\[ d_v^B = x_v d + y_v (\a d); \;\;\; x_v + y_v = 1; \;\;\; x_v, y_v \ge 0.
\]
For $v \in A\setminus C$, we define
$y_v = d_v^B/(\a d)$, so $0 \le y_v \le 1$, and set $x_v = 0$.
We also define $X = \sum_{v\in A} x_v$,
$Y = \sum_{v\in A} y_v$, and
$Y_C = \sum_{v\in C} y_v$.
Using this definitions,
\[ dB(1-2\b) = \sum_{v\in A} {d_v^B} = dX + \a d Y,
\]
so
\begin{equation}\label{bbjqs781}
 \a Y = B(1-2\b) - X.
\end{equation}
Furthermore,
\begin{equation}\label{ncwehc901}
 |E(C\cup B)| = \sum_{v\in C} d_v^B + \b dB = dX + \a d Y_C + \b dB,
\end{equation}
and
\begin{equation}\label{ncwehc901-xx}
 C = X + Y_C.
\end{equation}
Using~\eqref{ncwehc901} and~\eqref{ncwehc901-xx} in~\eqref{ncwehc901-asa}, we get
\[ X + \a  Y_C + \b B \le \a(X + Y_C +B),
\]
so
\begin{equation}\label{bbjqs781-a}
 X \le \frac{\a - \b}{1-\a} B.
\end{equation}
We use~\eqref{bbjqs781} and~\eqref{bbjqs781-a} in bounding $\E\D_{AB}$:
\begin{eqnarray}
 \E \D_{AB} &  = & \sum_{v \in A} \bfrac{d_v^B}{d}^2
      \; = \; \sum_{v \in C} (x_v + \a y_v)^2 + \sum_{v \in A\setminus C} (\a y_v)^2 \nonumber \\
      & \le & \sum_{v \in C} (x_v + \a^2 y_v)  + \sum_{v \in A\setminus C} \a^2 y_v \nonumber \\
      & = & X + \a^2 Y
      \; = \; X + \a (B(1-2\b) - X)
      \; = \; X(1-\a) + B\a (1-2\b) \nonumber \\
      & \le & B(\a - \b) + B\a (1-2\b)
      \; = \; B(2\a - \b - 2\a\b). \label{bwhecwvc2}
\end{eqnarray}
The bounds~\eqref{bwhecwvc1} and~\eqref{bwhecwvc2} give
\[ \E\D \; = \; \E \D_{BA} - \E \D_{AB}
     \; \ge \; B((1-2\b)^2 - 2\a + \b + 2\a\b)
     \; = \; B \cdot f_\a(\b),
\]
where
\[ f_\a(\b) = 4\b^2 - (3-2\a)\b + (1-2\a).\]
We check that $f'_\a(\b) = 8\b - 3 + 2\a \le 10\a -3 \le 0$, for $0 \le \b \le \a \le 3/10$, so
the minimum value of $f_\a(\b)$ for $0 \le \b \le \a$ is
$f_\a(\a) = (1 - 2\a)(1 - 3\a) = 2\g$, and the bound~\eqref{ngsdjkfbi819} holds.

\ignore{
As long as $B = \omega(\sqrt{n \log n})$, we can apply the Hoeffding inequality to 
show the statement of the lemma. However, for the case $B = O(\sqrt{n \log n})$ this inequality does not imply 
our result. Therefore, we bound $\D_{AB}$ and $\D_{BA}$ separately.
}

Thus $\E\D \ge 2\g B$ and we show now that $\D$ is at least $\g B$ with high probability,
by showing that \whp\ $\D_{AB}$ and $\D_{BA}$ do not deviate from their expectations by 
more than $\g B/2$.
For $\D_{BA}$, using~(\ref{Cher1_new}) with $\e = \e_1 = \g/(2(1-2\b)^2) < 1$, we obtain
\begin{eqnarray} \label{eq_Chernoff_DBA}
\Pr\brac{\D_{BA} \le (1-2\b)^2 B - \frac{\g}{2} B} 
 & = & \Pr\brac{\D_{BA} \le \brac{1 - \frac{\g}{2 (1-2\b)^2}} (1-2\b)^2 B} \nonumber \\
 & = & \Pr(\D_{BA} \le \brac{1 - \e_1} (1-2\b)^2 B) 
 \;\; \le \;\;   e^{-\gamma' B},  \label{xmsciqps}
\end{eqnarray}
where $\gamma' = (\g/(2(1-2\beta)))^2 / 3 >0$ is a constant.

For $\D_{AB}$, 
$\e = \e_2 = \g/(2(2\alpha-\beta-2\alpha \beta))$.
%
%
Now we cannot guarantee that 
$\epsilon <1$. As long as $(1-2\alpha)(1-3\alpha) < 4(2\alpha-\beta-2\alpha \beta)$, we apply~(\ref{Cher1}) and obtain
\begin{eqnarray*} \
\Pr\brac{\D_{AB} \ge (2\alpha - \beta-2 \alpha \beta) B + \frac{\g}{2} B} 
&=& \Pr(\D_{AB} \geq (1+\epsilon_2) (2\alpha-\beta-2\alpha \beta) B) \\
&\leq& \exp\left\{ -\frac{(1-2\alpha)^2(1-3\alpha)^2}{16(2\alpha-\beta-2\alpha \beta)} \cdot 
\frac{B}{2}\right\} =
e^{-\gamma'' B} ,
\end{eqnarray*}
where $\gamma'' = \frac{(1-2\alpha)^2(1-3\alpha)^2}{32(2\alpha-\beta-2\alpha \beta)} >0$ is a constant. If now $\epsilon \geq 1$, we 
apply (\ref{Cher2}), and obtain 
\begin{eqnarray*} 
\Pr\brac{\D_{AB} \ge (2\alpha - \beta-2 \alpha \beta) B + \frac{\g}{2} B} 
&=&\Pr(\D_{AB} \geq (1+\epsilon_2) (2\alpha-\beta-2\alpha \beta) B) \\
&\leq& \left(\frac{e^\epsilon}{(1+\epsilon)^{1+\epsilon}}  \right)^{(2\alpha-\beta-2\alpha \beta)B}
= e^{-\gamma''' B} ,
\end{eqnarray*}
where $\gamma''' = \ln((1+\epsilon)^{(1+\epsilon)/\epsilon}/e) (1-2\alpha)(1-3\alpha)/4$. 

The bounds above imply that 
\begin{eqnarray*}
 \lefteqn{\Pr(\D \le \g B)} \\ 
   & \le & \Pr\brac{\D_{BA} - \D_{AB} \: \le \: ((1-2\b)^2 B - \frac{\g}{2}B) 
         - ((2\alpha-\beta-2\alpha \beta) B + \frac{\g}{2}B)} \\
  & \le & \Pr\brac{\D_{BA} \le (1-2\b)^2 B- \frac{\g}{2} B }
 + \Pr\brac{\D_{AB} \geq (2\alpha-\beta-2\alpha \beta) B + \frac{\g}{2}B}
 \le e^{-\tilde{\gamma}B}.
\end{eqnarray*}


\subsection{Proof of Lemma~\ref{span}}
Let $S$ also denote the size of set $S$.
If $S \le 2\a d$, then
the set $S$ has ${s \choose 2} < \a d S$ slots for edges, so it is not possible to insert
$\a d S$ edges into $S$. We therefore assume that $2\a d +1 \le S \le n/15$.

The proof for this case is in the configuration model.  Let $\r(S,k)$ be the probability
a set size $S$ spans at least $k$ edges. Then, using the notation $m!!=m!/(2^{m/2}(m/2)!)$,
\[
\r(S,k) = {dS \choose 2k} 2k!! \frac{(dn-2k)!!}{dn!!}.
\]
After some simplification, we have that
\[
\r(S,k) = O(1) \frac{(dS)^{dS}}{k^k 2^k (dS-2k)^{dS-2k}} \frac{(dn-2k)^{(dn-2k)/2}}{dn^{dn/2}}.
\]
To simplify further, let $k=\a d S$, and $S=sn$. Then
\begin{align*}
\r(S,k)=& O(1) \brac{ \bfrac{s}{2\a}^{\a s}\frac{(1-2\a s)^{1/2-\a s}}{(1-2\a)^{s(1-2\a)}}}^{dn}\\
&= O(1) [f(s,\a)]^{dn}.
\end{align*}
Thus, the probability that a random $d$ regular $n$-vertex graph has the
property $\ul Q$ is 
\begin{align*}
\Pr(\ul Q)  \le& \; O(1) \cdot e^{20d^2} \cdot 
       \sum_{S=d/6}^{n/15} {n \choose S} [f(S/n, 1/12)]^{dn} \\
=& \; O(1) \cdot e^{20d^2} \cdot \sum \brac{ \frac{1}{s^s(1-s)^{1-s}}
\left[ (1-s/6)^{1/2} \brac{ \frac{36 s}{6-s}  \bfrac{6}{5}^{10}}^{s/12}\right]^d}^n \\
=& \; O(1) \cdot e^{20d^2} \cdot \sum [F(s)]^n.
\end{align*}
The term $e^{20 d^2}$ takes into account that we only consider simple graphs.
%
%
$F(x)$ can be written as
\begin{align*}
F(x) = & \frac{1}{x^x(1-x)^{1-x}} \bfrac{6}{5}^{5dx/6}\frac{ 6^{dx/6}}{6^{d/2}} \left[(6-x)^{6-x} x^x\right]^{d/12}.
\end{align*}
The second derivative of $\log F(x)$ is
\[
\frac{\partial^2 }{\partial x^2}\log F(x) = \frac{1}{x}\brac{\frac{d}{12}-1} +\frac{d}{12} \frac{1}{6-x} -\frac{1}{1-x}.
\]
Provided $d \ge 132$, this is strictly greater than zero for all $x\in(0,1/2)$.
Thus $\log F(x)$ is convex, and is either
monotone increasing, monotone decreasing, or has a unique minimum in the range $x \in [d/6n, 1/15]$.
To find the maximum of $F(x)$ it suffices to evaluate the function at
$x=d/6n$ and $x= 1/15$. Thus, assuming $d \le n/K$ for a large constant $K$,
\begin{align*}
\Pr(\ul Q)  \le& \; O(n) \max \brac{[F(d/6n)]^n, [F(1/15)]^n} \cdot e^{20d^2}.
\end{align*}
However $F(1/15)=((0.999514)^d/0.782759)$, which gives $F(1/15)=0.954335$ when $d=600$. Thus for
$d \ge 600$ we have $[F(1/15)]^{n}  \le e^{-\Theta(dn)}$.
To bound  $[F(d/6n)]^n$, observe that $F(x) \le (c x)^{x(d/12 - 1)}$, for some constant $c > 0$ and
for all $x\in (0,1/2)$.
Hence, for a positive $d \le n/K$, we have $[F(d/6n)]^n \le e^{-\Theta(d^2 \log(n/d))}$, so
for $600 \le d \le n/K$,
\begin{align*}
\Pr_{}(\ul Q)  \le& \; \; e^{-\Theta(d^2 \log(n/d))}
 \; \le \; n^{-\d},
\end{align*}
where $\d >0$ is a constant.
%
%


\section{Phase III of analysis: $1 \le B \le \om$}

\ignore{
\begin{lemma}
[\bf Case $d \rai$ with $n$]
If $d \rai$ with $n$, $G$ is any $n$-vertex $d$-regular graph with the initial size of the $B$-vote
at most $d^{1/4}$, then \whp\ in one step $B$ is empty.
\end{lemma}

\begin{proof}
With the conditions of the lemma, for each vertex $v$ of $A$, $d^B_v \le B \le d^{1/4}$,
and there are at most $Bd$ edges between sets $A$ and $B$,
so the expectation of the number of vertices which change in one round from $A$ to $B$ is
\[
\E \D_{AB} \; = \;  \sum_{v \in A} \bfrac{d_v^B}{d}^2
  \; \le \;  \sum_{v \in A} \frac{B d_v^B}{d^2}
  \; = \; \bfrac{B^2}{d} \le \frac{1}{d^{1/2}}.
\]
On the other hand, any vertex $v \in B$ has $d^A_v \ge d-B$, so $\E \D_{BA} \ge B(1-2B/d)$.
Let
$B'= B-\D_{BA}+\D_{AB}$
be the size of $B$ at the end of the round.
\[
\E B' \; \le \; \frac{2B^2}{d}  + \frac{1}{d^{1/2}}
\; \le \; \frac{2}{\sqrt{d}}+\frac{1}{d^{1/4} }
\; \le \; \frac{3}{d^{1/2}},
\]
so that
\[
\Pr( B' \ge 1) \; \le \; \frac{3}{d^{1/2}}.
\]
\end{proof}
}

\begin{lemma}
\label{d-constPhaseIII}
Let 
$\om = \om(1)$
%
%
grow with $n$ and 
$\om = o(n)$.
%
%
Let $G$ be a $d$-regular $n$-vertex graph
such that for each subset of vertices $S$ of size at most $(13/3)\om$,
$|E(S)| \le (3/10) d S$.
Then the minority vote $B$ is reduced from $\om$ to $0$
within $O(\om\log \om)$ steps with
probability at least $1-e^{-\Theta(\om)}$.
\end{lemma}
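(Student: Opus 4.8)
The plan is to view the minority count $B(t)$ as a stochastic process that (i)~has multiplicative downward drift at every step, (ii)~is absorbed at $0$ (once $B(t)=0$ every vertex sees two $A$-neighbours, so nothing changes and the voting is complete), and (iii)~with overwhelming probability never climbs above $\om$. Given (iii), the hypothesis of the lemma --- a bound on the edge density of \emph{every} set of size at most $(13/3)\om$ --- is exactly what is needed to invoke Lemma~\ref{lemPhase2} with $\a=3/10$ at every step, and then a stopped‑supermartingale computation finishes the job.

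\emph{The drift.} Fix $\a=3/10$, so $\gamma=\gamma(3/10)=\frac12(1-\frac35)(1-\frac9{10})=\frac1{50}$. While $1\le B(t)\le\om$, every superset $S\supseteq B(t)$ with $|S|\le(1+1/\a)|B(t)|=(13/3)|B(t)|\le(13/3)\om$ satisfies $|E(S)|\le(3/10)d|S|$ by hypothesis, and $A=n-B(t)>B(t)$ since $B(t)\le\om=o(n)$; hence the expectation inequality~\eqref{ngsdjkfbi819} from the proof of Lemma~\ref{lemPhase2} applies and gives $\E[\D_t\mid\cF_t]\ge(1-2\a)(1-3\a)B(t)=2\gamma B(t)$. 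Since $B(t+1)=B(t)-\D_t$, this yields
\[
\E[B(t+1)\mid\cF_t]\;\le\;(1-2\gamma)B(t)\;=\;\tfrac{24}{25}B(t),\qquad\text{whenever }B(t)\le\om,
\]
and trivially $\E[B(t+1)\mid\cF_t]=0$ once $B(t)=0$. The bound is worst‑case over the adversary's redistribution at the start of step $t+1$, so an adaptive adversary does not help. Note that the per‑step \emph{high‑probability} statement of Lemma~\ref{lemPhase2} would not suffice here, since on its exceptional event $B$ could in principle grow without control; it is the \emph{expectation} bound~\eqref{ngsdjkfbi819} that we use.

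\emph{Staying below $\om$.} Let $\tau=\inf\{t : B(t)>\om\}$. For $t$ with $B(t)\le\om$ write $B(t+1)=B(t)-\D_{BA}+\D_{AB}$; from the proof of Lemma~\ref{lemPhase2} (inequalities~\eqref{bwhecwvc1} and~\eqref{bwhecwvc2}) $\D_{AB}$ is a sum of independent $\{0,1\}$ variables with $\E\D_{AB}\le 2\a B(t)\le\tfrac35\om$, $\D_{BA}$ is a sum of independent $\{0,1\}$ variables with $\E\D_{BA}\ge(1-2\a)^2B(t)$, and $\E[\D_{AB}-\D_{BA}\mid\cF_t]\le-2\gamma B(t)$. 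Since $B(t)\le\om$, the event $\{B(t+1)>\om\}$ forces $\D_{AB}-\D_{BA}$ to exceed its conditional mean by more than $(\om-B(t))+2\gamma B(t)=\om-\tfrac{24}{25}B(t)\ge\om/25$, hence forces $\D_{AB}\ge\E\D_{AB}+\om/50$ or $\D_{BA}\le\E\D_{BA}-\om/50$. Splitting the deviation this way and applying the Chernoff bounds~\eqref{Cher1_new}--\eqref{Cher2} to the two sums (the only care being that when $\om/50>\E\D_{AB}$ one invokes~\eqref{Cher2} rather than~\eqref{Cher1}) gives $\Pr(B(t+1)>\om\mid\cF_t,\,B(t)\le\om)\le e^{-\Theta(\om)}$, and a union bound over $t<T$ gives $\Pr(\tau\le T)\le T\,e^{-\Theta(\om)}$. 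This step is where the real work lies, and I expect it to be the main obstacle: the delicate regime is $B(t)$ close to $\om$ (so $\om-B(t)$ is tiny and one must exploit the $2\gamma B(t)=\Theta(\om)$ drift term to still get a $\Theta(\om)$‑scale deviation), as opposed to $B(t)$ small (where $\D_{AB}$ alone suffices but only through the stronger tail bound~\eqref{Cher2}).

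\emph{Conclusion.} By the drift bound, $Z(t):=B(t\wedge\tau)\,(25/24)^{t\wedge\tau}$ is a nonnegative supermartingale (check the cases $\tau\le t$ and $\tau>t$ separately) with $\E Z(0)=B(0)\le\om$. On $\{B(t)\ne0\}\cap\{\tau>t\}$ we have $Z(t)=B(t)(25/24)^t\ge(25/24)^t$, so Markov's inequality gives $\Pr(B(t)\ne0,\,\tau>t)\le\om(24/25)^t$, and therefore $\Pr(B(t)\ne0)\le\om(24/25)^t+T\,e^{-\Theta(\om)}$ for $t\le T$. Taking $T=O(\om\log\om)$ (in fact $T=\Theta(\om)$ already works) makes both terms $e^{-\Theta(\om)}$, since $\om\to\infty$ absorbs the polynomial factor $T$. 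Thus with probability at least $1-e^{-\Theta(\om)}$ we have $B(T)=0$, i.e.\ the voting is completed within $O(\om\log\om)$ steps, as claimed.
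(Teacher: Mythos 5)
Your proof is correct, but the final amplification step is genuinely different from the paper's. The paper also first controls the event that $B$ ever climbs above $\om$ (via a union bound over $T$ steps), but then finishes by a restart argument: it shows that a block of $r=O(\log\om)$ consecutive ``successful'' steps (each shrinking $B$ by the factor $1-\g$ of Lemma~\ref{lemPhase2}) drives $B$ to $0$ with some \emph{constant} probability $p>0$ --- constant because the per-step failure probabilities $e^{-\tilde\g(1-\g)^{i-1}\om}$ stop being small once $(1-\g)^{i-1}\om=O(1)$ --- and then repeats $\om$ such blocks to get failure probability $(1-p)^{\om}=e^{-\Theta(\om)}$, which is where the $O(\om\log\om)$ step count comes from. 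You instead bypass the per-step high-probability statement entirely in the endgame and use only the deterministic expectation bound~\eqref{ngsdjkfbi819}, turning $Z(t)=B(t\wedge\tau)(25/24)^{t\wedge\tau}$ into a stopped supermartingale and applying Markov's inequality; this is cleaner, avoids the regime where Lemma~\ref{lemPhase2}'s tail bound degenerates, and actually yields the stronger conclusion that $T=\Theta(\om)$ steps suffice. Your treatment of the ``never exceeds $\om$'' event is also more careful than the paper's: the paper asserts informally that smaller $B$ makes an overshoot less likely, whereas you quantify the required deviation as at least $\om-\frac{24}{25}B(t)\ge\om/25$ and bound it directly by Chernoff estimates \eqref{Cher1_new}--\eqref{Cher2} on $\D_{AB}$ and $\D_{BA}$ separately (using the expectation bounds \eqref{bwhecwvc1} and \eqref{bwhecwvc2}), which is the right way to handle the delicate case $B(t)$ close to $\om$. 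Both arguments rely on the same underlying input, namely the structure of the proof of Lemma~\ref{lemPhase2} applied with $\a=3/10$ to supersets of size at most $(13/3)\om$; your route buys a sharper time bound and a tighter argument at the cost of opening up that proof rather than quoting the lemma as a black box.
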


\begin{proof}
Lemma \ref{lemPhase2}
implies that if $B = \omega$, then 
the probability that the next step increases $B$ above $\om$ is less than 
$e^{-\tilde{\g}\om}$, where $\tilde{\g} >0$.
%
%
This also implies that if $B < \om$, then 
the probability that the next step increases $B$ above $\om$ is also less than 
$e^{-\tilde{\g}\om}$ (smaller $B$ means smaller probability that one step will
increase this $B$ above $\om$).
Therefore, the probability that $B$ increases above $\om$ in any of the $T= O(\om \log \om )$
steps is at most $Te^{-\tilde{\g}\om} = e^{-\Theta(\om)}$.

Let $r = \lfloor \log \om / \log (1/(1-\g)) \rfloor +1 = O(\log \om)$, so that 
$(1-\g)^{r-1} \om \ge 1$ but $(1-\g)^{r} \om < 1$.
The initial size of the minority vote $B$ is $B_0 \le \om$.
We say that step $i$ is successful, if the size of the $B$ vote at the 
end of this step is $B_i \le (1-\g)^i \om$.
If the steps $1,2,\ldots, i-1$ are successful, then $B_{i-1} \le (1-\g)^{i-1} \om$
and the probability that $B_i \le  (1-\g)^{i} \om$ (that is, the probability that step $i$ is 
successful) is at least the probability that 
a $B$ vote of size $(1-\g)^{i-1} \om$ reduces in one step to $(1-\g)^{i}\om$,
which is at least $1-\exp\{-\tilde{\g} (1-\g)^{i-1} \om \}$ (Lemma~\ref{lemPhase2}).
Therefore
\begin{eqnarray}
 \Pr\brac{B_r = 0} 
    & \ge & \Pr\brac{\mbox{all steps $1,2, \ldots, r$ are successful}} \nonumber \\
     & \ge & \Pi_{i=1}^r \brac{1 - \exp\{-\tilde{\g} (1-\g)^{i-1} \om\}}
     \; = \; p > 0,
\end{eqnarray}
where $p$ is a positive constant.
Thus $B$ is reduced from $\om$ to $0$ within $r = O(\log \om)$ steps with constant (positive) probability.
Consider now a sequence of $\om r = O(\om \log \om)$ steps, 
viewed as $\om$ phases, each consisting of $r$ steps.
If this sequence of steps does not reduce $B$ from $B_0 \le \om$ to $0$, then
there is a step which increases $B$ above $\om$ or
each phase starts with $B \le \om$ but fails to reduce $B$ to $0$. 
This means that the probability that $\om r$ steps do not reduce $B_0 \le \om$ to $0$ 
is at most $e^{-\Theta(\om)} + (1-p)^\om = e^{-\Theta(\om)}$.
\end{proof}

\ignore{
\begin{lemma}
\label{d-constPhaseIII}
Let $G=(V,E)$ be a $d$-regular $n$-vertex graph that satisfies the assumptions given
in Lemma \ref{lemPhase2} w.r.t.~any set $B$ with size at most $\log n$. Furthermore, assume that the initial size of $B$ is
at most $\log \log n$. Then, \whp~$B$ becomes empty in $(\log \log n)^{O(1)}$ steps.
\end{lemma}

\begin{proof}
As shown in Lemma \ref{lemPhase2}, if $B \leq \log \log n$, then $B$ increases in one step to
some value lager than $\log \log n$ with probability $e^{-\Omega(\log \log n)}$. Now, assume that at some
time $t$ the number of $B$-verties is less than $\log \log n$. We consider $s=T (\log \log n)^T \cdot
\log \log \log n$ steps after $t$. Let $E$ be the event that there is some step, in which
$B > \log \log n$. Clearly, $\Pr(E) \leq s e^{-\Omega(\log \log n)} = \log^{-\Omega(1)} n$.

Now we divide the time range $[t+1, t+s]$ into $(\log \log n)^T$ epochs, each of length $T \log \log \log n$.
According to the assumptions of Lemma \ref{lemPhase2}, $\alpha < 3/10$ and
$\gamma = 1/2 \cdot (1-2\alpha)(1-3\alpha) > 1/50$.
Thus, $B$ decreases in each step to some value less than
$49 \cdot B/50$ with at least some constant probability
$p = 1-e^{-\gamma/6}$.
This implies that
in all steps of an epoch
the number of $B$-vertices decreases by a factor of $49/50$
(or it reduces in one steps
from $1$ to $0$)
with probability at least $p^{T \log \log \log n} = (\log \log n)^{-O(1)}$. Thus, if
$T \geq 1/\log (50/49) + 1$ and $B \leq \log \log n$ at the beginning of an epoch, then all vertices are
in $A$ at the end of the epoch with the probability given before. Let this probability be $q$.

Denote by $E_i$ the event that in all steps of the $i^{\mbox{th}}$ epoch the number of $B$-vertices decreases
by a factor of $49/50$ (or it reduces in some step
from $1$ to $0$). As given above $\Pr(E_i) \geq q$.
Thus, $$\Pr(\cap_{i} {\overline{E_i}}) \leq (1-q)^{(\log \log n)^T} = \log^{-\Omega(1)} n$$
whenever $T$ is large enough. Let $E'$ be the event $\cap_{i} {\overline{E_i}}$. Then,
$$\Pr(E \cup E') \leq \log^{-\Omega(1)} n$$
and the lemma follows.
\end{proof}
}

\begin{corollary}\label{nvkq72}
Let
$\om = \om(1)$
%
%
grow with $n$ and 
$\om = o(n)$.
%
%
If $G=(V,E)$ is a $d$-regular expander with 
$\lambda_G < 3/5$
or it is a random $d$-regular graph with
$d > 10$, then voting reduces $B$ from at most $\om$ to $0$ in $O(\om \log \om)$ steps
with probability at least $1-e^{-\Theta(\om)}$.

\end{corollary}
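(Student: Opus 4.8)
The plan is to obtain Corollary~\ref{nvkq72} directly from Lemma~\ref{d-constPhaseIII}, whose only hypothesis beyond $\om = \om(1)$, $\om = o(n)$ (which matches the statement here) is that every vertex set $S$ with $|S| \le (13/3)\om$ satisfies $|E(S)| \le (3/10)dS$. So the task reduces to verifying this small-set sparsity for the two graph families and then combining the failure probabilities. Since $\om = o(n)$, the sets relevant to Phase~III have size $o(n)$, and for both families the sparsity of such sets is already available.

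For the expander case I would reuse the conductance estimate already invoked in the proof of Lemma~\ref{phaseII-expander}. As $\l_G < 3/5$, the conductance of $G$ is at least $1 - \l_G > 2/5$ (Jerrum--Sinclair), which gives $E(S,\bar{S}) \ge (1-\l_G)dS\bar{S}/n$ and hence $|E(S)| = \frac{1}{2}(dS - E(S,\bar{S})) \le (3/10)dS$ for every $S$ with $|S| \le cn$, where $c = 1 - (2/5)(1-\l_G)^{-1} > 0$. Because $\om = o(n)$, for all large enough $n$ we have $(13/3)\om \le cn$, so Lemma~\ref{d-constPhaseIII} applies and reduces $B$ from $\om$ to $0$ in $O(\om\log\om)$ steps with probability at least $1 - e^{-\Theta(\om)}$.

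For a random $d$-regular graph I would split on the degree. For $11 \le d < 600$ a random graph has $\l_G \le (2\sqrt{d-1}+\epsilon)/d < 3/5$ \whp, so this range is covered by the expander argument above. For $600 \le d \le n/K$, Lemma~\ref{span} (with $\a = 1/12$) says that with probability at least $1 - n^{-\d}$ no set of size at most $n/15$ spans $(1/12)d|S|$ or more edges; hence every such set has $|E(S)| < (1/12)dS < (3/10)dS$, and since $(13/3)\om = o(n) < n/15$ for large $n$, Lemma~\ref{d-constPhaseIII} again applies. Conditioning on the high-probability event that $G$ has this sparsity, the reduction succeeds with probability at least $1 - e^{-\Theta(\om)}$; folding in the graph-selection failure probability $n^{-\d} = o(1/n)$ gives the claimed $1 - e^{-\Theta(\om)}$, since $\om$ is a slowly growing threshold ($\om = o(\log n)$), and in any case one can state the bound as $1 - e^{-\Theta(\om)} - o(1/n)$ exactly as in Lemma~\ref{phaseII-randomgraph}.

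I do not expect a real obstacle: the substantive work lives in Lemmas~\ref{d-constPhaseIII} and~\ref{span} and in the conductance estimate inside Lemma~\ref{phaseII-expander}. The only points needing care are (i) checking that the Phase~III window $(13/3)\om$ stays below the sparsity thresholds $cn$ and $n/15$, which is immediate from $\om = o(n)$; and (ii) bookkeeping the two failure terms so that the $o(1/n)$ graph-structure term is absorbed into $e^{-\Theta(\om)}$ for the slowly growing $\om$ under consideration.
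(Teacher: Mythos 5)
Your proposal is correct and follows essentially the same route as the paper: verify the small-set sparsity hypothesis of Lemma~\ref{d-constPhaseIII} via the conductance bound (as in the proof of Lemma~\ref{phaseII-expander}) for expanders, and via Lemma~\ref{span} for $d \ge 600$ together with Friedman's eigenvalue bound for $11 \le d < 600$, for random regular graphs. Your explicit bookkeeping of the graph-selection failure probability and the check that $(13/3)\om = o(n)$ stays below the sparsity thresholds are slightly more careful than the paper's terse proof, but the substance is identical.
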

\begin{proof}
If $G$ is a $d$-regular expander with $\lambda_G \leq 3/5$, then the assumptions of Lemma
\ref{d-constPhaseIII} are fulfilled for $G$, as shown in the proof of Lemma~\ref{phaseII-expander}. 
If $G$ is a random $d$-regular graph with $d = \omega(1)$, then
the assumptions of Lemma \ref{d-constPhaseIII} are also fulfilled for $G$ according to Lemma \ref{span}.
If $d>10$ but $d= O(1)$, then $G$ has eigenvalue $\lambda_G < 3/5$, \whp~\cite{Fri03}.
\end{proof}

\section{Putting the phases together}

To conclude the proof of our main Theorems~\ref{Th1} and~\ref{Th1-expanders},
it remains to check how the three phases fit together.
For expanders (Theorem~\ref{Th1-expanders}),
first use Corollary~\ref{lemMethod2-expanders} with $c = 1/10$
to get constant $K = K(c)$ such that
if the initial imbalance of vote is $\nu_0 \ge K\l_G$,
then the minority vote reduces to $n/10$ within $O(\log (1/\nu_0))$ steps.
Then use Lemma~\ref{phaseII-expander}
with $\om = \log n / \log\log n$ and assume that $\l_G \le 1/6$
to show that the minority vote reduces from $n/10$ to $\om$ in $O(\log n)$ steps.
Finally, apply Corollary~\ref{nvkq72} with the same $\om$ to show 
that the minority vote decreases from $\om$ to $0$ in $O(\log n)$ steps.

For the random regular graphs,
Lemma~\ref{phaseII-randomgraph} 
gives the constant $c < 1/2$ which defines
the beginning of phase II.
Then Corollary~\ref{lemMethod2-randomgraphs} can be used to 
find the constant $K$ for Theorem~\ref{Th1}.
The transition from phase II to phase III is at the same $\om = \log n / \log\log n$ 
as before.

According to our analysis, we can also derive the following corollary.
\begin{corollary}
\label{robustness}
Assume an adversary can change the opinion of at most $f= o(\nu_0 n)$ vertices
during the execution of the algorithm. Then, under the assumptions of Theorems \ref{Th1}
and \ref{Th1-expanders}
all but $O(f)$ vertices will adopt
opinion $A$ within $O(\log n)$ steps, \whp
\end{corollary}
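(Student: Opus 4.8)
The plan is to re-run the three-phase analysis behind Theorems~\ref{Th1} and~\ref{Th1-expanders} essentially unchanged, exploiting the fact that the adversary only relabels opinions on the existing vertex set and never alters the graph itself. Consequently every \emph{structural} ingredient we used --- Lemma~\ref{lemMethod2-new}, the Expander Mixing Lemma~\ref{bcwjhcbw3}, Lemma~\ref{span}, and the small-set expansion bounds feeding Lemma~\ref{lemPhase2} and Corollary~\ref{bcwc891-cor} --- holds verbatim, and so do the single-step Chernoff estimates for $\D_{AB}$ and $\D_{BA}$. The only new effect is that, after the voting update in a step, the adversary may re-set up to $f$ further vertices to opinion $B$ (these may be the same $f$ ``persistently corrupted'' vertices each time). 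Relative to the uncorrupted process this adds an additive term of size at most $f$ to the minority count at the end of every step. It therefore suffices to check that, with this extra $+f$, each phase's recursion still drives $B$ down --- now to $O(f)$ instead of to $0$.

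In Phase~I the core inequality~\eqref{bchwc} becomes $A_{t+1} \ge A_t + A_tB_t(A_t-B_t)/n^2 - 3\eta_t n - f$, which after substituting $A_t = n(1+\nu_t)/2$, $B_t = n(1-\nu_t)/2$ turns~\eqref{ncnwd-Appx} into $\nu_{t+1} \ge \nu_t + \tfrac12\nu_t(1-\nu_t^2) - 12\a/\sqrt{1-\nu_t^2} - 2f/n$. Since $\nu_0 \ge K\a$ and $f = o(\nu_0 n)$, one shows by induction that $\nu_t$ stays at least $\nu_0$, whence $2f/n = o(\nu_t)$ throughout; the extra term is then swallowed by the constants, and $\nu_t$ grows geometrically to $1-2c$ within $O(\log(1/\nu_0) + \log(1/c))$ steps exactly as before (possibly after enlarging $K$). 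The failure probability is, as before, controlled by a union bound over the $O(\log n)$ steps.

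In Phase~II, Lemma~\ref{lemPhase2} now yields, in each step with probability $1 - e^{-\tilde{\g}B}$, the bound $B_{t+1} \le (1-\g)B_t + f$ in place of $B_{t+1} \le (1-\g)B_t$; iterating gives $B_t \le (1-\g)^t B_0 + f/\g$, so after $O(\log n)$ steps $B_t \le f/\g + o(1) = O(f)$. The small-set hypotheses of Corollary~\ref{bcwc891-cor} are invoked only for sets of size $O(B) = o(n)$, which remains legitimate, and the concentration estimate $e^{-\tilde{\g}B}$ stays exponentially small as long as $B$ is at least a constant times $\max\{\omega, f\}$, so the union bound over the steps succeeds with probability $1 - e^{-\Theta(\min\{\omega,f\})}$. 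If $f = o(\omega)$ one may, if desired, continue with Corollary~\ref{nvkq72} (Lemma~\ref{d-constPhaseIII}) to bring $B$ down to $O(f)$ while still below $\omega$; but for the stated conclusion it already suffices to stop Phase~II once $B = O(f)$.

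The step I expect to be the main obstacle is the bookkeeping in Phase~I: one has to verify that the perturbed recursion still gives $\nu_{t+1} \ge \tfrac54\nu_t$ in the range $\nu_t \le 1/2$ and the analogous contraction of $\d_t = 1-\nu_t$ in the range $1/2 \le \nu_t \le 1-2c$, using only $f = o(\nu_0 n)$ and $\nu_t \ge \nu_0 \ge K\a$, and --- the genuinely delicate point --- that $\nu_t$ indeed never drops below $\nu_0$ despite the $-2f/n$ correction, so that the induction closes. This is precisely where the hypothesis $f = o(\nu_0 n)$, rather than merely $f = o(n)$, is used. Everything else is a mechanical re-run of the existing proofs with the extra $+O(f)$ carried through the recursions.
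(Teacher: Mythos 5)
Your proposal is correct and follows essentially the same route as the paper, whose own justification is the one-line observation that the assumptions of Phases I, II and III remain valid for the pair $(A-f,\,B+f)$ as long as $B \ge C' f$ for a suitable constant $C'$. Your version just makes this explicit by carrying the additive $+f$ through the Phase~I recursion for $\nu_t$ and the Phase~II geometric decay (yielding $B_t \le (1-\g)^t B_0 + f/\g$), using $f = o(\nu_0 n)$ exactly where the paper implicitly needs it.
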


To obtain the statement, observe that the assumptions in
Phases I, II, and III are fulfilled w.r.t.~$A-f$ and $B+f$ as long as $B \geq C' \cdot f$,
where $C'$ is a suitable large constant.

\newpage

\end{document}